\documentclass[letterpaper,11pt]{article}

\usepackage{style}
\usepackage{shortcuts}

\begin{document}
	
\begin{center}
	\begin{minipage}[H]{14.5cm} 
		
		\begin{center}
		{\huge \bf Near-Optimal Directed\\[.5ex]Low-Diameter Decompositions}
			\end{center}
		\vspace{1cm}
		
		{\large \textbf{Karl Bringmann}\footnote{This work is part of the project TIPEA that has received funding from the European Research Council (ERC) under the European Unions Horizon 2020 research and innovation programme (grant agreement No.~850979).} -- Saarland University and Max Planck Institute for Informatics, Saarland Informatics Campus, Germany} \vspace{1mm}\\
		{\large \textbf{Nick Fischer\footnote{Partially funded by the Ministry of Education and Science of Bulgaria's support for INSAIT as part of the Bulgarian National Roadmap for Research Infrastructure.}} -- INSAIT, Sofia University "St. Kliment Ohridski", Bulgaria} \vspace{1mm}\\
		{\large \textbf{Bernhard Haeupler\footnote{Partially funded by the Ministry of Education and Science of Bulgaria's support for INSAIT as part of the Bulgarian National Roadmap for Research Infrastructure and through the European Research Council (ERC) under the European Union's Horizon 2020 research and innovation program (ERC grant agreement 949272).}} -- INSAIT, Sofia University "St. Kliment Ohridski", Bulgaria \& ETH Zürich, Switzerland} \vspace{1mm}\\
		{\large \textbf{Rustam Latypov\footnote{Supported by the Research Council of Finland (grant 334238), The Finnish Foundation for Technology Promotion and The Nokia Foundation}} -- Aalto University, Finland} \vspace{1mm}\\

		\begin{abstract}
			\noindent
			Low Diameter Decompositions (LDDs) are invaluable tools in the design of combinatorial graph algorithms. While  historically they have been applied mainly to undirected graphs, in the recent breakthrough for the negative-length Single Source Shortest Path problem, Bernstein, Nanongkai, and Wulff-Nilsen [FOCS '22] extended the use of LDDs to directed graphs for the first time. Specifically, their LDD deletes each edge with probability at most \makebox{$O(\frac{1}{D} \cdot \log^2 n)$}, while ensuring that each strongly connected component in the remaining graph has a (weak) diameter of at most $D$.
			
			In this work, we make further advancements in the study of directed LDDs. We reveal a natural and intuitive (in hindsight) connection to Expander Decompositions, and leveraging this connection along with additional techniques, we establish the existence of an LDD with an edge-cutting probability of \makebox{$O(\frac{1}{D} \cdot \log n \log\log n)$}. This improves the previous bound by nearly a logarithmic factor and closely approaches the lower bound of \makebox{$\Omega(\frac{1}{D} \cdot \log n)$}. With significantly more technical effort, we also develop two efficient algorithms for computing our LDDs: a deterministic algorithm that runs in time~\smash{$\widetilde\Order(m \poly(D))$} and a randomized algorithm that runs in near-linear time~\smash{$\widetilde\Order(m)$}.

			We believe that our work provides a solid conceptual and technical foundation for future research relying on directed LDDs, which will undoubtedly follow soon.
		\end{abstract}
		
		\end{minipage}
	
\end{center}

\thispagestyle{empty}

%\newpage
%\thispagestyle{empty}
%\tableofcontents

\newpage
\pagenumbering{arabic}

\section{Introduction} \label{sec:intro}
In the design of combinatorial graph algorithms, decomposing graphs into smaller regions where problems become naturally easy to solve has emerged as a remarkably powerful paradigm. Two such graph decompositions stand out as particularly successful: \emph{Low-Diameter Decompositions} (LDDs) and \emph{Expander Decompositions}~(EDs) (defined formally in \Cref{def:ldd} and \Cref{sec:prelims}, respectively). Both decompositions remove some edges from the graph so that the remaining components either have a low diameter and thus behave nicely with respect to distance-type problems (in the case of LDDs), or are expanders which are well-suited for flow-type problems (in the case of EDs). This paradigm is particularly appealing as it typically leads to natural and efficient implementations in \emph{parallel,} \emph{distributed} and \emph{dynamic} models, and often also to \emph{deterministic} algorithms.

The vast majority of algorithmic results following this broader framework have been established for \emph{undirected} graphs. However, a recently increasing number of papers have successfully applied this paradigm also to \emph{directed} graphs~\cite{BernsteinGS20,BernsteinNW22,HuaKGW23,ChuzhoyK24,BernsteinBST24}, despite the serious technical difficulties that typically arise compared to the undirected setting. One celebrated example is the breakthrough \emph{near-linear time} algorithm for negative-length Single-Source Shortest Paths by Bernstein, Nanongkai and Wulff-Nilsen~\cite{BernsteinNW22}. In fact, their work was the first to define and apply Low-Diameter Decompositions in directed graphs.

Given these rapid and recent developments, it is to be expected that many new results will follow in this line of research. Consequently, anticipating a wave of researchers in need of directed Low-Diameter Decompositions, we initiate the systematic study of directed LDDs. Our main technical contribution is a \emph{near-optimal} LDD---near-optimal with respect to the loss parameter and the near-linear running time. Beyond improving the state of the art, we also unveil a compelling and novel connection between directed Low-Diameter Decompositions and Expander Decompositions, bringing together two previously studied concepts that are independently well-established. We believe this connection to be the main take-away of our paper, and are confident that it will inspire further applications. 

\subsection{Low-Diameter Decompositions} \label{sec:intro:sec:ldd}
Low-Diameter Decompositions in undirected graphs have been introduced almost 40 years ago by Awerbuch~\cite{Awerbuch85}, and, have since evolved into an irreplaceable tool in the design of combinatorial algorithms~\cite{AwerbuchGLP89,AwerbuchP92,AwerbuchBCP92,LinialS93,Bartal96,BlellochGKMPT14,MillerPX13,PachockiRSTW18,ForsterG19,ChechikZ20,BernsteinGW20,ForsterGV21,BernsteinNW22}. More generally, LDDs have been utilized in the development of diverse graph-theoretic structures such as oblivious routings~\cite{ZuzicGYHS22}, hopsets, neighborhood covers, and notably \emph{probabilistic tree embeddings}~\cite{Bartal96,Bartal98,FakcharoenpholRT04} (where the rough goal is to approximate an arbitrary (graph) metric by a simpler tree metric with polylogarithmic stretch) and \emph{low-stretch spanning trees}~\cite{AlonKPW95,ElkinEST08,AbrahamBN08,KoutisMP11,AbrahamN19,AbrahamCEFN20}. These structures in turn have led to further applications in approximation algorithms, online algorithms and network design problems~\cite{Bartal96,BorodinY98,HaeuplerHZ21}.

While the precise definitions of LDDs differed at first, all of them had in common that the graph was decomposed by removing few edges (on average) so that the remaining connected components have bounded diameter. More precisely, the modern definition is that an LDD is a probability distribution over edge cuts that cut each individual edge with probability at most $L/D$ (for some small loss factor~$L$) so that the resulting connected components have diameter at most $D$. For undirected graphs, the best-possible loss factor turns out to be $L = \Theta(\log n)$ with natural matching upper and lower bounds (see e.g.~\cite{Bartal96,FakcharoenpholRT04} and \cref{foot:logn-lower-bound}).

Only very recently, low-diameter decompositions debuted in \emph{directed} graphs. They played a key role in the near-linear time algorithm for the Single-Source Shortest Paths problem in graphs with negative edge lengths by Bernstein, Nanongkai and Wulff-Nilsen~\cite{BernsteinNW22} (following the paradigm outlined before). Curiously, before their seminal work, low-diameter decompositions for directed graphs were mostly unexplored---to the extent that it was even unclear what the definition should be. An even more recent application of directed LDDs is in the context of restricted (i.e., bicriteria) shortest paths~\cite{AshvinkumarBK25}.

\begin{restatable}[Directed Low-Diameter Decomposition]{definition}{defLDD} \label{def:ldd} 
	A \emph{directed low-diameter decomposition} with \emph{loss $L$} for a directed edge-weighted graph $G = (V, E, \ell)$ and a parameter $D \geq 1$ is a probability distribution over edge sets $S \subseteq E$ that satisfies the following two properties:
	\begin{itemize}
		\item For any two nodes $u, v \in V$ that are part of the same strongly connected component in $G \setminus S$, we have $d_G(u, v) \leq D$ and $d_G(v, u) \leq D$.
		\item For all edges $e \in E$, it holds that $\Pr(e \in S) \leq \frac{\ell(e)}{D} \cdot L$.
	\end{itemize}
\end{restatable}

That is, for directed graphs we require that all remaining \emph{strongly} connected components have diameter $D$ in the sense that each pair of nodes is at distance at most $D$ in the original uncut graph (this property is also called a ``weak'' diameter guarantee; see more details in the paragraph towards the end of \cref{sec:intro:sec:ldd}).

In light of this definition, it is natural to study directed LDDs with respect to two objectives: Minimizing the loss factor $L$, and \emph{computing} the LDD efficiently. Both of these requirements typically translate immediately to algorithmic improvements (e.g., the loss factor $L$ typically becomes a factor in the running time of algorithmic applications).

\paragraph{Quest 1: Minimizing the Loss \boldmath$L$.}
It is easy to see that a loss factor of $L \geq 1$ is necessary: Consider a graph consisting of disjoint copies of $D + 1$-cycles (of unit length, say). Any LDD is forced to cut at least one edge from every cycle, and thus some edges are deleted with probability~$\frac{1}{D+1}$. In fact, from the same lower bound as for undirected graphs one can show that $L \geq \Omega(\log n)$ is necessary\footnote{Specifically, using e.g., the probabilistic method, one can construct an undirected $n$-node graph $G$ with the following two properties: (i) the graph contains $c n$ edges (for some constant $c > 1$), and (ii) the girth of $G$ (i.e., the length of the shortest cycle) is at least $g = \Omega(\log n)$. Such a graph constitutes an obstruction against undirected LDDs with parameter $D = \frac{g}{2} - 1$, say. Indeed, all connected components that remain after the LDD cuts edges cannot contain cycles (as any cycle contains two nodes at distance at least $\frac{g}{2} > D$). This means that all connected components are trees, and thus the remaining graph is a forest containing at most $n - 1$ edges. Therefore, the LDD has cut at least~\makebox{$c n - n = \Omega(n)$} edges. In particular, the per-edge cutting probability of some edges must be~\makebox{$\Omega(1) = \Omega(\log n / D)$}. The same argument applies for directed LDDs by taking the same undirected graph and making all edges bidirected.\label{foot:logn-lower-bound}} for some directed graphs.

Conversely, Bernstein, Nanongkai and Wulff-Nilsen~\cite{BernsteinNW22} showed that a loss of $\Order(\log^2 n)$ can be achieved. Their work leaves open whether this factor can be improved, possibly to $\Order(\log n)$ which would match the existing unconditional lower bound, or whether $\Omega(\log^2 n)$ is necessary.

\paragraph{Quest 2: Efficient Algorithms.}
To be useful in algorithmic applications, it is necessary to be able to compute the LDD (or, formally speaking, to be able to \emph{sample} from the LDD efficiently). This is a non-negligible problem as many graph decompositions are much simpler to prove existentially than constructively and efficiently---for instance, for Expander Decompositions there is a stark contrast between the extremely simple existential proof and the involved efficient algorithms based on cut-matching games~\cite{RackeST14,KhandekarRaoVazirani09}. The $\Order(\log^2 n)$-loss due to~\cite{BernsteinNW22} is of course efficient.

\paragraph{Side Quest: Weak versus Strong.}
Another dimension is the distinction between~``weak'' and ``strong'' diameter guarantees. Specifically, \cref{def:ldd} requires the weak guarantee by requiring that any two nodes in the same strongly connected component are at distance at most $D$ \emph{in the original graph $G$}. The strong version instead requires that each strongly connected component in~$G \setminus S$ has diameter at most~$D$ \emph{in the graph $G \setminus S$.} While strong LDDs give a theoretically more appealing guarantee, for most algorithmic applications it turns out that weak LDDs suffice. The LDD developed by Bernstein, Nanongkai and Wulff-Nilsen~\cite{BernsteinNW22} has a weak guarantee, but follow-up work~\cite{BringmannCF23} later extended their results to a strong LDD with cubic loss $\Order(\log^3 n)$.

In this paper, we will mostly ignore the distinction between weak and strong and follow \cref{def:ldd} as it is. We remark however that all of our existential results do in fact have the strong diameter guarantee.
\section{Results and Technical Overview} \label{sec:overview}
In a nutshell, our result is that we establish a directed LDD with \emph{near-optimal} loss $\Order(\log n \log\log n)$. This comes tantalizingly close to the unconditional lower bound of $\Omega(\log n)$. It resembles a similar milestone in the development of probabilistic tree embeddings~\cite{Bartal98}, and also the current state of the art for low-stretch spanning trees~\cite{AbrahamN19}. In fact, similar $\log\log n$ barriers show up also in different contexts for directed graphs~\cite{ChechikLRS20}. In a sense, all of these results with $\log\log n$ overhead (including ours) apply a careful recursive approach that can be traced back to work by Seymour~\cite{Seymour95} (though with strongly varying implementations depending on the setting). 

We state our result in three \cref{thm:main-existential,thm:main-det,thm:main-fast}, where the first theorem is purely existential, the latter two are algorithmic, and the last has \emph{near-linear} running time.

\subsection{Near-Optimal LDDs via Expander Decompositions} \label{sec:overview:sec:ldd-exp}
The main conceptual contribution of our paper is that Low-Diameter Decompositions are closely and in a very practical way related to Expander Decompositions. Based on this insight, we prove the following theorem.

\begin{restatable}{theorem}{thmMainExistential} \label{thm:main-existential}
	For every directed graph there exists a directed LDD with loss $O(\log n \log \log n)$.
\end{restatable}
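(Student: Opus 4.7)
The plan is to exploit the newly identified connection between directed LDDs and Expander Decompositions (EDs) via a Seymour-style recursive construction whose depth is $O(\log\log n)$, with each level contributing an additive $O(\log n / D)$ to the per-edge cutting probability.

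\textbf{Base decomposition via ED.} First I would build a one-shot LDD as follows: invoke an Expander Decomposition on $G$ at an appropriately tuned conductance parameter $\phi$. This removes a set of edges $S_0$ whose per-edge cutting probability is $O(\phi \cdot \log n \cdot \ell(e))$, and leaves behind a disjoint union of directed expander clusters. Expansion automatically implies weak diameter $O(\phi^{-1} \cdot \log n)$, so setting $\phi \cdot D = \Theta(\log n)$ makes this a legal LDD of loss $O(\log^2 n)$ in one shot, recovering the BNW bound by a different route. The point is that this decomposition is structurally much cleaner than the BNW ball-growing and, crucially, admits recursive refinement.

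\textbf{Seymour-style recursion.} To shave the extra logarithm, I would not apply this base LDD at the tightest conductance immediately. Instead, choose a relatively loose $\phi$ so that the ED cuts very few edges (loss $O(\log n / D)$ at this level), leaving behind expander clusters whose weak diameter is at most some $D' \gg D$. Then recurse inside each cluster with a target diameter that is a constant factor smaller. The key structural claim to establish is that one round of ED shrinks the relevant ``residual diameter parameter'' by a factor of $2$ (or some other constant), so after $O(\log\log n)$ recursive levels we reach the target $D$. By charging each edge to the unique level at which it is cut, the per-edge cutting probabilities add across levels rather than multiply, yielding total loss $O(\log n \log\log n)$. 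This doubly-logarithmic recursion depth mirrors Seymour's technique and explains the same $\log\log n$ overhead seen in Bartal's tree embeddings and the low-stretch spanning tree literature.

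\textbf{Main obstacle.} The principal technical difficulty is the geometric-shrinkage claim that drives the recursion. Generic Expander Decompositions only promise a small cut and decent conductance of the pieces; here we need a quantitative statement that applying an ED at one diameter scale enables a halving of the effective diameter scale at the next level without paying an extra logarithmic factor in the cut. I would address this by a potential argument that couples the volume-weighted ``excess diameter'' of each cluster to the edges cut by the ED itself, so that the potential drop at every level can be amortized against the per-level cut, rather than against the recursion depth. A secondary subtlety is the weak-diameter bookkeeping: distances at each recursive level must be measured in the original graph $G$, not in the residual subgraph, and this requires carefully linking in-balls and out-balls across recursive scopes so that strong connectivity in $G \setminus S$ still implies short $G$-distances. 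Once these two pieces are in place, assembling the final LDD and the union bound on the cutting probabilities is largely standard.
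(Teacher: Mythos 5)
Your high-level framing (ED connection, Seymour-style $\log\log n$ overhead) matches the paper's narrative, but the proposal as written has two genuine gaps. First, the claim that a single Expander Decomposition "removes a set of edges whose per-edge cutting probability is $O(\phi\log n\cdot\ell(e))$" is not a property of ED: an ED gives a \emph{deterministic, global} cut bound ($c(S)\le c(E)\cdot\phi\log c(E)$), not a per-edge probability. The paper bridges exactly this gap with a Multiplicative Weight Update reduction from LDDs to a deterministic cost-minimization problem (\cref{lem:mwu}); that step is absent from your plan, and without it the ``per-edge probability'' claims at every level are unjustified.

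Second, and more centrally, the geometric-shrinkage claim that drives your recursion does not follow from standard ED, and the proposed fix does not look like the right one. After one round of ED nothing forces the remaining clusters' volume (or effective diameter) to shrink at all: if the graph is already a $\phi$-expander the ED cuts nothing, and more generally all sparse cuts found may be very lopsided, leaving one cluster of essentially full volume. So you cannot in general expect the ``residual diameter parameter'' to halve per round; a naive recursion with constant $\phi$ across levels would just stall. You correctly identify this as the main obstacle, but the potential you sketch (volume-weighted \emph{excess diameter}) is not the quantity that makes the amortization go through. The paper's resolution is to introduce a refined \emph{lopsided sparsity} $\psi(U)=c(U,\overline U)/\bigl(\minvol(U)\log\tfrac{\vol(V)}{\minvol(U)}\bigr)$, prove a lopsided Expander Decomposition with the \emph{same} total-cut bound via a potential of $\sum_e c(e)\log\vol(C_e)$ (note: volume only, no diameter), and then show in a single shot that any $\psi$-lopsided expander has diameter $O(\psi^{-1}\log\log\vol(V)+\log\vol(V))$. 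The $\log\log$ overhead comes from a harmonic sum $\sum_i\lceil 1/(i\psi)\rceil$ over dyadic volume scales inside a ball-growing argument, not from a $\log\log n$-deep recursion over diameter scales. Your instinct that something Seymour-like is needed is right, but the concrete recursion you describe would not close the gap without essentially reinventing the lopsided-sparsity potential.
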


For the remainder of \cref{sec:overview:sec:ldd-exp} we will elaborate on the proof of \cref{thm:main-existential}. It involves two separate steps: reducing the problem to \emph{cost-minimizing} using the \emph{Multiplicative Weight Update} method, and then showing that a so called \emph{lopsided} expander decomposition is the desired \emph{cost-minimizer}. We emphasize that in \cref{sec:overview:sec:ldd-exp} we only focus on Quest 1 from the introduction, which is minimizing the loss $L$ of an LDD.

\paragraph{Reduction to Unit Lengths.}
Let us assume throughout that we only deal with unit-length graphs (i.e., where $\ell(e) = 1$ for all edges). This assumption is in fact without loss of generality, as we can simply replace each edge $e$ of length $\ell(e)$ by a path of $\ell(e)$ unit-length edges. This transformation may blow up the graph, but as we focus on the existential result first this shall not concern us.\footnote{We are assuming throughout that all edge weights are bounded by $\poly(n)$, therefore this transformation leads to a graph on at most $n_0 \leq \poly(n)$ nodes. In particular, to achieve a loss of $L = \Order(\log n \log \log n)$ on the original graph it suffices to achieve a loss of $\Order(\log n_0 \log\log n_0)$ on the transformed graph.} Whenever the LDD cuts at least one of the path edges in the transformed graph, we imagine that the entire edge in the original graph is cut. This way, when we cut each edge with probability at most $\frac{L}{D}$, in the original graph we cut each edge with probability at most $\frac{\ell(e) \cdot L}{D}$ (by a union bound) as required.

\paragraph{Reduction to Cost-Minimization.}
Perhaps it appears surprising that we claim a connection between LDDs---which are inherently \emph{probabilistic} objects---and EDs---which rather have a \emph{deterministic} flavor. As a first step of bringing these notions together consider the following lemma based on the well-studied Multiplicative Weight Update \cite{AroraHazanKale12} method.

\begin{restatable}[Multiplicative Weight Update]{lemma}{mwu} \label{lem:mwu}
Let $G = (V, E)$ be a directed graph, and suppose that for all cost functions $c : E \to [|V|^{10}]$ there is a set of cut edges $S \subseteq E$ satisfying the following two properties:
\begin{itemize}
	\item For any two nodes $u, v \in V$ that are part of the same strongly connected component in $G \setminus S$, we have $d_G(u, v) \leq D$ and $d_G(v, u) \leq D$.
	\item $c(S) \leq c(E) \cdot \frac{L}{D}$.
\end{itemize}
Then there exists a directed LDD for $G$ with loss $\Order(L)$.
\end{restatable}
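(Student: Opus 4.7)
The plan is to invoke the hypothesis repeatedly under carefully chosen cost functions, following the textbook Multiplicative Weight Update (MWU) recipe with the edges of $G$ as the experts. I initialize weights $w_e^{(1)} = 1$ for every edge and, in each round $t = 1, \ldots, T$, construct an integer cost function $c_t : E \to [|V|^{10}]$ from the real-valued weights $w^{(t)}$, apply the hypothesis to obtain a cut set $S_t$ satisfying the weak-diameter property and $c_t(S_t) \leq c_t(E) \cdot L/D$, and then multiplicatively update $w_e^{(t+1)} = (1+\eta) \cdot w_e^{(t)}$ for every $e \in S_t$ (leaving all other weights unchanged). The output LDD will be the uniform distribution over $\{S_1, \ldots, S_T\}$; the weak-diameter condition is inherited from each $S_t$ individually, so the only nontrivial task is bounding the per-edge cutting probability.

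For the rounding step I would set $c_t(e) = \lceil \beta_t \, w_e^{(t)} \rceil$ with $\beta_t = |V|^{10}/W_t$ and $W_t = \max_e w_e^{(t)}$, so that the costs lie in $[1, |V|^{10}]$ regardless of how spread out the weights are. The standard ceiling bounds $c_t(S_t) \geq \beta_t \, w^{(t)}(S_t)$ and $c_t(E) \leq \beta_t \, w^{(t)}(E) + |E|$, combined with $W_t \leq w^{(t)}(E)$ and $|E| \leq |V|^2 \ll |V|^{10}$, then translate the hypothesis into $w^{(t)}(S_t) \leq O(L/D) \cdot w^{(t)}(E)$, i.e., the additive slack from rounding is absorbed into the constant.

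The standard MWU analysis via the potential $\Phi_t = \sum_e w_e^{(t)}$ now takes over. The update rule gives $\Phi_{t+1} = \Phi_t + \eta \, w^{(t)}(S_t) \leq \Phi_t \cdot (1 + O(\eta L/D))$, hence $\Phi_{T+1} \leq |E| \cdot \exp(O(\eta L T/D))$. On the other hand, writing $k_e = |\{t : e \in S_t\}|$, we have $w_e^{(T+1)} = (1+\eta)^{k_e} \leq \Phi_{T+1}$; taking logs with $\log(1+\eta) = \Theta(\eta)$ for a fixed constant $\eta \in (0,1)$ yields $k_e = O(\log |E|/\eta) + O(L T/D)$. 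The per-round cutting probability is therefore $k_e / T = O(\log|E|/(\eta T)) + O(L/D)$, and choosing $T = \Theta(D \log|E|/L)$ (a positive integer; the precise value is irrelevant since the statement is purely existential) drives both terms to $O(L/D)$, giving an LDD of loss $O(L)$.

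The main obstacle is the mismatch between the real-valued MWU weights and the integer cost functions demanded by the hypothesis: naively, an adversarial sequence of rounds could push $\max_e w_e^{(t)} / \min_e w_e^{(t)}$ past $|V|^{10}$, at which point rescaling followed by rounding would flatten the small weights and break the analysis. The fix above sidesteps this by rounding only \emph{up}, so the per-edge ceiling error is at most $+1$, and by exploiting $|E| \ll |V|^{10}$ to absorb the aggregate slack into the constant. As a consequence, no periodic rescaling of weights or early termination is required, and the entire argument boils down to the clean potential-function calculation above.
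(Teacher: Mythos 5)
Your proof is correct and follows essentially the same route as the paper: run MWU with the hypothesis as the oracle, output the uniform distribution over the cut sets, and compare the growth of an individual edge's weight against the growth of the total weight. The only difference is that the paper simply doubles the cost of cut edges (so costs stay integral and never exceed $|V|^6 \le |V|^{10}$ automatically), which makes your rescaling-and-rounding step unnecessary, though your handling of it is sound.
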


Intuitively, the lemma states that in order to construct an LDD that cuts each edge with small probability $L / D$, it suffices to instead find a cut which \emph{globally} minimizes the total cost of the cut edges while ensuring that every remaining strongly connected component has diameter at most~$D$. This however comes with the price of introducing a \emph{cost function $c$} to the graph, and the goal becomes to cut edges which collectively have at most an $L / D$ fraction of the total cost. For the reader's convenience, we include a quick proof sketch of \Cref{lem:mwu}.

\begin{proof}[Proof Sketch.]
	Assign a unit cost to every edge in the graph, and consider the following iterative process. In each iteration we find a set of cut edges $S_i$ with the desired guarantees, and adjust the costs of the edges multiplicatively by doubling the cost of any every edge $e \in S_i$. Repeat this process for $R = 100\ln |E| \cdot D/L$ iterations. We claim that the uniform distribution over the cut sets~$S_i$ encountered throughout is an LDD as required. 

	Clearly the diameter condition holds for each set $S_i$, but it remains to bound the edge-cutting probability. Each iteration doubles the cost of every cut edge, and hence the total cost increases by at most $c(S_i) \leq c(E) \cdot \frac{L}{D}$, i.e., by a factor $1 + \frac{L}{D}$. After all $R$ repetitions the total cost is at most $|E| \cdot (1+L/D)^R < |E| \cdot |E|^{100} = |E|^{101}$. It follows that each edge is cut in at most $101 \log |E|$ iterations as otherwise its cost alone would already be more than $|E|^{101}$. Thus, the probability of cutting any fixed edge in a randomly sampled cut set $S_i$ is at most $101 \log |E| / R = \Order(L / D)$.
\end{proof}

Henceforth, we refer to the process of finding a cut with the properties stated in \Cref{lem:mwu} as a \emph{cost-minimizer}, which we aim to construct in the following paragraphs. That is, we now also consider a cost function $c$, and our goal is to cut edges $S$ of total cost~\smash{$c(S) \leq c(E) \cdot \frac{L}{D}$} (without worrying about a per-edge guarantee), while ensuring that the strongly connected components in~\makebox{$G \setminus S$} have diameter $\leq D$. We remark that this cost-minimization framework is quite standard.

\paragraph{Directed Expander Decomposition.}
In a leap forward, let us rename the \emph{costs $c$} to \emph{capacities~$c$}. Our idea is simple: We want to use an Expander Decomposition to remove edges of small total capacity so that all strongly connected components in the graph become expanders and thus, in particular, have small diameter.

To make this more precise, we first introduce some (standard) notation. A node set $U \subseteq V$ naturally induces a \emph{cut} (between $U$ and $\overline U = V \setminus U$). We write $c(U, \overline U)$ for the total capacity of edges crossing the cut from $U$ to $\overline U$, and define the \emph{volume} of $U$ as
\begin{equation*}
	\vol(U) = c(U, V) = \sum_{e \in E \cap (U \times V)} c(e),
\end{equation*}
and also write $\minvol(U) = \min\set{\vol(U), \vol(\overline U)}$. The \emph{sparsity} of the cut $U$ is defined by
\begin{equation*}
	\phi(U) = \frac{c(U, \overline U)}{\minvol(U)}.
\end{equation*}
We say that $U$ is \emph{$\phi$-sparse} if $\phi(U) \leq \phi$, and we call a graph without $\phi$-sparse cuts a \emph{$\phi$-expander.} The standard Expander Decomposition can be stated as follows.

\begin{lemma}[Directed Expander Decomposition] \label{lem:exp-decomp}
Let $\phi > 0$. For any directed graph $G = (V, E, c)$ there is an edge set $S \subseteq E$ of total capacity $c(S) \leq c(E) \cdot \phi \log c(E)$ such that every strongly connected component in the remaining graph $G \setminus S$ is a $\phi$-expander.
\end{lemma}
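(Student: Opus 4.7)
The plan is to mimic the classical recursive sparse-cut procedure, adapted in a straightforward way to the directed setting. Starting from $G$, while the current subgraph contains a $\phi$-sparse cut $U$ I pick one, add all the directed cut edges $E(U, \overline U)$ (going from $U$ to $\overline U$) to $S$, and recurse on the induced subgraphs $G[U]$ and $G[\overline U]$. Removing only the one-directional cut $E(U, \overline U)$ is enough to separate strongly connected components across the cut: after the removal, no directed path reaches from $U$ to $\overline U$, so no SCC of $G \setminus S$ can span both sides. The recursion terminates when a piece $P$ contains no $\phi$-sparse cut; I then argue that $G[P]$ must in fact be strongly connected (otherwise, picking $\overline U$ to be a source SCC of $G[P]$ gives a cut with $c(U,\overline U) = 0$, contradicting the absence of $\phi$-sparse cuts), so the leaf pieces coincide with the SCCs of $G \setminus S$ and each is a $\phi$-expander by construction.

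The core work is to bound $c(S)$ via the standard charging argument. At every recursion step with smaller side $U$, the removed capacity is $c(U, \overline U) \leq \phi \cdot \vol(U)$, and I distribute this cost over the outgoing edges of $U$ proportionally to capacity, so every edge $(u, w)$ with $u \in U$ receives a charge of at most $\phi \cdot c(u, w)$. The invariant I maintain is that whenever an edge $e = (u, w)$ is charged, the volume of the recursive piece containing its tail $u$ at least halves, because $u$ has just been assigned to the smaller side $U$, whose volume is at most half of the current piece's volume. Starting from volume at most $c(E)$, this can happen at most $\log c(E)$ times, so each edge accumulates total charge at most $\phi \cdot c(e) \cdot \log c(E)$. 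Summing over all edges gives $c(S) \leq \phi \cdot c(E) \cdot \log c(E)$.

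The one subtlety particular to the directed case is that $\vol$ only measures outgoing capacity, and the halving argument must refer to a consistent volume measure throughout the recursion. I would track the volume of each recursive piece with respect to the original graph $G$, and verify that the $G$-volume of a piece dominates the subgraph volume that certifies any sparse cut found inside it; since removing $E(U, \overline U)$ can only decrease outgoing capacities, this monotonicity is preserved and the halving step translates cleanly into the global charging. I expect this to be the only point requiring directed-specific bookkeeping; beyond it, the argument is a routine adaptation of the undirected expander decomposition proof.
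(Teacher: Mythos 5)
Your overall strategy---recursively extract $\phi$-sparse cuts, remove the forward cut edges, and bound the total cut capacity by charging to the smaller side whose volume halves---is the right one and mirrors the proof the paper gives for the lopsided variant (\cref{lem:lexp-decomp}); the paper states \cref{lem:exp-decomp} without a separate proof, so the natural argument is indeed the one you sketch, with the halving-based charging playing the role of the paper's potential argument ($c(e)\log\vol(C)$ per edge).

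There is, however, a genuine imprecision in your charging step that you should clean up. You pick an arbitrary $\phi$-sparse cut $U$ (so the guarantee is $c(U,\overline U)\le\phi\cdot\minvol(U)$, not $\phi\cdot\vol(U)$), always remove the edges $E(U,\overline U)$, and then write the charging as if $U$ were automatically the smaller side: you distribute $c(U,\overline U)$ over the outgoing edges of $U$ and invoke "the tail was assigned to the smaller side $U$." But the $\phi$-sparse set you found may well be the \emph{larger} side by volume. In that case the removed capacity is $c(U,\overline U)\le\phi\cdot\vol(\overline U)$, and the edges whose recursive piece halves are the outgoing edges of $\overline U$, not of $U$; charging to $U$'s outgoing edges does not support the "each edge is charged at most $\log c(E)$ times" claim, because the piece containing a tail in $U$ may retain almost all of the volume. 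The fix is small but necessary: distribute $c(U,\overline U)$ over the outgoing edges of whichever of $U,\overline U$ has the smaller volume (i.e.\ of the side realizing $\minvol(U)$), so each such edge receives a charge of at most $\phi\cdot c(e)$ and its tail's piece halves. With that correction, and keeping in mind that cut edges are removed and hence never charged again, the bound $c(S)\le\phi\cdot c(E)\log c(E)$ follows exactly as you describe. Your side observations---that removing only the one-directional cut suffices to separate SCCs, that a piece with no $\phi$-sparse cut must be strongly connected (via the source-SCC cut of zero capacity), and that volumes are monotone under edge removal so the halving translates to the original graph---are all correct and worth keeping.
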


Moreover, an important property of a $\phi$-expander decomposition is that the diameter of its strongly connected components depends on $\phi$ in the following manner. 

\begin{lemma} \label{lem:exp-diam}
Any $\phi$-expander has diameter $\Order(\phi^{-1} \log \vol(V))$.
\end{lemma}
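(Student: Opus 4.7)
The plan is to prove the diameter bound by a ball-growing argument, but---crucially---applied to \emph{backward} balls rather than forward balls, since the expander's sparsity is phrased in terms of out-volumes. To show $d_G(u,v) \leq O(\phi^{-1} \log \vol(V))$ for every pair $u,v$, I fix $v$ and argue that the backward ball $T_t = \{x \in V : d_G(x,v) \leq t\}$ equals $V$ for some $t = O(\phi^{-1} \log \vol(V))$; applying this for every choice of $v$ then yields the lemma.

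The key structural observation is that every edge $(x,y)$ in the cut $c(\overline{T_t}, T_t)$ has its \emph{tail} $x$ lying in the next shell $T_{t+1} \setminus T_t$: $y \in T_t$ means $d_G(y,v) \leq t$, and traversing $(x,y)$ yields $d_G(x,v) \leq t+1$. Consequently,
\begin{equation*}
  \vol(T_{t+1} \setminus T_t) \;\geq\; c(\overline{T_t}, T_t) \;\geq\; \phi \cdot \minvol(\overline{T_t}),
\end{equation*}
where the second inequality is the $\phi$-expander condition applied to the cut defined by $\overline{T_t}$. Unlike with forward balls, the cut capacity here corresponds naturally to the \emph{out}-volume of the newly added vertices, so the standard volume-doubling recurrence goes through.

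From here I would split into two phases. In the \emph{growth phase}, as long as $\vol(T_t) \leq \vol(V)/2$ so that $\minvol(\overline{T_t}) = \vol(T_t)$, the displayed inequality yields $\vol(T_{t+1}) \geq (1+\phi)\vol(T_t)$; starting from $\vol(T_0) \geq 1$ this reaches volume $\vol(V)/2$ within $O(\phi^{-1} \log \vol(V))$ steps. In the subsequent \emph{shrinkage phase}, $\minvol(\overline{T_t}) = \vol(\overline{T_t})$ and the same inequality rearranges to $\vol(\overline{T_{t+1}}) \leq (1-\phi) \vol(\overline{T_t})$, so after another $O(\phi^{-1} \log \vol(V))$ steps the complement's out-volume drops below $1$. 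Since a $\phi$-expander with $\phi > 0$ and finite diameter is strongly connected, every vertex has out-volume at least the minimum edge capacity, so $\vol(\overline{T_t}) < 1$ forces $\overline{T_t} = \emptyset$.

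The main obstacle---and the reason this is not a verbatim port of the standard undirected argument---is the directional asymmetry between the cut condition (which pairs directed edge counts with \emph{out}-volumes in the denominator) and the natural forward notion of distance. A naive forward ball $B_t^+(u)$ fails at precisely the inequality above: edges in its out-cut bound only the \emph{in}-volume of the frontier $B_{t+1}^+(u) \setminus B_t^+(u)$, not its out-volume, so the analogue of $\vol(T_{t+1} \setminus T_t) \geq c(\overline{T_t}, T_t)$ breaks down. Reorienting the argument to grow \emph{backward} from $v$ realigns the direction of the cut with the out-volume of the newly added vertices, resolving the asymmetry cleanly and at no cost.
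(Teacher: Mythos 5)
Your proof is correct, and while it is built on the same ball-growing engine that the paper uses, it closes the argument by a genuinely different route, so a comparison is warranted. The paper only sketches this lemma; its formal instantiation (\cref{lem:lopsided-expansion,lem:lexp-diam}, which specialize to the non-lopsided case) grows a \emph{forward} ball $B^+(v,R)$ and a \emph{backward} ball $B^-(u,R)$, each only up to a constant fraction of $\vol(V)$, and then argues that the two balls must share an edge, yielding $d(v,u)\leq 2R+1$; this avoids any analysis past the half-volume mark. You instead grow a single backward ball $T_t=B^-(v,t)$ all the way to $V$, paying for the second half with an explicit shrinkage-phase recurrence $\vol(\overline{T_{t+1}})\leq(1-\phi)\vol(\overline{T_t})$, and then quantify over $v$. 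Both finishes cost the same $\Order(\phi^{-1}\log\vol(V))$. The more substantive difference is your handling of the directional asymmetry: the paper's key step asserts $c(B^+(v,r),\overline{B^+(v,r)})=\vol(B^+(v,r+1))-\vol(B^+(v,r))$, but with $\vol$ defined via out-capacities this is not an identity --- the cut edges have their \emph{heads} in the shell, so they bound the shell's in-volume, not the out-volume that $\vol$ measures. Your reorientation to backward balls makes the needed inequality $\vol(T_{t+1}\setminus T_t)\geq c(\overline{T_t},T_t)$ immediate, since those cut edges have their \emph{tails} in the shell; this is a real (if easily repaired) improvement in rigor over the paper's sketch. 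Two cosmetic caveats: your base case $\vol(T_0)\geq 1$ and the termination condition $\vol(\overline{T_t})<1\Rightarrow\overline{T_t}=\emptyset$ implicitly assume capacities are at least $1$ (as the paper also does), and the argument as written assumes unit edge lengths, which is the regime in which the lemma is applied.
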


To understand the intuition behind \Cref{lem:exp-diam}, imagine that we grow a ball (i.e., a breadth-first search tree) around some node. With each step, the expansion property (related to the absence of $\phi$-sparse cuts) guarantees that we increase the explored capacity by a factor of $(1 + \phi)$; thus after $\Order(\phi \log \vol(V))$ steps we have explored essentially the entire graph.

Already at this point we have made significant progress and can recover the $\Order(\log^2 n)$-loss LDD: Apply the Expander Decomposition in \cref{lem:exp-decomp} with parameter $\phi = \log \vol(V) / D$. This removes edges with total capacity at most a $\Order(\log^2 \vol(V) / D)$-fraction of the total capacity. In the remaining graph each strongly connected component is a $\phi$-expander and thus, by \cref{lem:exp-diam}, has diameter at most $\Order(\phi^{-1} \log\vol(V)) = \Order(D)$ (by choosing the constants appropriately, this bound can be adjusted to~$\leq D$). Finally, \cref{lem:mwu} turns this into an LDD with loss $\Order(\log^2 \vol(V)) = \Order(\log^2 n)$.

Unfortunately, both log-factors in the Expander Decomposition (fraction of the cut capacity and diameter) are tight. Nevertheless, with some innovation we manage to bypass these bounds and improve the $\Order(\log^2 n)$ loss. To this end we propose a refined notion of expanders called \emph{lopsided expanders.}

\paragraph{Lopsided Expander Decomposition.}

The notion of $\phi$-sparsity defined above is oblivious to the ratio between $\vol(V)$ and $\minvol(U)$. For example, two cuts $U$ and $W$ can have the same sparsity, even though $\vol(U) \gg \vol(\overline U)$ while $\vol(W) \approx \vol(\overline W)$. It turns out that this leaves some unused potential, and that we should incentivize cutting cuts with large volume on both sides compared to more lopsided cuts with large volume only on one side.

Formally, we define \emph{$\psi$-lopsided sparsity} of a cut $U$ as 
\begin{align*}
	\psi(U) = \frac{c(U, \overline U)}{\minvol(U) \cdot \log \frac{\vol(V)}{\minvol(U)}},
\end{align*}
where we include the ratio~\smash{$\frac{\vol(V)}{\minvol(U)}$} in the denominator. Since $\vol(V)>\minvol(U)$, a cut can only have smaller lopsided sparsity than regular sparsity, so a graph with no sparse cuts may still have lopsided sparse cuts. A $\psi$-lopsided expander is defined as a graph with no $\psi$-lopsided sparse cuts, and can be thought of as a subclass of expanders which in addition to having no sparse cuts also has no cuts that are both sufficiently lopsided and sufficiently sparse.

The \emph{lopsided expander decomposition} is otherwise identical to the standard expander decomposition defined previously, except that every strongly connected component is required to be a \emph{$\psi$-lopsided} expander instead of a $\phi$-expander. Our Lopsided Expander Decomposition has the same global ``total capacity cut'' guarantee as the standard expander decomposition, as stated in the following lemma coupled with a proof sketch.

\begin{restatable}[Lopsided Expander Decomposition]{lemma}{lemLexpDecomp} \label{lem:lexp-decomp}
	Let $\psi > 0$. For any directed graph $G = (V, E, c)$ there is an edge set $S \subseteq E$ of total capacity $c(S) \leq c(E) \cdot \psi \log c(E)$ such that every strongly connected component in the remaining graph $G \setminus S$ is a $\psi$-lopsided expander.
\end{restatable}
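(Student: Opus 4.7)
I would prove \Cref{lem:lexp-decomp} by adapting the standard recursive expander-decomposition argument to the lopsided sparsity notion. Starting with $G$, at every step I examine the current subgraph: if each strongly connected component is already a $\psi$-lopsided expander we are done; otherwise some SCC contains a cut $(U,\overline U)$ with $\vol(U)=\minvol(U)$ and $c(U,\overline U)\le \psi\cdot\vol(U)\cdot\log(\vol(V)/\vol(U))$, and I add these edges to $S$ and recurse on the two induced subgraphs. A small observation that keeps the directed case clean is that deleting only the edges from $U$ to $\overline U$ already prevents any strongly connected component from straddling the cut, so the recursion is well-defined on the two sides separately and terminates since each recursive call strictly shrinks the universe.

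\textbf{Charging.} To bound $c(S)$ I distribute the cost $c(U,\overline U)$ of each cut proportionally to the edges with tail in $U$, so an edge $e$ with tail in $U$ receives a local charge of $\psi\cdot c(e)\cdot\log(\vol(V)/\vol(U))$. I then fix an edge $e$ and track the chain of recursive subgraphs $V=V_0\supseteq V_1\supseteq\dots$ containing its tail; letting $\mu_k$ be the volume of $V_k$ inside its own subgraph and $\nu_k$ the volume of $e$'s side at the $k$-th cut, the inclusion $V_{k+1}\subseteq V_k$ together with the fact that passing to an induced subgraph only drops capacity gives $\mu_{k+1}\le\nu_k$. Hence each per-step charge $\psi c(e)\log(\mu_k/\nu_k)$ is dominated by $\psi c(e)\log(\mu_k/\mu_{k+1})$, and summing telescopes to $\psi c(e)\log(\mu_0/\mu_\infty)\le\psi c(e)\log(\vol(V)/c(e))$, where the last bound uses that while $e$ is alive its capacity $c(e)$ always appears in the tail-volume and that the step in which $e$ is actually cut contributes $\psi c(e)\log(\mu_j/\nu_j)\le\psi c(e)\log(\mu_j/c(e))$, which collapses cleanly into the telescoping sum.

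\textbf{Global bound and obstacle.} Summing over all edges yields $c(S)\le\psi\sum_e c(e)\log(\vol(V)/c(e))$, and the standard entropy-type inequality $\sum_e(c(e)/\vol(V))\log(\vol(V)/c(e))\le\log|E|$ upgrades this to $\psi\cdot\vol(V)\cdot\log|E|\le\psi\cdot c(E)\cdot\log c(E)$, as claimed. The main obstacle I expect is verifying that the telescoping survives two directed complications: first, that the volume of a subset can strictly decrease when we pass to an induced subgraph because outgoing edges are removed, and second, that the extra $\log(\vol(V)/\vol(U))$ factor in the definition of lopsided sparsity really does align with the telescoping sum so that the overhead collapses into a single global $\log$ rather than accumulating one factor per recursion level. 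Both points are handled by the monotonicity $\mu_{k+1}\le\nu_k$ above, which is exactly what lets the per-step $\log$-factors fuse across the entire recursion into the final $\log c(E)$.
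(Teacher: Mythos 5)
Your proposal is correct and takes essentially the same route as the paper: the same greedy recursion on $\psi$-lopsided sparse cuts, with your per-edge telescoping charge via $\mu_{k+1}\le\nu_k$ being exactly the paper's potential invariant that each edge holds at least $c(e)\log\vol(C)$ for its current component $C$, so the $\log$-factors fuse across levels in the same way. The only cosmetic difference is your concluding entropy inequality, which needs the harmless (and implicitly assumed) normalization $c(E)\ge|E|$ to pass from $\log|E|$ to $\log c(E)$, whereas the paper starts directly from a total potential of $c(E)\log c(E)$.
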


\begin{proof}[Proof Sketch.]
	Consider the following algorithm: If there are no $\psi$-lopsided sparse cuts, then the graph is already a $\psi$-lopsided expander and we can stop. Otherwise, we cut a $\psi$-lopsided sparse cut (add the cut edges to $S$), and recurse on the remaining strongly connected components. It is clear that this eventually produces a $\psi$-lopsided expander decomposition. In order to prove that the cut edges have capacity at most $c(E) \cdot \psi \log c(E)$, we use a potential argument. We assign to each edge~$e$ a potential of $c(e) \log c(E)$. Throughout the procedure we maintain the invariant that each edge holds a potential of at least $c(e) \log \vol(C)$, where $C$ is the strongly connected component containing edge $e$. When cutting a $\psi$-lopsided cut $(U, \overline U)$ in a component $C$, an edge $e$ on the smaller side by volume (say $U$) suddenly needs to hold a potential of $c(e)\log c(U)$ instead of $c(e)\log c(C)$. In particular, the amount of excess potential we have freed is
	\begin{equation*}
		\sum_{e \in E \cap (U \times V)} c(e) \cdot (\log \vol(C) - \log \vol(U)) = \sum_{e \in E \cap (U \times V)} c(e) \cdot \log \frac{\vol(C)}{\vol(U)} = \vol(U) \cdot \log \frac{\vol(C)}{\vol(U)}.
	\end{equation*}
	Since the cut is $\psi$-lopsided sparse it follows that the total capacity of the cut edges is at most
	\begin{equation*}
		c(U, \overline U) \leq \psi \cdot \vol(U) \cdot \log \frac{\vol(C)}{\vol(U)}.
	\end{equation*}
	Thus, we can afford to donate to each cut edge $e \in S$ a potential of $c(e) / \psi$ while maintaining our invariant. Since the total potential in the graph is $c(E) \log c(E)$ and every cut edge $e \in S$ receives a potential of at least $c(e) / \psi$, we finally have that $c(E) \log c(E) \geq c(S) / \psi$ and the claim follows.
\end{proof}

The main motivation for defining $\psi$-lopsided sparsity and taking the ratio~\smash{$\frac{\vol(V)}{\minvol(U)}$} into account lies in the following lemma: Compared to standard expanders, lopsided expanders only suffer a loglog-factor in the diameter.

\begin{restatable}{lemma}{lemLexpDiam} \label{lem:lexp-diam}
	Any $\psi$-lopsided expander has diameter $\Order(\psi^{-1} \log \log \vol(V) + \log\vol(V))$.
\end{restatable}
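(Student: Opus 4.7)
The plan is to fix any two vertices $u, v \in V$ and bound $d_G(u, v)$ directly. I would simultaneously grow a BFS out-ball $O_0 \subseteq O_1 \subseteq \cdots$ rooted at $u$ and a BFS in-ball $I_0 \subseteq I_1 \subseteq \cdots$ into $v$, stopping once both satisfy $\vol(O_T), \vol(I_T) > \vol(V)/2$: by additivity of the volume measure this forces $O_T \cap I_T \neq \emptyset$, yielding a $u$-to-$v$ path of length at most $2T$. The central growth step uses the absence of $\psi$-lopsided-sparse cuts: whenever $\vol(O_i) \leq \vol(V)/2$,
\begin{equation*}
    c(O_i, \overline{O_i}) \;\geq\; \psi \cdot \vol(O_i) \cdot \log \frac{\vol(V)}{\vol(O_i)},
\end{equation*}
and a standard BFS-growth argument promotes this into the recurrence $\vol(O_{i+1}) \geq \vol(O_i) \cdot (1 + \psi \log(\vol(V)/\vol(O_i)))$.

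To analyze this recurrence I would substitute $k_i := \log(\vol(V)/\vol(O_i))$, obtaining $k_{i+1} \leq k_i - \log(1 + \psi k_i)$, and count the steps needed to drive $k$ from $k_0 = \log \vol(V)$ down to $1$. Split into two phases. In \emph{Phase 1} ($\psi k_i \geq 1$) the per-step decrement is at least $\log(\psi k_i)$; I would partition Phase 1 into geometric halvings of $k$ from $\log \vol(V)$ down to $1/\psi$, so that in the halving where $k \in [2^m/\psi, 2^{m+1}/\psi]$ the decrement is $\Theta(m)$ and the halving costs at most $O(2^m/(m \psi))$ steps. Summing this geometric-like series over $m = 0, 1, \ldots, \log(\psi \log \vol(V))$ gives a total Phase 1 cost of $O(\log \vol(V))$.

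In \emph{Phase 2} ($\psi k_i < 1$) the recurrence reads $k_{i+1} \leq k_i(1 - \Omega(\psi))$, so $k$ shrinks geometrically by a constant factor in every $O(1/\psi)$-step window and reaches $1$ after $O(\psi^{-1} \log k_{\star})$ steps, where $k_{\star} = \min(1/\psi, \log \vol(V))$ is the value of $k$ at entry to this phase. When $\psi \geq 1/\log \vol(V)$ one has $\log k_{\star} = \log(1/\psi) \leq \log \log \vol(V)$, and in the complementary regime Phase 1 is empty and the term $\psi^{-1} \log \log \vol(V)$ already exceeds $\log \vol(V)$; in either case Phase 2 contributes $O(\psi^{-1} \log \log \vol(V))$. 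Summing with Phase 1 yields the claimed bound.

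The main obstacle I anticipate is the BFS-growth inequality $\vol(O_{i+1}) - \vol(O_i) \geq c(O_i, \overline{O_i})$ itself: since $\vol(U) = c(U, V)$ records only out-capacities, the out-capacity of the newly added vertices is not immediately controlled by the cut capacity $c(O_i, \overline{O_i})$, which counts in-edges to those vertices. I would resolve this asymmetry using a standard directed-expander trick, either by invoking the symmetric expander guarantee $c(\overline{U}, U) \geq \psi \cdot \minvol(U) \cdot \log(\vol(V)/\minvol(U))$ to coordinate an out/in-BFS or by passing to a symmetrized notion of volume; the novel content of the lemma lies in the refined two-phase recurrence analysis above rather than in this book-keeping.
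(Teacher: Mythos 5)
Your approach mirrors the paper's: grow a ball, use the $\psi$-lopsided-expander condition to lower-bound the per-step volume growth, count the steps until the volume exceeds a constant fraction of $\vol(V)$, and conclude via an intersection argument with a ball grown from the other endpoint. The only real difference is the bookkeeping: the paper's auxiliary \cref{lem:lopsided-expansion} sets radius increments $\Delta_i = \lceil 1/(i\psi)\rceil$, shows inductively that $\vol(B^+(v,r_i)) \geq 2^{-i}\vol(V)$, and then bounds $\sum_i\Delta_i \leq \lceil\log\vol(V)\rceil + \psi^{-1}\sum_i 1/i$ by the harmonic series; your two-phase analysis of $k_{i+1} \leq k_i - \log(1+\psi k_i)$ is an equivalent, somewhat more elaborate estimate of the same quantity and reaches the same bound. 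One minor glitch in Phase~1: at the $m=0$ block the stated decrement $\log(\psi k)$ vanishes, so one should instead use $\log(1+\psi k) \geq 1$ there, which adds a harmless $1/\psi \leq \log\vol(V)$ whenever Phase~1 is nonempty.

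The obstacle you flag at the end is the one spot where the proposal is genuinely incomplete, and it merits the attention you give it. With the one-sided volume $\vol(U)=c(U,V)$, the step $\vol(B^+(v,r+1)) - \vol(B^+(v,r)) \geq c(B^+(v,r),\overline{B^+(v,r)})$ is not true in general: the left-hand side is the out-capacity of the new layer $B^+(v,r+1)\setminus B^+(v,r)$, while the right-hand side counts edges entering that layer from the ball, and the two are incomparable. (Notably, the paper's own proof of \cref{lem:lopsided-expansion} asserts this as an equality, so you have in fact spotted an imprecision there.) Of your two proposed remedies, the first --- invoking $c(\overline U, U) \geq \psi\,\minvol(U)\log\frac{\vol(V)}{\minvol(U)}$ --- does not by itself bound the layer's out-capacity, since edges from $\overline{B^+(v,r)}$ into $B^+(v,r)$ need not originate in the new layer. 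The second remedy is the right one, but it must actually be carried out: take $\vol(U):=c(U,V)+c(V,U)$, adapt $\minvol$, $\psi$-sparsity and the potential argument of \cref{lem:lexp-decomp} accordingly (all go through unchanged), and then the new layer contributes at least $c(V,\,B^+(v,r+1)\setminus B^+(v,r)) \geq c(B^+(v,r),\overline{B^+(v,r)})$ to the symmetrized volume, so your recurrence and the set-intersection step close cleanly. As written, this is deferred to ``a standard trick'' rather than proved, and that deferral is exactly where the proposal falls short of a complete argument.
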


\begin{proof}[Proof Sketch.]
	The proof is similar in spirit to the proof for standard expanders in \Cref{lem:exp-diam}, for which we restate the intuition for clarity. Imagine growing a ball (i.e., a breadth-first search tree) around some node. With each step, the expansion property (related to the absence of $\phi$-sparse cuts) guarantees that we increase the explored capacity by a factor of $(1 + \phi)$; thus after $\Order(\phi \log \vol(V))$ steps we have explored essentially the entire graph.
	
	The difficulty in adopting the same proof for $\psi$-lopsided expanders stems from the fact that $\psi$-lopsided sparsity of a cut $U \subseteq V$ depends on the volume of $U$, so the expansion of the ball around a node differs in every step, e.g., if the volume of the ball around some node is constant, after one ball-growing step it becomes $O(\psi \log \vol(V))$. We resolve this challenge by analyzing the number of steps needed for the volume of the ball to grow from $\vol(V)/2^{i+1}$ to $\vol(V)/2^{i}$. This turns out to be roughly $\lceil (\psi \cdot i)^{-1} \rceil$, implying that after
	\begin{equation*}
		\sum_{i=1}^{\log \vol(V)} \lceil (\psi \cdot i)^{-1} \rceil = \Order(\psi^{-1} \log\log \vol(V) + \log \vol(V)) 
	\end{equation*}
	steps we have explored essentially the entire graph.
\end{proof}

Putting these two lemmas for lopsided expanders together, we indeed obtain the low-loss LDD in \Cref{thm:main-existential}. Specifically, we apply the Lopsided Expander Decomposition from \cref{lem:lexp-decomp} with parameter $\psi = \log\log \vol(V) / D$. We thereby cut only a $\Order(\log \vol(V) \log\log \vol(V) / D)$-fraction of the total capacity, and end up with a graph in which every strongly connected component is a $\psi$-lopsided expander. By \cref{lem:lexp-diam} said components have diameter $\Order(\psi^{-1} \log\log \vol(V)) = \Order(D)$ (which, again, can be made $\leq D$ by adjusting constants). Plugging this procedure into \cref{lem:mwu} we conclude that there is an LDD with loss $\Order(\log \vol(V) \log\log \vol(V)) = \Order(\log n \log\log n)$, completing the proof sketch of \cref{thm:main-existential}.

\subsection{A Deterministic Algorithm} \label{sec:overviewDet}
So far we have neglected Quest~2, i.e., the design of efficient algorithms. But how far from algorithmic is this approach of \Cref{sec:overview:sec:ldd-exp} really? It turns out that implementing this framework with some simple tricks leads to the following algorithmic result.

\begin{restatable}{theorem}{thmMainDet} \label{thm:main-det}
	For every directed graph there exists a directed LDD with loss $\Order(\log n \log\log n)$ and support size $\Order(D \log n)$ that can be computed in time $\widetilde\Order(m \poly(D))$ by a \emph{deterministic} algorithm.
\end{restatable}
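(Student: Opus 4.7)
The plan is to turn the existential pipeline of \Cref{sec:overview:sec:ldd-exp} into a deterministic algorithm step by step. I first handle edge weights: every edge with $\ell(e) > D$ can be cut outright (its probability budget $\ell(e) L / D$ is already at least~$1$), and every remaining edge is subdivided into a path of $\ell(e) \le D$ unit-length edges, producing an equivalent unit-length graph of size $\Order(mD)$. From here on I work with unit lengths.

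On the unit-length graph I implement the Multiplicative Weight Update loop of \Cref{lem:mwu} directly: initialize $c(e) = 1$, and for $R = \Order(D\log n / L)$ iterations with $L = \Theta(\log n \log\log n)$, call a cost-minimizing subroutine to obtain a cut set~$S_i$, append it to the support, and double $c(e)$ for every $e \in S_i$. The output distribution is uniform over $S_1,\dots,S_R$ and its support size is at most $R \le \Order(D \log n)$. The cost-minimizer is in turn a Lopsided Expander Decomposition with parameter $\psi = \Theta(\log\log n / D)$, implemented recursively as in the sketch of \Cref{lem:lexp-decomp}: in each strongly connected component, repeatedly find a $\psi$-lopsided sparse cut, add its edges to $S_i$, and recurse on the new strongly connected components.

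The key deterministic primitive is Find-Sparse-Cut: given a strongly connected component, either certify that it is a $\psi$-lopsided expander or output a $\psi$-lopsided sparse cut. I would realize this by directed ball-growing, exploiting that the proof of \Cref{lem:lexp-diam} is already constructive: in a $\psi$-lopsided expander, both the forward and the backward unit-distance balls around every node cover the whole component within $\Order(D)$ hops. Contrapositively, running BFS to depth $\Order(D)$ in both directions from an arbitrary node and scanning the encountered layers must expose a radius whose ball boundary violates the $\psi$-lopsided threshold, and that layer directly yields the desired cut.

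For the running time, one BFS in a component with $\hat m$ edges costs $\Order(\hat m \cdot D)$, the standard expander-decomposition potential charging gives $\Order(\log n)$ recursion levels, so one full Lopsided Expander Decomposition costs $\widetilde\Order(mD)$; multiplying by $R$ MWU iterations and absorbing the $\Order(D)$ blowup from the unit-length reduction yields the claimed $\widetilde\Order(m \cdot \poly(D))$ bound. I expect the main technical obstacle to be the deterministic extraction of a genuinely \emph{lopsided} sparse cut: because the sparsity threshold contains a factor $\log(\vol(V)/\minvol(U))$ that itself depends on $\vol(B(v,r))$, the radius sweep must be organized on a doubling schedule of volumes (mirroring the telescoping in the proof of \Cref{lem:lexp-diam}) and coordinated across the forward and backward balls so that the extracted cut is sparse in the weak-diameter sense required by \Cref{def:ldd}. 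A secondary but more routine hurdle is the careful amortization of BFS work across recursion so that the $\widetilde\Order(mD)$ per-ED bound holds rather than degrading to $\widetilde\Order(mnD)$.
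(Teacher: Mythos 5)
Your pipeline matches the paper's strategy up through MWU and the unit-length reduction, but your ``Find-Sparse-Cut'' primitive is where the argument breaks. You invoke the contrapositive of \Cref{lem:lexp-diam} to claim that if a component is not a $\psi$-lopsided expander, then BFS ball-growing from an \emph{arbitrary} node will expose a sparse cut within radius $\Order(D)$. That contrapositive is not valid: the constructive content of \Cref{lem:lopsided-expansion} is only that the ball from $v$ either hits a sparse cut \emph{or} captures half the volume. If it captures half the volume, nothing is certified — the component can still fail to be a lopsided expander (e.g.\ a small sparse pocket hanging off a large expander, with $v$ chosen deep inside the expander part), and it can even have diameter $\gg D$. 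So your recursion has no sound way to detect ``I am done'' in the case where ball-growing from your chosen starting node stalls, and more importantly it cannot deterministically certify lopsided-expander-ness by BFS alone (which is why expander decompositions normally need cut-matching games).

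The paper's fix is the crucial idea you are missing: it never tries to build a full lopsided expander decomposition, only a \emph{truncated} one, exploiting that you only ever need to find a sparse cut in a component of diameter $> D$. \Cref{lem:sparse-cut-det} produces either a sparse out-ball cut, a sparse in-ball cut, or the certificate that $c(B^+(v,D)\cap B^-(v,D))\geq 0.9\,\vol(V)$. When the third case happens, the algorithm remembers $v$ as a center and enters Phase~(II): it builds the buffer sets $X,Y$ around the center, and thereafter only grows balls from nodes outside $Y$. The nontrivial \Cref{lem:ldd-det-well-defined} then shows that case~(iii) can never recur in Phase~(II) — because a second center would have to be within distance $\Order(D)$ of the first — so every subsequent ball-growth provably finds a cut, terminating when all remaining nodes are inside $Y$, which then has diameter $\leq D$. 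This two-phase structure is also exactly what rescues the running time: your ``secondary but more routine hurdle'' is in fact the same obstruction, since without Phase~(II) you could repeatedly spend $\Order(m)$ on ball-growths that cover most of the graph without removing anything, and no straightforward amortization bounds that.
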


This theorem comes with a strength and a weakness---the strength is that the theorem is deterministic. Note that the algorithm produces the \emph{explicit} support of an LDD; in fact, the probability distribution is simply a uniform distribution over a given list of cut sets $S$. For this reason, our algorithm might lead to some derandomized applications down the road by executing an LDD-based algorithm one-by-one for all $\Order(D \log n)$ cut sets. Derandomizations of this sort are common and sought after, especially in distributed and parallel domains. The weakness is that the running time has an overhead of $\poly(D)$. We remark that almost all algorithmic applications of LDDs (with the notable exception of~\cite{BernsteinNW22}) set $D = \polylog(n)$ or $D = n^{\order(1)}$, in which case the overhead is not dramatic, and the runtime is possibly near-linear.

\paragraph{Proof Idea.}
It can be quickly verified that implementing the Multiplicative Weights Update method from \Cref{lem:mwu} is not a problem algorithmically, and requires $O(R \cdot T)$ time, where $R=O(\log |E| \cdot D/L)$ and $T$ is the time required by the cost-minimizer. Hence, only computing the Lopsided Expander Decomposition could be costly. Since this is a strictly stronger definition that the standard Expander Decomposition, at first glance it appears hopeless that we could achieve a simple algorithm that avoids the cut-matching games machinery. Luckily, we find yet another simple insight that allows us to find a simple and intuitive algorithm after all: While it may generally be hard to find a sparse cut (even NP-hard!), in our case we only ever need to find a sparse cut \emph{in a graph with diameter more than $D$.} Indeed, if the graph has already diameter at most $D$, we can stop immediately (recall that we ultimately only care about the diameter and not the expander guarantee). One way to view the algorithm is that we construct a \emph{truncated} $\psi$-lopsided expander decomposition, where we terminate prematurely if the diameter is small.

Fueled by this insight, consider the following two-phase process for computing a (truncated) $\psi$-lopsided expander decomposition:
\begin{itemize}
    \setlength\parindent{1.6em}
    \setlength\parskip{0pt}
 	\item \emph{Phase (I):} We repeatedly take an arbitrary node $v$ and grow a ball around $v$ in the hope of finding a $\psi$-lopsided sparse cut. By choosing $\psi = \Theta(\log\log \vol(V) / D)$ we can guarantee two possible outcomes: 
 	\begin{enumerate}
 		\item[(a)] We find a sparse cut after, say, $\frac{D}{4}$ steps, in which case we cut the edges crossing the cut and recur on both sides (as in the Lopsided Expander Decomposition).
 		\item[(b)] The problematic case happens: the volume of the ball reaches $\frac{3}{4} \cdot \vol(V)$. In this case we have spent linear time but made no progress in splitting off the graph, so we remember $v$ and move to Phase (II).
 	\end{enumerate}
	
 	\item \emph{Phase (II):} We compute a \emph{buffer zone} around $v$, which consists of all nodes with distance $\frac{D}{2}$ to and from $v$. We repeatedly take an arbitrary node $u$ from outside this buffer zone and grow a ball around $u$ in the hope of finding a $\psi$-lopsided sparse cut. Because we know that the $\frac{D}{4}$-radius ball around $v$ contains most of the volume of the graph, and node $u$ is picked outside of the $\frac{D}{2}$-radius buffer zone, we can prove (using similar techniques to the proof sketch of \Cref{lem:lexp-diam}) that we will always find a $\psi$-lopsided sparse cut $U$ around $u$. Upon finding it, we cut the edges crossing the cut $(U, \overline U)$ and start Phase (I) for the graph induced by nodes in $U$, while continuing Phase (II) in graph $G \setminus U$.
 	
 	When eventually there are no nodes left outside the buffer zone, we stop, since the remaining graph has diameter $\leq D$.
\end{itemize}

The correctness of the procedure above is straightforward to show: we are essentially constructing a $\psi$-lopsided expander decomposition from \Cref{lem:lexp-decomp}, but terminating prematurely if the diameter is small. Hence, the total capacity of the cut edges is at most the total capacity of the cut edges in a ``complete'' $\psi$-lopsided expander, which is a $\Order(\log \vol(V) \log\log \vol(V) / D)$-fraction of the total capacity. The diameter guarantee follows directly from the stopping condition. By plugging this procedure into the algorithmic version of \cref{lem:mwu} we arrive at an LDD with loss $\Order(\log \vol(V) \log\log \vol(V)) = \Order(\log n \log\log n)$ in time $\widetilde\Order(m \poly(D))$, completing the proof sketch of \cref{thm:main-det}.

\subsection{A Near-Optimal Randomized Algorithm} \label{sec:overviewFast}

Our third and final contribution is achieving a near-linear running time, regardless of the magnitude of the diameter, by a randomized algorithm:

\begin{restatable}{theorem}{thmMainFast} \label{thm:main-fast}
	For every directed graph there exists a directed LDD with loss $\Order(\log n \log\log n)$ which can be computed (i.e., sampled from) in expected time $\widetilde\Order(m)$.
\end{restatable}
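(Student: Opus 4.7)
}
The $\poly(D)$ overhead in \Cref{thm:main-det} stems from two sources: the $\widetilde\Theta(D)$ iterations of the Multiplicative Weights Update reduction in \Cref{lem:mwu}, and the fact that the deterministic cost-minimizer may grow balls to depth up to $D$. To achieve runtime independent of $D$, my plan is to bypass MWU altogether and instead directly sample a single LDD whose per-edge cutting probabilities satisfy the $\Order(\log n \log\log n / D)$ bound in one shot.

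The approach is to replace the deterministic cost-minimizer with a \emph{randomized} variant of the truncated lopsided expander decomposition from \Cref{sec:overviewDet}, in the spirit of the Miller--Peng--Xu construction. Concretely, rather than grow a deterministic ball (which may last $\Order(D)$ steps), each prospective center $v$ would draw a random radius $r_v$ from a truncated exponential distribution with rate $\lambda \asymp (\log \log n)/D$, and clusters would be formed by assigning each node to the center minimizing $d(\cdot, v) - r_v$. To handle the directed setting, I would adopt the Bernstein--Nanongkai--Wulff-Nilsen trick of running the assignment independently in the forward and backward directions and cutting any edge whose endpoints disagree on either assignment, which is what guarantees weak diameter $\leq D$ inside surviving strongly connected components. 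The per-edge cutting probability bound is then supplied by the memorylessness of the exponential.

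To preserve the $\log\log n$ saving of lopsided expanders, I would use a \emph{multi-scale} version of this random-radius scheme: iterate over the $\Order(\log n)$ volume scales and, at scale $i$, run the randomized procedure with rate proportional to $\psi_i \asymp 1/(D \cdot i)$, mirroring \Cref{lem:lexp-diam}. The invariant to maintain is that after scale $i$ either a cluster has been split into sub-clusters of volume at most $\vol(V)/2^i$ or it already has diameter at most $D$. Summing per-edge cutting probabilities across scales telescopes to $\Order(\log n \log\log n / D)$, matching the existential bound of \Cref{thm:main-existential}. For the weighted setting, I would avoid the $\poly(n)$ blow-up of unit-length subdivision by running Dijkstra-like priority-queue operations directly on the weighted edges, treating the random offset as a continuous ``event time'' at each node.

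The hard part, and the main obstacle, is combining the tight per-edge probability bound with a $\widetilde\Order(m)$ runtime that is truly independent of $D$. In undirected MPX each edge is touched only by centers within distance $\Order(1/\lambda)$, but in our directed, multi-scale setting a single edge can in principle be probed at every scale and, if ball growths are not aggressively truncated, contribute $\Order(D)$ work. Overcoming this requires an amortization analogous to the phase~(I)/phase~(II) split of \Cref{sec:overviewDet}, but now executed with random radii: at every scale, the work spent must be charged against volume that is genuinely ``split off'' so that the total expected work telescopes to $\widetilde\Order(m)$. Simultaneously, enough independence must be preserved across scales to keep the per-edge probability bound tight and to let the memorylessness argument go through unchanged. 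Implementing this amortization carefully, together with a Dijkstra-based ball-growing routine on the weighted graph, is what I expect to be the technical heart of \Cref{thm:main-fast}.
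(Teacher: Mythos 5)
Your proposal takes a genuinely different route from the paper and, more importantly, has two concrete gaps that prevent it from establishing the theorem.

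\paragraph{The multi-scale parameterization does not match the target bound.}
You propose iterating over $\Order(\log n)$ volume scales with a sampling rate $\psi_i \asymp 1/(D\cdot i)$ at scale $i$. This appears to invert the scaling in \Cref{lem:lexp-diam}: there the number of steps needed to grow from volume $\vol(V)/2^{i+1}$ to $\vol(V)/2^i$ is $\Delta_i \approx 1/(i\psi) \approx D/(i\log\log n)$, so the corresponding geometric \emph{rate} at scale $i$ should be $\approx i\log\log n /D$, growing with $i$ rather than decaying. With your rate $1/(Di)$ the expected sampled radius at scale $i$ is $D\cdot i$, which already overshoots $D$ for $i\geq 2$ and cannot yield the diameter guarantee. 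The paper also does not use $\Order(\log n)$ scales; it uses $L=\Order(\log\log n)$ \emph{levels} indexed by doubly-exponential ball-size thresholds $s_\ell=2^{2^{L-\ell}}$, precisely to balance a higher cutting rate against a smaller recursion depth when the inside of the cut is tiny. That balance is what gives $\Order(\log n\log\log n)$, and it does not fall out of the $1/i$-rate scheme you describe.

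\paragraph{The near-linear running time has no mechanism.}
You correctly identify the runtime as the crux, but the ideas you gesture at (MPX-style simultaneous assignment with random offsets, an amortization analogous to the Phase~(I)/(II) split of \Cref{sec:overviewDet}) are not the ones that make this work, and it is unclear they can. The Phase~(I)/(II) split is exactly what incurs the $\poly(D)$ overhead in \Cref{thm:main-det}, and MPX-style assignment alone does not help in a directed, recursive, multi-level setting where the ``light/heavy'' classification of a node changes as the graph shrinks. The paper's actual solution introduces three ingredients that are absent from your proposal: (1) Cohen's size-estimation algorithm (\Cref{lem:cohen}) to classify, in $\widetilde\Order(m)$ time, which nodes have small out-balls at the current scale, instead of growing balls to depth $D$; (2) sampling a \emph{random} node to cut around, exploiting the monotonicity that a node's level can only increase, so that with probability $\frac12$ the cut removes a genuine $\approx m/s_\ell$ fraction of edges and the stale classifications are purged at a geometric rate; and (3) a fail-and-restart mechanism whose contribution to both probability and runtime is bounded by choosing $\delta=1/\polylog(n)$ in the geometric tail. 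Without some equivalent of (1) and (2), the amortization you anticipate in the last paragraph does not close: a single edge can be touched $\Omega(D)$ times across scales, and you have no way to detect cheaply whether a center is still ``light'' once its neighborhood has been carved. These gaps are not cosmetic; they are the content of the paper's \Cref{lem:cut-light,lem:ldd-fast-prob,lem:ldd-fast-time}.
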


In contrast to \cref{thm:main-existential,thm:main-det}, our approach to \cref{thm:main-fast} is rather technical. We try more directly to extend the ideas of Bernstein, Nanongkai and Wulff-Nilsen~\cite{BernsteinNW22} (which in turn borrow ideas from~\cite{BernsteinGW20}). On a very high level, their algorithm classifies nodes $v$ \emph{heavy} if it reaches more than $\frac{n}{2}$ nodes within distance $\frac{D}{2}$, or \emph{light} otherwise. We can afford to cut around light nodes $v$ (with a specifically sampled radius $r \leq D$), and recur on both sides of the cut---since $v$ is light, the inside of the cut reduces by a constant fraction which is helpful in bounding the recursion depth. And if there are only heavy nodes in the graph then we can stop as the diameter is already at most $D$---indeed, the radius-$\frac{D}{2}$ balls around heavy nodes must necessarily intersect. The radius is sampled from a geometric distribution with rate $\Order(\log n / D)$ inspired by classical LDDs in undirected graphs~\cite{Bartal96}. The two logarithmic factors stem from (i) this $\log n$ overhead in the sampling rate, and (ii) the logarithmic recursion depth.

Our approach refines these ideas by classifying not into two classes---heavy or light---but into $\log\log n$ different levels. We essentially say that a node $v$ is at level $\ell$ if it reaches roughly $n / 2^{2^\ell}$ nodes within some distance roughly $D$. We can take advantage of this in two ways: At the one extreme, we can only cut around few nodes of small level $\ell$. Indeed, when we cut around a node at level $\ell$ we remove roughly $n / 2^{2^\ell}$ nodes from the graph, so this can be repeated at most $2^{2^{\ell}} \ll n$ times. For these levels we can afford to sample the radius in the geometric distribution with a significantly smaller rate, $\Order(2^i / D)$, thereby improving~(i). At the other extreme, whenever we cut around nodes with large level $\ell$ then in the recursive call of the inside of the cut there are only~$n / 2^{2^\ell} \ll n$ nodes. This effectively reduces the recursion depth of these levels to much less than $\log n$, improving (ii). In our algorithm, we find a balance between these two extreme cases. 

Unfortunately, while this idea works out existentially, there are several issues when attempting to implement this approach in near-linear time. The first issue---initially classifying which nodes are at what level in time $\widetilde\Order(m)$---can be solved by an algorithm due to Cohen~\cite{Cohen97}. However, over the course of the algorithm when we repeatedly remove nodes and edges from the graphs the level of a node might \emph{change}. Our solution involves a cute trick: First we prove that during the execution of our algorithm the level of a node can only \emph{increase} (and never decrease). Second, instead of cutting around an arbitrary node $v$, we always pick a \emph{random} node. The argument is as follows: If the classification is still correct for at least half of the nodes, then in each step we make progress with probability $\frac12$. And otherwise we can in fact afford to reclassify by Cohen's algorithm as the level of the nodes can only increase a small number of times, leading to only few repetitions of Cohen's algorithm in total.

We omit further details here and refer to the technical \cref{sec:ldd-fast}.
\section{Preliminaries} \label{sec:prelims}
Before diving into the technical results, we state the basic graph notations used throughout the paper and recap the new non-standard definitions we have introduced throughout \Cref{sec:overview}.

\paragraph{Graphs.}
Throughout we consider directed simple graphs $G = (V, E)$, where $E \subseteq V^2$, with $n = |V|$ nodes and $m = |E|$ edges. The edges of the graph can be associated with some value: a length $\ell(e)$ or a capacity/cost $c(e)$, all of which we require to be positive. For any $U \subseteq V$, we write $\overline U = V \setminus U$. Let $G[U]$ be the subgraph induced by $U$. We denote with $\delta^{+}(U)$ the set of edges that have their starting point in $U$ and endpoint in~$\overline U$. We define $\delta^{-}(U)$ symmetrically. We also sometimes write $c(S) = \sum_{e \in S} c(e)$ (for a set of edges $S$) or $c(U, W) = \sum_{e \in E \cap (U \times W)} c(e)$ and $c(U) = c(U, U)$ (for sets of nodes $U, W$).

The distance between two nodes $v$ and $u$ is written $d_G(v,u)$ (throughout we consider only the \emph{length} functions to be relevant for distances). We may omit the subscript if it is clear from the context. The diameter of the graph is the maximum distance between any pair of nodes. For a subgraph $G'$ of $G$ we occasionally say that~$G'$ has \emph{weak diameter} $D$ if for all pairs of nodes $u, v$ in~$G'$, we have $d_G(u, v), d_G(v, u) \leq D$. A strongly connected component in a directed graph $G$ is a subgraph where for every pair of nodes $v,u$ there is a path from $v$ to $u$ and vise versa. Finally, for a radius $r \geq 0$ we write $B^+(v, r) = \set{x \in V : d_G(v, x) \leq r}$ and $B^-(v, r) = \set{y \in V : d_G(y, v) \leq r}$.

\paragraph{Polynomial Bounds.}
For graphs with edge lengths (or capacities), we assume that they are positive and the maximum edge length is bounded by $\poly(n)$. This is only for the sake of simplicity in \cref{sec:ldd-expander,sec:ldd-deterministic} (where in the more general case that all edge lengths are bounded by some threshold $W$ some logarithmic factors in $n$ become $\log (nW)$ instead), and is not necessary for our strongest LDD developed in \cref{sec:ldd-fast}.

\paragraph{Expander Graphs.}
Let $G = (V, E, \ell, c)$ be a directed graph with positive edge capacities $c$ and positive unit edge lengths $\ell$. We define the \emph{volume $\vol(U)$} by
\begin{equation*}
	\vol(U) = c(U, V) = \sum_{e \in E \cap (U \times V)} c(e),
\end{equation*}
and set $\minvol(U) = \min\set{\vol(U), \vol(\overline U)}$ where $\overline U = V \setminus U$. A node set $U$ naturally corresponds to a cut $(U, \overline U)$. The \emph{sparsity} (or \emph{conductance}) of $U$ is defined by
\begin{equation*}
	\phi(U) = \frac{c(U, \overline U)}{\minvol(U)}.
\end{equation*}
In the special cases that $U = \emptyset$ we set $\phi(U) = 1$ and in the special case that $U \neq \emptyset$ but $\vol(U) = 0$, we set $\phi(U) = 0$.
We say that $U$ is \emph{$\phi$-sparse} if $\phi(U) \leq \phi$. We say that a directed graph is a $\phi$-expander if it does not contain a $\phi$-sparse cut $U \subseteq V$. 
We define the \emph{lopsided sparsity} of $U$ as
\begin{equation*}
	\psi(U) = \frac{c(U, \overline U)}{\minvol(U) \cdot \log \frac{\vol(V)}{\minvol(U)}},
\end{equation*}
(with similar special cases), and we similarly say that $U$ is \emph{$\psi$-lopsided sparse} if $\psi(U) \leq \psi$. Finally, we call a graph a \emph{$\psi$-lopsided expander} if it does not contain a $\psi$-lopsided sparse cut $U \subseteq V$.

\section{Near-Optimal LDDs via Expander Decompositions} \label{sec:ldd-expander}

In this section, we show the existence of a near-optimal LDD, thereby proving our first main theorem: 

\thmMainExistential*

We introduce some technical lemmas in order to build up the framework for the proof of \Cref{thm:main-existential}, which can be found in the end of the section.

\subsection{Reduction to Cost-Minimizers}

\mwu*

\begin{proof}
	Let $G = (V, E)$ denote the graph for which we are supposed to design the LDD. Let us also introduce edge costs that are initially defined as $c(e) = 1$ for all edges. We will now repeatedly call the cost-minimizer, obtain a set of cut edges $S \subseteq E$, and then update the edge costs by $c(e) \gets 2 \cdot c(e)$ for all $e \in S$. We stop the process after $R = \log |E| \cdot D/L$ iterations, and let $\mathcal S$ denote the collection of all~$R$ sets~$S$ that we have obtained throughout. We claim that the uniform distribution on $\mathcal S$ is the desired LDD for $G$.
	
	It is clear that all for all sets $S \in \mathcal S$ the diameter condition is satisfied. We show that additionally for all edges $e \in E$ we have that
	\begin{equation*}
		\Pr(e \in S) \leq \frac{10L}{D},
	\end{equation*}
	for $S$ sampled uniformly from $\mathcal S$. Suppose otherwise, then in particular we have increased the cost of $e$ to at least
	\begin{equation*}
		c(e) \geq 2^{\frac{10L}{D} \cdot R} = 2^{10 \log |E|} = |E|^{10}.
	\end{equation*}
	On the other hand, let $c'$ denote the adapted costs after running the process for one iteration. Then the total cost increase is
	\begin{equation*}
		\sum_{e \in E} c'(e) - c(e) = \sum_{e \in S} c'(e) - c(e) = \sum_{e \in S} c(e) = c(S) \leq c(E) \cdot \frac{L}{D}.
	\end{equation*}
	That is, with every step of the process the total cost increases by a factor of $(1 + \frac{L}{D})$ and thus the total cost when the process stops is bounded by
	\begin{equation*}
		|E| \cdot \parens*{1 + \frac{L}{D}}^R \leq |E| \cdot e^{\frac{L}{D} \cdot R} = |E| \cdot e^{\log |E|} \leq |E|^3,
	\end{equation*}
	leading to a contradiction. The same argument shows that all costs are bounded by $|E|^3 \leq |V|^6$ throughout.
\end{proof}

\lemLexpDecomp*

\begin{proof}
Consider the following algorithm: If there is no $\psi$-lopsided sparse cut then the graph is a $\psi$-lopsided expander by definition and we stop. Otherwise, there exists a $\psi$-lopsided sparse cut~$(U, \overline U)$. We then distinguish two cases: If $\vol(U) \leq \vol(\overline U)$ then we remove all edges from~$U$ to~$\overline U$, and otherwise we remove all edges from~$\overline U$ to~$U$ (in both cases placing these edges in~$S$). Then we recursively continue on all strongly connected components in the remaining graph~$G \setminus S$. 

It is clear that all strongly connected components in the remaining graph $G \setminus S$ are $\psi$-lopsided expanders, but it remains to show that we cut edges with total capacity at most $c(E) \cdot \psi \log c(E)$. Imagine that initially we associate to each edge $e$ a \emph{potential} of~\makebox{$c(e) \cdot \log c(E)$}. The total initial potential is thus $\sum_e c(e) \log c(E) = c(E) \log c(E)$. Throughout the procedure we maintain the invariant that each edge holds a potential of at least $c(e) \log \vol(C)$, where $C$ is the strongly connected component containing edge $e$. Focus on any recursion step and its current strongly connected component~$C$, and let $C = U \sqcup \overline U$ denote the current $\psi$-lopsided sparse cut. Assume first that $\vol(U) \leq \vol(\overline U)$. Observe that an edge $e \in U$ suddenly needs to hold a potential of $c(e)\log c(U)$ instead of $c(e)\log c(C)$. Hence, the amount of freed potential in $U$ is at least
\begin{align*}
	\sum_{e \in E \cap (U \times V)} c(e) (\log \vol(C) - \log \vol(U)) &=
	\sum_{e \in E \cap (U \times V)} c(e)  \cdot \log \frac{\vol(C)}{\vol(U)} = \vol(U) \cdot \log \frac{\vol(C)}{\vol(U)}.
\end{align*}
On the other hand, since $(U, \overline U)$ is a $\psi$-lopsided sparse cut we have that
\begin{equation*}
	\psi \geq \psi(U) = \frac{c(U, \overline U)}{\minvol(U) \cdot \log \frac{\vol(C)}{\minvol(U)}} = \frac{c(U, \overline U)}{\vol(U) \cdot \log \frac{\vol(C)}{\vol(U)}}.
\end{equation*}
Putting these together, this means any cut edge $e$ from $U$ to $\overline U$ can get ``paid'' a potential of~\smash{$c(e) \cdot \psi^{-1}$} while still maintaining the potential invariant. (Note that here we only exploit the potential freed by the smaller side of the cut $U$, and forget about the overshoot potential in the larger side $\overline U$.) A symmetric argument applies when $\vol(U) < \vol(\overline U)$.

All in all, we start with a total potential of $c(E) \log c(E)$ and pay for each cut edge $e \in S$ with a potential of at least~\smash{$c(e) \cdot \psi^{-1}$}. This implies that $c(E) \log c(E) \geq c(S) \cdot \psi^{-1}$ and the claim follows.
\end{proof}

To prove that lopsided expanders have small diameter, we first establish the following technical lemma. 

\begin{lemma} \label{lem:lopsided-expansion}
Let $G = (V, E, c)$ be a directed graph and let $\psi > 0$. For any node $v \in V$ there is some radius $R = \Order(\psi^{-1} \log\log \vol(V) + \log \vol(V))$ such that one of the following two properties holds:
\begin{itemize}
	\item $\vol(B^+(v, R)) \geq \frac{1}{2} \cdot \vol(V)$, or
	\item $\psi(B^+(v, r)) \leq \psi$ for some $0 \leq r \leq R$.
\end{itemize}
\end{lemma}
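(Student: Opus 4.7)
The plan is to grow the out-ball $B_r := B^+(v, r)$ one step at a time starting from $B_0 = \set{v}$, and to show that, unless one of the two target conclusions has already been certified, $\vol(B_r)$ grows fast enough that $\vol(B_R) \geq \vol(V)/2$ for the target~$R$. To see this, note that if $\vol(B_r) \geq \vol(V)/2$ for some $r$ the first bullet is already met, and if $\psi(B_r) \leq \psi$ for some $r$ the second bullet is already met; so throughout the argument I may assume $\vol(B_r) < \vol(V)/2$ (whence $\minvol(B_r) = \vol(B_r)$) and $\psi(B_r) > \psi$. Unpacking the definition of lopsided sparsity, the latter assumption gives the lower bound
$$c(B_r, \overline{B_r}) \;>\; \psi \cdot \vol(B_r) \cdot \log \frac{\vol(V)}{\vol(B_r)}.$$
Combined with the fact that every unit of cut capacity contributes fresh out-volume once the newly discovered layer $N_r := B_{r+1} \setminus B_r$ is absorbed into the ball, this yields the multiplicative recurrence
$$\vol(B_{r+1}) \;\geq\; \vol(B_r) \cdot \Bigl(1 + \psi \cdot \log \frac{\vol(V)}{\vol(B_r)}\Bigr).$$

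A dyadic phase analysis then finishes the argument. For each $i = 1, \ldots, \lceil \log \vol(V) \rceil$, let phase $i$ consist of those radii $r$ for which $\vol(V)/2^{i+1} \leq \vol(B_r) \leq \vol(V)/2^i$; inside phase $i$ the log factor appearing in the recurrence is at least $i$, so each step multiplies $\vol(B_r)$ by a factor of at least $(1 + \psi i)$. Escaping phase $i$ (i.e., doubling $\vol(B_r)$) therefore takes at most $\lceil \log_{1+\psi i}(2) \rceil = \Order(\lceil (\psi i)^{-1} \rceil)$ steps, and summing over all at most $\log \vol(V)$ phases gives
$$R \;\leq\; \sum_{i=1}^{\lceil \log \vol(V) \rceil} \Order\bigl(\lceil (\psi i)^{-1} \rceil\bigr) \;=\; \Order\bigl(\psi^{-1} \log\log \vol(V) + \log \vol(V)\bigr),$$
where the last equality is the harmonic bound $\sum_{i \leq I} 1/i = \Order(\log I)$ together with the additive $I$ contribution from the ceilings.

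The main obstacle---and essentially the only non-routine step---is justifying the per-step volume growth $\vol(B_{r+1}) - \vol(B_r) \geq c(B_r, \overline{B_r})$ in the \emph{directed} setting: here $\vol(\cdot)$ counts only out-capacity, whereas $c(B_r, \overline{B_r})$ records in-capacity arriving at $N_r$, and these two quantities need not agree for an arbitrary directed graph. I would handle this via a careful charging argument that assigns each edge crossing the cut a distinct unit of out-capacity that its head contributes to $\vol(B_{r+1})$, exploiting the structural consequences of lopsided expansion on the nodes of $N_r$ (in particular, that sink-like nodes with no out-capacity cannot dominate $N_r$ without producing a $\psi$-lopsided sparse cut of their own). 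Everything else---the phase partition and the harmonic-sum estimate---is routine bookkeeping.
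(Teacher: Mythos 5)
Your overall strategy is the paper's: grow $B^+(v,r)$ radius by radius, argue that failing to find a $\psi$-lopsided sparse ball forces rapid volume growth, and bound the total radius by a dyadic phase decomposition together with the harmonic sum $\sum_i \lceil 1/(\psi i)\rceil = \Order(\psi^{-1}\log\log\vol(V) + \log\vol(V))$. The one structural difference is cosmetic: you derive a per-step multiplicative recurrence, whereas the paper fixes $\Delta_i = \lceil 1/(i\psi)\rceil$ and uses a pigeonhole step within each phase to locate a single radius whose one-step growth ratio is at most $2^{1/\Delta_i}$, and then verifies that this radius witnesses a sparse ball. Both routes give the same phase-length bound.

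The one genuine gap is the step you flag, and the fix you sketch will not work. You want $\vol(B_{r+1}) - \vol(B_r) \geq c(B_r, \overline{B_r})$ and propose a charging argument that leans on the structure of the new layer $N_r = B_{r+1}\setminus B_r$, observing that out-capacity-less nodes in $N_r$ would form sparse cuts of their own. But a singleton $\{y\}$ being $\psi$-lopsided sparse does nothing for you here: the lemma demands a sparse cut of the specific form $B^+(v,r)$, and there is no evident way to upgrade a sparse singleton into a sparse ball. The correct observation is simpler and aims at the \emph{opposite} layer. Since all edges have unit length, any edge $(x,y)$ crossing out of $B^+(v,r)$ has $d(v,x) = r$ exactly (distance $<r$ would force $y$ inside the ball; distance $>r$ is impossible). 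So the sources of all cut edges lie in the outermost shell $B^+(v,r)\setminus B^+(v,r-1)$, and hence
$$c(B^+(v,r),\overline{B^+(v,r)}) \;\leq\; \vol(B^+(v,r)) - \vol(B^+(v,r-1)),$$
with no claim about $N_r$ or about in- versus out-capacity needed. (The paper records this relation with a forward shift, as the ``identity'' $c(B^+(v,r),\overline{B^+(v,r)}) = \vol(B^+(v,r+1)) - \vol(B^+(v,r))$; under the stated convention $\vol(U)=c(U,V)$ that is not a literal identity, but the backward-shifted inequality above is what the averaging step actually uses and it suffices.) Once you replace your unresolved charging step with this one-liner and shift your recurrence by one index, your argument is complete and matches the paper's in all essentials.
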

\begin{proof}
We write $\Delta_i = \ceil{\frac{1}{i \psi}}$ and define the radii~\smash{$1 = r_{\ceil{\log \vol(V)}} \leq \dots \leq r_1$} by $r_i = r_{i+1} + \Delta_i$. We prove by induction that~\smash{$\vol(B^+(v, r_i)) \geq 2^{-i} \cdot \vol(V)$}, or alternatively that we find a sparse cut. This is clearly true in the base case for $i = \ceil{\log \vol(V)}$: Either $\vol(B^+(v, 1)) \geq 1$ or $v$ is an isolated node and therefore $\psi(B^+(v, r)) = 0$.

For the inductive case, suppose for the sake of contradiction that $\vol(B^+(v, r_i)) < 2^{-i} \cdot \vol(V)$. By induction we know however that $\vol(B^+(v, r_{i+1})) \geq 2^{-i-1} \cdot \vol(V)$. It follows there is some radius $r_{i+1} \leq r < r_i = r_{i+1} + \Delta_i$ such that
\begin{equation*}
	\frac{\vol(B^+(v, r + 1))}{\vol(B^+(v, r))} \leq 2^{1/\Delta_i} \leq 1 + \frac{1}{\Delta_i}.
\end{equation*}
It follows that
\begin{equation*}
	c(B^+(v, r), \overline{B^+(v, r)}) = \vol(B^+(v, r + 1)) - \vol(B^+(v, r)) \leq \frac{\vol(B^+(v, r))}{\Delta_i}.
\end{equation*}
Therefore the cut induced by $B^+(v, r)$ has lopsided sparsity
\begin{align*}
	\psi(B^+(v, r)) &= \frac{c(B^+(v, r), \overline{B^+(v, r)})}{\minvol(B^+(v, r)) \log \frac{\vol(V)}{\minvol(B^+(v, r))}} \\
	&\leq \frac{c(B^+(v, r), \overline{B^+(v, r)})}{\vol(B^+(v, r)) \log \frac{\vol(V)}{\vol(B^+(v, r))}} \\
	&\leq \frac{1}{\Delta_i \cdot \log \frac{\vol(V)}{\vol(B^+(v, r))}} \\
	&\leq \frac{1}{\Delta_i \cdot \log \frac{\vol(V)}{\vol(B^+(v, r_i))}} \\
	&\leq \frac{1}{\Delta_i \cdot \log(2^i)} \\
	&\leq \psi.
\end{align*}
Here, in the second step we have used that $\minvol(B^+(v, r)) = \vol(B^+(v, r))$ as in the opposite case we have $\vol(B^+(v, r)) \geq \frac{1}{2} \cdot \vol(V)$ which also proves the claim. This finally leads to a contradiction since we assume that the graph is a $\psi$-lopsided expander and thus does not contain $\psi$-lopsided sparse cuts.

In summary, the induction shows that $\vol(B^+(v, r_1)) \geq \frac{1}{2} \cdot \vol(V)$ and thus we may choose $R = r_1$. To prove that $R$ is as claimed, consider the following calculation:
\begin{align*}
	R &= 2 + \sum_{i=1}^{\ceil{\log \vol(V)}} \Delta_i \\
	&= 2 + \sum_{i=1}^{\ceil{\log \vol(V)}} \ceil*{\frac{1}{i \psi}} \\
	&\leq 2 + \ceil{\log \vol(V)} + \sum_{i=1}^{\ceil{\log \vol(V)}} \frac{1}{i \psi} \\
	&\leq \Order(\log\vol(V) + \psi^{-1} \log\log\vol(V)),
\end{align*}
using the well-known fact that the harmonic numbers are bounded by $\sum_{k=1}^n 1/k = \Order(\log n)$.
\end{proof}

One can easily strengthen the lemma as follows. This insight will play a role in the next \cref{sec:ldd-deterministic} in the construction of the deterministic algorithm.

\begin{lemma} \label{lem:lopsided-expansion-boosted}
Let $G = (V, E, c)$ be a directed graph and let $\psi > 0$ and $0 < \alpha < 1$. For any node $v \in V$ there is some radius $R = \Order(\psi^{-1} \log\log \vol(V) + \psi^{-1} \alpha^{-1} + \log \vol(V))$ such that one of the following two properties holds:
\begin{itemize}
	\item $\vol(B^+(v, R)) \geq (1 - \alpha) \cdot \vol(V)$, or
	\item $\psi(B^+(v, r)) \leq \psi$ for some $0 \leq r \leq R$.
\end{itemize}
\end{lemma}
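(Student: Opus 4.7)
The plan is to bootstrap \Cref{lem:lopsided-expansion} by continuing its ball-growing argument for a few more rounds past volume $\tfrac{1}{2}\vol(V)$. First, I apply \Cref{lem:lopsided-expansion} to $v$, yielding a radius $R_1 = O(\psi^{-1}\log\log\vol(V) + \log\vol(V))$ such that either some $B^+(v,r)$ with $r \leq R_1$ is $\psi$-lopsided sparse (in which case the second bullet of our lemma already holds) or $\vol(B^+(v,R_1)) \geq \tfrac{1}{2}\vol(V)$. Only the latter case requires extra work, and only when $\alpha < \tfrac{1}{2}$; otherwise the first bullet is already satisfied by $R_1$.

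For the second phase, I track the complement mass $V_r := \vol(V) - \vol(B^+(v,r))$, starting from $V_{R_1} \leq \tfrac{1}{2}\vol(V)$, and keep growing the ball. So long as $V_r > \alpha\vol(V)$, the smaller side of the cut induced by $B^+(v,r)$ is its complement, hence $\minvol(B^+(v,r)) = V_r$. If $B^+(v,r)$ is not $\psi$-lopsided sparse, then reusing the identity $V_r - V_{r+1} = c(B^+(v,r), \overline{B^+(v,r)})$ from the proof of \Cref{lem:lopsided-expansion} we obtain
\begin{equation*}
    V_r - V_{r+1} \;>\; \psi \cdot V_r \cdot \log\tfrac{\vol(V)}{V_r} \;\geq\; \psi \cdot V_r,
\end{equation*}
where the last step uses $V_r \leq \tfrac{1}{2}\vol(V)$ to guarantee $\log\tfrac{\vol(V)}{V_r} \geq 1$. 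Thus each additional step shrinks $V_r$ by a factor of at least $1-\psi$, and after $T$ extra steps $V_{R_1+T} \leq e^{-\psi T} \cdot \tfrac{1}{2}\vol(V)$.

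Choosing $T = \lceil \psi^{-1} \ln(1/(2\alpha)) \rceil$ therefore drives $V_{R_1+T} \leq \alpha\vol(V)$, and taking $R := R_1 + T$ gives the total radius
\begin{equation*}
    R \;=\; O\!\left( \psi^{-1}\log\log\vol(V) + \log\vol(V) + \psi^{-1}\log(1/\alpha) \right),
\end{equation*}
which fits into the claimed bound via $\log(1/\alpha) \leq 1/\alpha$. The only subtle point is recognizing that once the ball surpasses half of the total volume the role of $\minvol$ flips to the complement side and the $\log(\vol(V)/V_r)$ factor collapses to $\Theta(1)$, forcing us back to the plain geometric $(1-\psi)$-shrinkage rate of a standard expander; this is precisely what produces the extra $\psi^{-1}\alpha^{-1}$ term and is the reason no further lopsided savings are available in that regime.
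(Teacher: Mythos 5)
Your proof is correct and follows the same two-phase skeleton as the paper (bootstrap \Cref{lem:lopsided-expansion} to reach half the volume, then grow further), but the second phase uses a genuinely different mechanism. The paper's proof sets $\Delta = \lceil 2\alpha^{-1}\psi^{-1}\rceil$ and applies a pigeonhole argument: if after $\Delta$ more steps the ball still has volume below $(1-\alpha)\vol(V)$, then some intermediate radius $r$ must have a small volume ratio $\vol(B^+(v,r+1))/\vol(B^+(v,r)) \leq 2^{1/\Delta}$, and combining this with the lower bound $\minvol(B^+(v,r)) \geq \tfrac{\alpha}{2}\vol(V)$ exhibits a $\psi$-lopsided sparse cut at that single radius. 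You instead argue step by step: at every radius past $R_1$ either the current ball is already sparse, or the complement volume $V_r$ contracts by a factor $(1-\psi)$; iterating this geometric decay for $T = \lceil\psi^{-1}\ln(1/(2\alpha))\rceil$ steps forces $V_{R_1+T} \leq \alpha\vol(V)$. Both arguments are valid, and yours actually yields a marginally tighter radius bound of $O(\psi^{-1}\log(1/\alpha))$ for the second phase versus the paper's $O(\psi^{-1}\alpha^{-1})$, which of course still fits the claimed big-$O$. Note that your derivation relies on the same identity $c(B^+(v,r),\overline{B^+(v,r)}) = \vol(B^+(v,r+1)) - \vol(B^+(v,r))$ that the paper uses inside \Cref{lem:lopsided-expansion}, so you inherit whatever precision that step has in the paper; but this is a shared feature, not a gap specific to your write-up.
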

\begin{proof}
Applying the previous lemma with parameter $\psi$ yields $R' = \Order(\psi^{-1} \log\log \vol(V) + \log \vol(V))$ such that either $\vol(B^+(v, R')) \geq \frac{1}{2} \cdot \vol(V)$, or $\psi(B^+(v, r)) \leq \psi$ for some $0 \leq r \leq R'$. In the latter case we are immediately done, so suppose that we are in the former case.

Let $\Delta = \ceil{2 \alpha^{-1} \psi^{-1}}$ and let $R = R' + \Delta$. If $\vol(B^+(v, R)) \geq (1 - \alpha) \cdot \vol(V)$ then we have shown the first case and are done. So suppose that otherwise $\vol(B^+(v, R)) \leq (1 - \alpha) \cdot \vol(V)$. Then due to the trivial bound $\vol(B^+(v, R)) \leq \vol(V)$, there is some radius $R' \leq r \leq R = R' + \Delta$ with
\begin{equation*}
	\frac{\vol(B^+(v, r + 1))}{\vol(B^+(v, r))} \leq 2^{1/\Delta} \leq 1 + \frac{1}{\Delta},
\end{equation*}
and hence,
\begin{equation*}
	c(B^+(v, r), \overline{B^+(v, r)}) = \vol(B^+(v, r + 1)) - \vol(B^+(v, r)) \leq \frac{\vol(B^+(v, r))}{\Delta}.
\end{equation*}
For this radius $r$ it further holds that $\vol(B^+(v, r)) \leq (1 - \alpha) \cdot \vol(V)$ and thus $\vol(\overline{B^+(v, r)}) \geq \alpha \cdot \vol(V)$. In particular, we have that $\minvol(B^+(v, r)) \geq \frac{\alpha}{2} \cdot \vol(V)$. Putting these statements together, we have that
\begin{align*}
	\psi(B^+(v, r))
	&= \frac{c(B^+(v, r), \overline{B^+(v, r)})}{\minvol(B^+(v, r)) \log \frac{\vol(V)}{\minvol(B^+(v, r))}} \\
	&\leq \frac{\vol(B^+(v, r))}{\Delta \cdot \frac{\alpha}{2} \cdot \vol(B^+(v, r)) \log \frac{\vol(V)}{\minvol(B^+(v, r))}} \\
	&\leq \frac{1}{\Delta \cdot \frac{\alpha}{2}} \\
	&\leq \psi,
\end{align*}
witnessing indeed the desired sparse lopsided cut.
\end{proof}

\lemLexpDiam*

\begin{proof}
	Take an arbitrary pair of nodes $v, u$. Applying \cref{lem:lopsided-expansion-boosted} with parameters $\psi$ and $\alpha = \frac{1}{4}$, say, yields a radius $R = \Order(\psi^{-1} \log\log \vol(V) + \log \vol(V))$ such that
	\begin{equation*}
		\vol(B^+(v, R)) \geq \frac{3}{4} \cdot \vol(V),
	\end{equation*}
	and symmetrically,
	\begin{equation*}
		\vol(B^-(u, R)) \geq \frac{3}{4} \cdot \vol(V).
	\end{equation*}
	Therefore, there is some edge $e = (x, y)$ contributing to both of these volumes. Thus $x \in B^+(v, R)$ and $y \in B^-(u, R)$. It follows that
	\begin{equation*}
		d_G(u, v) \leq d_G(u, x) + d_G(x, y) +  d_G(y, u) \leq R + 1 + R = \Order(\psi^{-1} \log\log \vol(V) + \log \vol(V)).
	\end{equation*}
	Since the nodes $u, v$ were chosen arbitrarily this establishes the claimed diameter bound.
\end{proof}

\begin{proof}[Proof of \cref{thm:main-existential}]
Let $G = (V, E, \ell)$ be a directed graph with positive edge lengths. We show that there is an LDD with loss $\Order(\log n \log\log n)$ for $G$. We first deal with two trivial cases: First, if $D \leq \log n / \gamma$ (for some constant $\gamma > 0$ to be determined later) then we simply remove all edges and stop. Second, we remove all edges with length more than $D$ from the graph. In both cases edges can be deleted with probability $1$ without harm.

Next, we transform the graph into $G'$ by replacing each $e$ by a path of $\ell(e)$ unit-length edges. In the following it suffices to design an LDD for the augmented graph; if that LDD cuts any of the edges along the path corresponding to an original edge $e$ we will cut $e$ entirely. An LDD with loss~$L$ in the augmented graph will thus delete an original edge with probability at most $\ell(e) \cdot \frac{L}{D}$ by a union bound. All in all, this transformation blows up the number of nodes and edges in the graph by a factor of at most $D$ (since we removed edges with larger length). Recall that we throughout assume that $D \leq n^c$, for some constant $c$, and thus $|V'| \leq n^{\Order(1)}$.

By \cref{lem:mwu} we further reduce the existence of an LDD of $G'$ to the following cost-minimizer task: View $G'$ as an edge-capacitated graph $G' = (V', E', c)$ for some capacities~\makebox{$c : E' \to [|V'|^{10}]$}. In particular, under this capacity function $G'$ has volume $\vol(V') \leq |V'|^2 \cdot |V'|^{10} = |V'|^{12} = n^{\Order(1)}$. The goal is to delete edges $S \subseteq E'$ in $G'$ so all remaining strongly connected components have (weak) diameter at most $D$, and the total cost of all deleted edges is only $c(S) \leq c(E) \cdot \frac{L}{D}$.

Finally, we apply the Lopsided Expander Decomposition from \cref{lem:lexp-decomp} on $G'$. Specifically, we define
\begin{equation*}
	\psi = \frac{\log\log \vol(V')}{\epsilon D}
\end{equation*}
for some constant $\epsilon > 0$ to be determined later. The Expander Decomposition then cuts edges $S \subseteq E'$ so that each remaining strongly connected component is $\psi$-lopsided expander. Thus, by \cref{lem:lexp-diam} each strongly connected component has diameter
\begin{equation*}
	\Order(\psi^{-1} \log\log \vol(V') + \log \vol(V')) = \Order(\epsilon D + \gamma D).
\end{equation*}
By choosing the constants $\epsilon$ and $\gamma$ to be sufficiently small, the diameter bound becomes $D$ as desired. Moreover, \cref{lem:exp-decomp} guarantees that we cut edges of total capacity
\begin{equation*}
	c(S) \leq c(E') \cdot \psi \log \vol(V') \leq c(E') \cdot \frac{\log \vol(V') \log\log \vol(V')}{\epsilon D},
\end{equation*}
which becomes $\frac{L}{D}$ by choosing
\begin{equation*}
	L = \frac{\log \vol(V') \log\log \vol(V')}{\epsilon} \leq \frac{\log |V'|^{12} \log\log |V'|^{12}}{\epsilon} = \Order(\log n \log\log n)
\end{equation*}
as planned.
\end{proof}
\section{Near-Optimal LDDs Deterministically} \label{sec:ldd-deterministic}
In this section, we present the deterministic algorithm for computing a near-optimal LDD, thereby proving our second main theorem:

\thmMainDet*

To this end, we utilize many of the same building blocks we have already introduced in \Cref{sec:ldd-expander}. In particular, we follow the general framework of Multiplicative Weights Update to reduce the computation of an LDD to solving the cost-minimizing task. The full proof of \Cref{thm:main-det} can be found in the end of the section. The following lemma restates the MWU method algorithmically; we omit a proof as it follows exactly the proof of \cref{lem:mwu} in \Cref{sec:ldd-expander}. 

\begin{lemma}[Algorithmic Multiplicative Weight Update]
Let $G = (V, E)$ be a directed graph and let $D \geq 1$. Suppose that there is an algorithm $\mathcal A$ that, given $G$, $D$ and a cost function $c : E \to [|V|^{10}]$, computes a set of edges $S \subseteq E$ satisfying the following properties:
\begin{itemize}
	\item For any two nodes $u, v \in V$ that are part of the same strongly connected component in $G \setminus S$, we have $d_G(u, v) \leq D$ and $d_G(v, u) \leq D$.
	\item $c(S) \leq c(E) \cdot \frac{L}{D}$.
\end{itemize}
Then there is a deterministic algorithm to compute an LDD with loss $\Order(L)$ for $G$ (i.e., we compute the full support of a uniform distribution over $\Order(D \log n)$ cut sets). It runs in time~\smash{$\widetilde\Order(m D)$} and issues $\Order(D \log n)$ oracle calls to $\mathcal A$.
\end{lemma}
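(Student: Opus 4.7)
The plan is to give a direct algorithmic translation of the existential proof of \Cref{lem:mwu}, showing that each step of that proof can be implemented efficiently and the number of iterations doubles as a bound on both the support size and the oracle call count.

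The algorithm I would describe is the following. Initialize costs $c(e) \gets 1$ for every edge $e \in E$. For $R = \lceil \log|E| \cdot D/L \rceil$ rounds, do: call the oracle $\mathcal{A}$ on $(G, D, c)$ to obtain a cut set $S_i$, append $S_i$ to a list, and then update $c(e) \gets 2 c(e)$ for each $e \in S_i$. Output the list $S_1, \dots, S_R$; the LDD is the uniform distribution over this list. This produces the full explicit support, as required by the theorem statement.

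For correctness, I would invoke the analysis already carried out in the proof of \Cref{lem:mwu}. The diameter property of each $S_i$ is inherited from the oracle's first guarantee, so any sample from the uniform distribution satisfies the strongly-connected-component distance condition. The per-edge cutting probability bound of $O(L/D)$ follows from the same multiplicative potential calculation: if some edge $e$ were contained in more than $10L/D$-fraction of the $S_i$'s, its cost alone would exceed $|E|^{10}$, contradicting the global cost-growth bound that $c(E)$ grows by a factor at most $(1+L/D)$ per round (which in turn uses the oracle's second guarantee $c(S_i) \le c(E) \cdot L/D$) and therefore stays bounded by $|E| \cdot (1+L/D)^R \leq |E|^3$. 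The same total-cost bound also ensures $c(e) \leq |E|^3 \leq |V|^6 \leq |V|^{10}$ for every $e$ throughout the execution, so the costs fed to $\mathcal{A}$ always lie in the required range $[|V|^{10}]$.

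For the running time, the loop performs $R$ iterations, each consisting of one oracle call to $\mathcal{A}$ plus $O(|S_i|) = O(m)$ work to traverse $S_i$ and double edge costs (maintained, say, in a simple integer-weighted table supporting $O(1)$ per-edge updates). Since $L \geq 1$, the number of iterations is $R = O(D \log n / L) = O(D \log n)$, giving the stated $O(D \log n)$ oracle calls and a support of size $O(D \log n)$. The non-oracle cost totals $O(R \cdot m) = \widetilde O(mD)$. There is essentially no substantial obstacle here beyond this bookkeeping: every step of the existential MWU argument is already constructive given oracle access to $\mathcal{A}$, so the only thing that could go wrong is the cost bound used by $\mathcal{A}$, which we verified above stays within $[|V|^{10}]$.
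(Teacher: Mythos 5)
Your proposal is correct and takes essentially the same route as the paper: the paper omits the proof precisely because it follows the existential MWU argument of \cref{lem:mwu} verbatim, which is exactly the algorithmic translation you give (iterating the oracle $R = O(\log|E|\cdot D/L)$ times, doubling costs of cut edges, and outputting the uniform distribution over the recorded cut sets). Your additional checks---that the costs stay within $[|V|^{10}]$ so the oracle's input is valid, and that $R = O(D\log n)$ with $\widetilde\Order(mD)$ bookkeeping time---are the only points needing verification, and you handle them correctly.
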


For the remainder of this section we will therefore focus on the same cost minimizer setting: Given a directed graph $G = (V, E, c)$ with edge capacities (and unit lengths), the goal is select a set of cut edges $S \subseteq E$ such that $c(S) \leq c(E) \cdot \Order(\frac{1}{D} \cdot \log n \log\log n)$ and such that all strongly connected components in the remaining graph $G \setminus S$ have (weak) diameter at most $D$.

The following lemma is a consequence of the lopsided expander machinery set up before:

\begin{lemma}[Finding Sparse Cuts] \label{lem:sparse-cut-det}
Let $G = (V, E, c)$ be a directed graph, let $D \geq \log \vol(V)$ and let $v \in V$. Then there is some $\psi = \Order(\frac{1}{D} \cdot \log\log \vol(V))$ and an algorithm to determine which of the following cases applies:
\begin{enumerate}[label=(\roman*)]
	\item There is a radius $0 \leq r \leq D$ with $\psi(B^+(v, r)) \leq \psi$ and $c(B^+(v, r)) \leq 0.95 \cdot \vol(V)$.\\(In this case the algorithm runs in linear time in the number of edges incident to $B^+(v, r)$.)
	\item Or, there is a radius $0 \leq r \leq D$ such that $\psi(\overline{B^-(v, r)}) \leq \psi$ and $c(B^-(v, r)) \leq 0.95 \cdot \vol(V)$.\\(In this case the algorithm runs in linear time in the number of edges incident to $B^-(v, r)$.)
	\item Or, $c(B^+(v, D) \cap B^-(v, D)) \geq 0.9 \cdot \vol(V)$.\\(In this case the algorithm runs in time~\smash{$\Order(m)$}.)
\end{enumerate}
\end{lemma}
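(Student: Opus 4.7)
The plan is to grow the outgoing ball $B^+(v, r)$ and the incoming ball $B^-(v, r)$ in parallel via BFS on unit-length edges, increasing $r$ one hop at a time for $r = 0, 1, \dots, D$, and to check the sparsity of each ball after every expansion. I set $\psi = \Theta(\log\log\vol(V) / D)$ with the constants tuned so that the radius bound $R = O(\psi^{-1}\log\log\vol(V) + \psi^{-1} + \log\vol(V))$ provided by \Cref{lem:lopsided-expansion-boosted} (invoked with, say, $\alpha = 0.05$) is at most $D$; this fits inside $D$ exactly because of the hypothesis $D \geq \log \vol(V)$. During the growth of $B^+$, I maintain $c(B^+(v, r))$ and $c(B^+(v, r), \overline{B^+(v, r)})$ incrementally so that $\psi(B^+(v, r))$ is available in constant time per layer. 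The algorithm returns case~(i) the first moment both $\psi(B^+(v, r)) \leq \psi$ and $c(B^+(v, r)) \leq 0.95\,\vol(V)$ hold simultaneously, and stops growing $B^+$ once $c(B^+(v, r)) > 0.95\,\vol(V)$---at that point $\vol(B^+(v, r)) \geq c(B^+(v, r)) > 0.95\,\vol(V)$, so the outgoing mass has been essentially captured and no useful small-side sparse cut can emerge from further outward growth. A fully symmetric procedure for $B^-(v, r)$ detects case~(ii); if neither trigger ever fires, the algorithm reports case~(iii).

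For correctness, cases~(i) and~(ii) are immediate from the stopping condition. For case~(iii), the key step is to see that failure of case~(i) at every radius $r \leq D$ forces $\vol(B^+(v, D)) \geq 0.95\,\vol(V)$: either some radius has $c(B^+(v, r)) > 0.95\,\vol(V)$ and monotonicity of $\vol$ along $r$ finishes, or every radius $r \leq D$ satisfies $\psi(B^+(v, r)) > \psi$, in which case \Cref{lem:lopsided-expansion-boosted} forces the ball's outgoing volume to cross $0.95\,\vol(V)$ within radius $R \leq D$. Symmetrically, the failure of case~(ii) yields an incoming-volume bound $c(V, B^-(v, D)) \geq 0.95\,\vol(V)$. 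The remaining step is a union bound that converts these two one-sided volume guarantees into the claimed two-sided internal-capacity bound $c(B^+(v, D) \cap B^-(v, D)) \geq 0.9\,\vol(V)$, by charging each ``bad'' edge---one whose source leaves $B^+(v, D)$ or whose target leaves $B^-(v, D)$---to the appropriate $0.05\,\vol(V)$ slack.

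The running time follows standard BFS bookkeeping: each BFS layer costs time proportional to the edges incident to its newly-added nodes, and the incremental volume and cut updates cost $O(1)$ per scanned edge. Hence cases~(i) and~(ii) terminate in time linear in the edges incident to the reported ball, while case~(iii) costs $\widetilde{O}(m)$ in total. The main obstacle I anticipate is tightening the case~(iii) bookkeeping: the statement is phrased in terms of the \emph{internal} capacity $c(U, U)$ of the intersection, whereas \Cref{lem:lopsided-expansion-boosted} naturally controls only the outgoing volume of $B^+$ and the incoming volume of $B^-$. The ``cross-direction'' terms in the union bound therefore have to be controlled either by monitoring additional sparsities ($\psi(\overline{B^+})$ and $\psi(B^-)$) during the same two BFS passes, or by a dedicated argument that exploits the specific ball structure of $B^\pm$ to align the thresholds $0.95$ and $0.9$ in the statement.
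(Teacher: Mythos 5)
Your plan follows the paper's proof: invoke \cref{lem:lopsided-expansion-boosted} with $\alpha=0.05$ and $\psi=\Theta(\frac{1}{D}\log\log\vol(V))$ on $G$ and on the reverse graph, grow the out-ball and the in-ball with interleaved searches, and in the residual case combine the two $0.95$-volume guarantees by a union bound. However, the step you flag as an ``obstacle'' at the end is a genuine gap that must be closed, and neither of your two suggested workarounds is needed in the form you describe. The problem is exactly the one you identify: knowing that edges of total capacity at least $0.9\cdot\vol(V)$ have their source in $B^+(v,D)$ \emph{and} their target in $B^-(v,D)$ does not place both endpoints of such an edge inside $B^+(v,D)\cap B^-(v,D)$, which is what the internal capacity $c(\cdot)=c(\cdot,\cdot)$ in case~(iii) counts; with radius budget $D$ on both sides the source may fail to lie in $B^-(v,D)$ and the target may fail to lie in $B^+(v,D)$. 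The fix is to apply \cref{lem:lopsided-expansion-boosted} with radius budget $D-1$ rather than $D$ (the constant in $\psi$ absorbs this). Then edges of capacity at least $0.9\cdot\vol(V)$ have source $x\in B^+(v,D-1)$ and target $y\in B^-(v,D-1)$, and since this cost-minimizer setting has unit edge lengths, $d(v,y)\leq d(v,x)+1\leq D$ and $d(x,v)\leq 1+d(y,v)\leq D$, so both $x$ and $y$ lie in $B^+(v,D)\cap B^-(v,D)$ and the edge is counted in $c(B^+(v,D)\cap B^-(v,D))$.

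A second, smaller issue concerns the running time of cases~(i) and~(ii). You interleave the two ball-growing processes ``one hop at a time,'' i.e., by radius. If the out-ball becomes $\psi$-lopsided sparse at a small radius while the in-ball at that same radius is already much larger, the work already spent exploring the in-ball is not bounded by the number of edges incident to $B^+(v,r)$, so the stated time bound for case~(i) fails. The interleaving must instead be by \emph{work} (alternate constant-size chunks of edge explorations between the two searches), so that when one search terminates the other has performed at most a constant factor more work. This matters downstream: the running-time recurrence for the deterministic algorithm charges each cut to the edges incident to the ball that is subsequently deleted, so time spent on the surviving side cannot be amortized away.
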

\begin{proof}
Existentially the statement follows from \cref{lem:lopsided-expansion-boosted} applied with parameters $\psi = \Theta(\frac{1}{D} \cdot \log\log \vol(V))$ (with an appropriately large hidden constant) and $\alpha = 0.05$. This lemma states that $\psi(B^+(v, r)) \leq \psi$ for some radius $0 \leq r \leq D - 1$ (proving (i)) or that~\smash{$\vol(B^+(v, D - 1)) \geq 0.95 \vol(V)$}. Applying the same statement to the reverse graph similarly gives that $\psi(\overline{B^-(v, r)}) \leq \psi$ for some radius $0 \leq r \leq D - 1$ (proving (ii)) or that~\smash{$c(\overline{B^-(v, D - 1)}, B^-(v, D - 1)) \geq 0.95 \cdot \vol(V)$}. In the only remaining case we thus have both
\begin{align*}
	c(B^+(v, D - 1), V) \geq 0.95 \cdot \vol(V)\quad\text{and}\quad c(\overline{B^-(v, D - 1)}, V) \geq 0.95 \cdot \vol(V).
\end{align*}
Combining both statements we obtain that edges of total capacity at least $0.9 \cdot \vol(V)$ must lie in $B^+(v, D - 1) \times B^-(v, D - 1)$. In particular, it follows that $c(B^+(v, D) \cap B^-(v, D)) \geq 0.9 \cdot \vol(V)$ thereby proving (iii).

To make the lemma algorithmic, we simultaneously grow an out-ball $B^+(v, r^+)$ and an in-ball~$B^-(v, r^-)$ around the node $v$. Explicitly, we start with $r^+ = 0$ and increase $r^+$ step by step to compute $B^+(v, r^+)$ (with breadth-first search). We can, without overhead, keep track of the current volume of $B^+(v, r^+)$. If we at some point encounter that $B^+(v, r)$ is $\psi$-lopsided sparse, then we stop and report output (i). Similarly, we start with $r^- = 0$ and step by step explore $B^-(v, r^-)$. If at some point $B^-(v, r^-)$ becomes a $\psi$-lopsided sparse cut, we stop and report answer (ii). We interleave these two computations so that when we output (i) the overhead of exploring the in-ball~$B^-(v, r^-)$ incurs only a constant factor in the running time, and similarly for (ii). In the remaining case where we have not encountered a sparse cut in the graph before reaching $r^+ = r^- = D$, we report (iii). In this case we indeed spend time at most $\Order(m)$.
\end{proof}

Having established \cref{lem:sparse-cut-det}, now consider the algorithm in \cref{alg:det}. In summary, it runs in two phases. In Phase (I) we first repeatedly select a node $v$ and attempt to cut a lopsided sparse cut around $v$ (i.e., we cut the edges in $\delta^+(B^+(v, r))$ for some radius $r$). We only execute these cuts, however, until we find a node $z$ for which $c(B^+(z, D') \cap B^-(z, D')) \geq 0.9 \cdot \vol(V)$---that is, both the radius-$D'$ out- and in-balls of $z$ make up for a big constant fraction of the entire graph. We call $z$ a \emph{center} node and move on to phase (II). In this phase we repeat the same steps as in Phase (I), but we only choose nodes $v$ that have distance at least $2D'$ (in one direction or the other) to the center $z$. The intuition is that we can never find a second node $z'$ which equally makes up for the entire graph, as then $z$ and $z'$ would have to be connected by a short path. In the remainder of this section we formally analyze \cref{alg:det}.

\begin{algorithm}[t]
	\caption{The deterministic near-optimal LDD, see \Cref{thm:main-det}.} \label{alg:det}
	\begin{enumerate}[label=\arabic*.]
		\item[(I)] Repeat the following steps: Take an arbitrary node $v \in V$ and apply \cref{lem:sparse-cut-det} with parameter \smash{$D' = \floor{\frac{D}{4}}$}. Depending on the output execute the following steps:
		\begin{enumerate}[label=(\roman*)]
			\item Cut all edges in $\delta^+(B^+(v, r))$, recurse on the induced graph $G[B^+(v, r)]$, then remove all nodes in $B^+(v, r)$ from the graph.
			\item Cut all edges $\delta^-(B^-(v, r))$, recurse on the induced graph $G[B^-(v, r)]$, then remove all nodes in $B^-(v, r)$ from the graph.
			\item Remember $z \gets v$ (called the \emph{center} node) and continue with Phase (II).
		\end{enumerate}
		\item[(II)] Compute the sets
		\begin{align*}
			X &= B^+(z, D') \cap B^-(z, D'), \\
			Y &= B^+(z, 2D') \cap B^-(z, 2D').
		\end{align*}
		Then repeat the following steps while there still exists nodes in $V \setminus Y$: Take an arbitrary node~\makebox{$u \in V \setminus Y$} and apply \cref{lem:sparse-cut-det} with parameter~$D'$. Depending on the output execute the following steps:
		\begin{enumerate}[label=(\roman*)]
			\item Cut all edges in $\delta^+(B^+(v, r))$, recurse on the induced graph $G[B^+(v, r)]$, then remove all nodes in $B^+(v, r)$ from the graph.
			\item Cut all edges $\delta^-(B^-(v, r))$, recurse on the induced graph $G[B^-(v, r)]$, then remove all nodes in $B^-(v, r)$ from the graph.
		\end{enumerate}
	\end{enumerate}
\end{algorithm}

\begin{lemma}[Total Cost of \cref{alg:det}] \label{lem:ldd-det-cost}
Let $S \subseteq E$ denote the set of edges cut by \cref{alg:det}. Then $c(S) \leq c(E) \cdot \Order(\frac{1}{D} \cdot \log \vol(V) \log\log \vol(V))$.
\end{lemma}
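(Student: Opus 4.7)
The plan is to mimic the potential argument used in the proof of \cref{lem:lexp-decomp}. I will assign every edge $e$ of the initial graph a starting potential of $c(e) \log c(E)$, yielding a total budget of $c(E) \log c(E) = \vol(V) \log \vol(V)$. Throughout the execution of \cref{alg:det}, I will maintain the invariant that every edge $e$ still present in the graph holds potential at least $c(e) \log \vol(C_e)$, where $C_e$ is the current subcomponent or working subgraph containing~$e$. As components shrink over the course of the algorithm---either because a ball is split off in a recursive call or because the outer loop whittles away more pieces of the current graph---this required potential decreases, releasing a surplus to be used to pay for cut edges.

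The second step is to confirm that every cut performed by \cref{alg:det}, in Phase~(I) or Phase~(II), is $\psi$-lopsided sparse in its current working graph. This will be immediate from \cref{lem:sparse-cut-det}: the algorithm invokes that lemma with $\psi = \Theta(\log\log\vol(V)/D)$ and performs a cut only on output~(i) or~(ii), both of which directly certify $\psi$-lopsided sparsity of $B^+(v,r)$ or $\overline{B^-(v,r)}$, respectively. Moreover, since each cut splits the current component into a ball $B$ (on which the algorithm recurses) and its complement $\overline B$ (on which the enclosing loop continues), both sides immediately become separate subgraphs, exactly mirroring the recursion structure analysed in \cref{lem:lexp-decomp}.

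Given these two ingredients, the potential argument of \cref{lem:lexp-decomp} applies essentially verbatim: for each cut $(B, \overline B)$ inside a component $C$, the smaller side frees at least $\minvol \cdot \log(\vol(C)/\minvol)$ units of potential, which by $\psi$-lopsided sparsity is at least $c(B, \overline B)/\psi$. Thus I can afford to donate $c(e)/\psi$ units of potential to every cut edge $e$ while preserving the invariant. Summing over all cuts yields $c(S)/\psi \leq c(E) \log c(E)$, which rearranges to the claimed bound $c(S) \leq \psi \cdot c(E) \log c(E) = O(c(E) \cdot \log \vol(V) \log\log \vol(V) / D)$ after substituting the value of~$\psi$.

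The only delicate point, which I expect to be the main obstacle, is ensuring that every cut truly is $\psi$-lopsided sparse---in particular, that case~(iii) of \cref{lem:sparse-cut-det} never fires during Phase~(II). Firing~(iii) would mean discovering a second ``center'' $u \in V \setminus Y$ whose $D'$-balls together cover $0.9\,\vol(V)$, while the original center $z$ already has this property. Both cores must then share a common edge by a volume-overlap argument, which forces $d(z,u) \leq 2D'$ and contradicts $u \notin Y$. This is a correctness observation rather than a cost observation, so it does not require any new idea for the lemma at hand: once it is granted that Phase~(II) only triggers outputs (i)/(ii), the potential argument above delivers the cost bound directly.
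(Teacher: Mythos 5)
Your proposal is correct and follows essentially the same route as the paper: observe that every cut performed in subcases (i) and (ii) is certified $\psi$-lopsided sparse by \cref{lem:sparse-cut-det} with $\psi = \Theta(\log\log\vol(V)/D)$, then invoke the potential argument of the lopsided expander decomposition verbatim to charge each cut edge $e$ a freed potential of $c(e)/\psi$ against the initial budget $c(E)\log c(E)$. Your closing remark is also accurate: the impossibility of case~(iii) in Phase~(II) is a well-definedness issue handled separately (\cref{lem:ldd-det-well-defined}) and plays no role in the cost bound, since case~(iii) cuts nothing.
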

\begin{proof}
Observe that the algorithm only cuts the edges of $\psi$-lopsided sparse cuts for some parameter $\psi = \Order(\frac{1}{D} \cdot \log\log \vol(V))$. Indeed, the algorithm only cuts edges from $B^+(v, r)$ to~\smash{$\overline{B^+(v, r)}$} in subcase~(i), and edges from~\smash{$\overline{B^-(v, r)}$} to $B^-(v, r)$ in subcase~(ii). In both these cases \cref{lem:sparse-cut-det} gives exactly the guarantee that these respective cuts are $\psi$-lopsided sparse.

With this in mind, we can apply exactly the same potential argument as in the proof of \cref{lem:exp-decomp}. To avoid repetitions, we only give a quick reminder here: We initially associate to each edge a potential of~\smash{$c(e) \cdot \log\vol(V)$}. Then, following the calculations as in \cref{lem:exp-decomp}, we can free a potential of at least $c(e) / \psi$ for each cut edge, proving that $c(S) \leq c(E) \cdot \psi \log\vol(V)$ as claimed.
\end{proof}

\begin{lemma}[Well-Definedness of \cref{alg:det}] \label{lem:ldd-det-well-defined}
While executing Phase~(II) of \cref{alg:det}, the subcase~(iii) never happens.
\end{lemma}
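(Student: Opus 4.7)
The plan is to argue by contradiction. Let $G_0$ denote the graph at the start of Phase~(II), with $V_0 = V(G_0)$ and $X, Y$ as computed there. Suppose that at some iteration we pick $u \in V' \setminus Y$ and \cref{lem:sparse-cut-det} applied to $u$ in the current graph $G'$ returns case~(iii). Writing $B = B^+_{G'}(u, D') \cap B^-_{G'}(u, D')$, case~(iii) reads $c_{G'}(B) \geq 0.9 \cdot \vol_{G'}(V')$.

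The first step is a triangle-inequality argument showing $B \cap X = \emptyset$, viewed as subsets of $V_0$. For any hypothetical $w \in B \cap X$, using that $G'$ is a subgraph of $G_0$ (so $d_{G_0} \leq d_{G'}$), we would have $d_{G_0}(u, w), d_{G_0}(w, u) \leq D'$ and $d_{G_0}(z, w), d_{G_0}(w, z) \leq D'$; the triangle inequality then yields $d_{G_0}(u, z), d_{G_0}(z, u) \leq 2D'$, placing $u$ in $Y$ and contradicting $u \in V' \setminus Y$. Combining $B \cap X = \emptyset$ with the Phase-(II) entry condition $c_{G_0}(X) \geq 0.9 \cdot \vol_{G_0}(V_0)$ gives $c_{G_0}(V_0 \setminus X) \leq 0.1 \cdot \vol_{G_0}(V_0)$, so $c_{G'}(B) \leq c_{G_0}(B) \leq 0.1 \cdot \vol_{G_0}(V_0)$. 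Together with the case-(iii) lower bound, this forces $\vol_{G'}(V') \leq \vol_{G_0}(V_0)/9$.

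To complete the contradiction we need $\vol_{G'}(V') = \Omega(\vol_{G_0}(V_0))$; establishing this is the main technical obstacle. The plan is to replay the same triangle-inequality trick for every ball $B^\pm(u_i, r_i)$ removed in a prior iteration: since $u_i \notin Y$, the identical calculation shows that at least one of $B^+(u_i, D')$ or $B^-(u_i, D')$ must be disjoint from $X$, for otherwise $u_i$ itself would be in $Y$. A careful accounting of this across the sequence of iterations---using in addition that cut edges $\delta^\pm(B_i)$ never lie entirely within $V' \times V'$, so internal edges of $X$ among surviving nodes are preserved exactly---should yield $c_{G'}(X \cap V', X \cap V') = \Omega(\vol_{G_0}(V_0))$, which immediately gives the required lower bound $\vol_{G'}(V') \geq c_{G'}(X \cap V', X \cap V') = \Omega(\vol_{G_0}(V_0))$, closing the contradiction.
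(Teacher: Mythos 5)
Your first two steps are correct and follow the same route as the paper's proof: the triangle inequality forces the ball around the offending node $u$ to be disjoint from $X$, and a volume count then shows $\vol_{G'}(V')$ must be a small fraction of $\vol_{G_0}(V_0)$. (The paper packages this as two disjoint volumes summing to more than the total, but it is the same counting, and your version using both $B^+$ and $B^-$ is if anything cleaner.) You have also correctly located the crux: one needs a matching \emph{lower} bound $\vol_{G'}(V') = \Omega(\vol_{G_0}(V_0))$, which the paper obtains from the invariant that Phase~(II) never removes a node of $X$, so that $\vol_{G'}(V') \geq c_{G_0}(X,X) \geq 0.9\,\vol_{G_0}(V_0)$. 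The paper asserts this invariant in a single sentence; your step~3 is the only place where an actual justification is attempted, and as sketched it does not close.

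The concrete hole is a directional mismatch. For a previously processed node $u_i \notin Y$, the triangle inequality only tells you that the ball whose direction matches the \emph{violated} half of the condition $u_i \in Y$ avoids $X$: if $d_{G_0}(z,u_i) > 2D'$ then $B^-(u_i,D') \cap X = \emptyset$, and if $d_{G_0}(u_i,z) > 2D'$ then $B^+(u_i,D') \cap X = \emptyset$. But the algorithm removes whichever ball \cref{lem:sparse-cut-det} returns---case~(i) removes the out-ball, case~(ii) the in-ball---and this choice is made by the sparse-cut search, not by the direction in which $u_i$ fails to lie in $Y$. It is entirely consistent with everything you have shown that $d_{G_0}(z,u_i) > 2D'$ while $u_i$ reaches nodes of $X$ within distance $D'$, case~(i) fires, and the removed out-ball $B^+(u_i,r_i)$ swallows part of $X$. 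Hence ``at least one of the two balls is disjoint from $X$'' is not enough for your accounting: you would need that the ball \emph{actually removed} avoids $X$ (or some other route to the volume lower bound), and that argument is missing. Until that is supplied, the claimed bound $c_{G'}(X\cap V', X\cap V') = \Omega(\vol_{G_0}(V_0))$ does not follow and the contradiction is not closed.
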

\begin{proof}
Let $G = (V, E)$ denote the graph at the transition from Phase (I) to Phase (II). Suppose for contradiction that during the execution of Phase (II) we find a node which falls into case~(iii). That is, let $G' = (V', E')$ denote the graph remaining at this point in the execution of the algorithm, and suppose that there is a node $v \in V'$ for which $\vol(G'[B^+(v, D') \cap B^-(v, D')]) \geq 0.9 \cdot \vol(G')$. Since we have picked~\makebox{$v \not\in Y$}, we either have that $d_G(z, v) > 2D'$ or that $d_G(v, z) > 2D'$; focus on the former case. Then for all~\makebox{$x \in X$}, we have $d_G(x, v) > D'$ (as otherwise~\makebox{$d_G(z, v) \leq d_G(z, x) + d_G(x, v) \leq 2D'$}), and thus $X$ and $B^-(v, D')$ are disjoint. But this leads to a contradiction as supposedly both $\vol(G[X]) \geq 0.9 \vol(G)$ and $\vol(G[B^-(v, D')]) \geq 0.9 \vol(G') \geq 0.9 \cdot 0.9 \cdot \vol(G) \geq 0.8 \vol(G)$, leading to a total volume of more than $\vol(G)$. Here in the last step we have used that invariantly $\vol(G') \geq 0.9 \cdot \vol(G)$ since the algorithm can never cut nodes from $X$. The remaining case is symmetric.
\end{proof}

\begin{lemma}[Correctness of \cref{alg:det}] \label{lem:ldd-det-correctness}
Let $S \subseteq E$ denote the set of edges cut by \cref{alg:det}. Then for any two nodes $u, w$ in the same strongly connected component in $G \setminus S$, it holds that $d_G(u, w) \leq D$ and $d_G(w, u) \leq D$.
\end{lemma}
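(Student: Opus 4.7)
The plan is to induct on the number of nodes in the input graph to \cref{alg:det}. The empty case is vacuous. For the inductive step, fix the top-level input $G = (V, E, c)$ and the cut set $S$ produced by the algorithm, and partition $V$ into \emph{regions}: one region for each invocation of subcase (i) or (ii) in Phases (I) and (II) (namely the ball $B^+(v,r)$ or $B^-(v,r)$ that was peeled off), together with one \emph{top-level region} collecting the nodes that survive until the algorithm returns. Since each peeled ball is removed from the working graph immediately after being cut, these regions are pairwise disjoint and cover all of $V$.

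The central structural claim I will verify is that any two nodes $u, w$ lying in \emph{distinct} regions cannot belong to the same strongly connected component of $G \setminus S$. If $u$ sits in a region produced by subcase (i) with ball $B^+(v,r)$, then all edges of $\delta^+(B^+(v,r))$ were placed into $S$, so in $G \setminus S$ no path can leave this ball and hence $u$ cannot reach any node outside it. If instead the region was produced by subcase (ii), then $\delta^-(B^-(v,r)) \subseteq S$ blocks all paths entering the ball, so $u$ cannot be reached from outside. In either case strong connectivity with any outside node fails, so it suffices to verify the distance bounds for pairs $u, w$ inside a common region.

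If $u, w$ lie in a region produced by a recursive call on $G[B^\pm(v,r)]$, the inductive hypothesis directly gives $d_{G[B^\pm(v,r)]}(u,w), d_{G[B^\pm(v,r)]}(w,u) \leq D$, and since distances can only decrease when passing to the supergraph $G$, the same bounds hold in $G$. If instead $u, w$ lie in the top-level region, I use that Phase (II) halts precisely when $V \setminus Y = \emptyset$ and that further recursive removals in Phase (II) only shrink $V$, while $Y$ itself is fixed once at the start of Phase (II). Hence every survivor lies in $Y = B^+(z, 2D') \cap B^-(z, 2D')$, where the balls are measured in the graph $G^\star$ present at the start of Phase (II). The triangle inequality then gives $d_{G^\star}(u,w) \leq d_{G^\star}(u,z) + d_{G^\star}(z,w) \leq 4D' \leq D$ and symmetrically $d_{G^\star}(w,u) \leq D$; since $G^\star$ is a subgraph of $G$, these bounds transfer to $G$.

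The delicate point is the structural claim: the algorithm cuts only one side of each ball boundary, and the separation of SCCs hinges on the observation that strong connectivity requires reachability in both directions, so a one-sided boundary barrier already suffices to isolate a region. The remaining work is straightforward bookkeeping to confirm that the top-level region is contained in $Y$ even though $V$ keeps shrinking during Phase (II), and to handle the degenerate case in which Phase~(I) empties the graph before Phase~(II) is ever reached (in which case the top-level region is empty and only the recursive part of the argument is needed).
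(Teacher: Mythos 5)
Your proof follows the same structure as the paper's: induct on graph size, argue that SCCs of $G \setminus S$ are confined to single regions (a peeled ball handled by the inductive hypothesis, or the top-level survivors handled via the $Y$-ball bound and the triangle inequality). You add a useful clarification the paper leaves implicit, namely that the weak diameter bound transfers from subgraph distances to distances in the top-level $G$ because $G[B^{\pm}(v,r)]$ and $G^\star$ are subgraphs of $G$. This is the right approach.

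One step is stated imprecisely. You assert that in subcase (i), since $\delta^+(B^+(v,r)) \subseteq S$, ``in $G \setminus S$ no path can leave this ball and hence $u$ cannot reach any node outside it.'' But $\delta^+(B^+(v,r))$ is computed in the \emph{current} (already-shrunk) working graph, not in the original $G$: at the moment a ball $B_s$ is cut, all nodes from earlier-peeled balls $B_1, \dots, B_{s-1}$ have already been removed, so edges from $B_s$ into those earlier balls are \emph{not} placed into $S$. Hence paths out of $B_s$ into earlier regions may well exist in $G \setminus S$, and the stated reachability claim is false as written. The conclusion (SCC separation) is still true, but needs a slightly different argument: e.g.\ consider any cycle in $G \setminus S$ and look at the region of smallest timestamp it visits; if that region was cut via (i) the cycle must exit it into a later region through a cut edge, and if via (ii) the cycle must enter it from a later region through a cut edge --- contradiction either way, so every cycle, and hence every SCC, is contained in a single region. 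You do gesture at the right intuition in your closing remark about one-sided barriers sufficing for SCC separation, and the paper's own proof asserts this separation step without any justification at all, so this is a modest fixable imprecision rather than a fundamental gap.
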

\begin{proof}
We prove the statement by induction. The base case is clear for an appropriate implementation of constant-size graphs. For the inductive step, consider an execution of \cref{alg:det} and an arbitrary pair of nodes $u, w$. Whenever the algorithm cuts a ball $B^+(v, r)$ and we have $u \in B^+(v, r)$ and $w \not\in B^+(v, r)$, then $u$ and $w$ do not end up in the same strongly connected component in~$G \setminus S$. If instead both $u, w \in B^+(v, r)$ then the claim follows by induction as the algorithm recurses on the subgraph induced by $B^+(v, r)$. The only remaining case is if both $u$ and $w$ are never cut during the execution of the algorithm. Clearly the algorithm cannot have terminated after Phase (I) (as then there would be no nodes left in the graph), so the algorithm has reached Phase (II). Moreover, we have $u, w \in Y$ as otherwise the algorithm would not have terminated yet. But then by definition, we have $d_G(v, u) \leq 2D'$, $d_G(u, v) \leq 2D'$, $d_G(w, v) \leq 2D'$ and $d_G(v, w) \leq 2D'$. Putting all these together we have that $d_G(u, w) \leq 4D' \leq D$ and $d_G(w, u) \leq 4D' \leq D$ as claimed.
\end{proof}

\begin{lemma}[Running Time of \cref{alg:det}] \label{lem:ldd-det-time}
\cref{alg:det} runs in time $\Order(m \log \vol(V))$.
\end{lemma}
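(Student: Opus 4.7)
The plan is a standard ``work-per-recursion-level times depth'' argument. I would first bound the time spent at a single recursive invocation of \cref{alg:det} (excluding the time spent inside its recursive calls) in terms of the number $m'$ of edges currently in its input graph, and then bound the depth of the recursion and combine.

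For the per-level bound, the crucial observation is that whenever \cref{lem:sparse-cut-det} returns case~(i) or~(ii), the running time is linear in the number of edges incident to the returned ball $B^{+}(v,r)$ or $B^{-}(v,r)$, and this entire ball together with all of its incident edges is immediately removed from the current graph---internal edges descend into the recursive call, crossing edges are cut. The cost of each such iteration can therefore be charged to the removed edges, and since the balls removed across all Phase~(I) and Phase~(II) iterations at one recursion level act on pairwise vertex-disjoint subsets of the current graph (each iteration operating on the graph obtained after deleting all previously removed balls), this accounts for $O(m')$ total work. The remaining costs at this level are the single Phase~(I) iteration that returns case~(iii), which costs $O(m')$ by \cref{lem:sparse-cut-det}, and the Phase~(II) setup that computes $X$ and $Y$ by forward and backward BFS from $z$ up to radius $2D'$, another $O(m')$. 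Hence one recursion level runs in $O(m')$ time in total.

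For the depth, every recursive call is on an induced subgraph $G[B^{+}(v,r)]$ or $G[B^{-}(v,r)]$, and \cref{lem:sparse-cut-det} guarantees $c(B^{\pm}(v,r)) \le 0.95 \cdot \vol(V)$. Since the induced subgraph has volume at most this quantity, the input volume shrinks by a constant factor along any root-to-leaf path in the recursion tree, and the depth is $O(\log_{1/0.95} \vol(V)) = O(\log \vol(V))$. Subgraphs at any fixed depth of the recursion tree are vertex-disjoint, so their edge counts sum to at most $m$; multiplying the per-level bound by the depth therefore gives a total running time of $O(m \log \vol(V))$, as claimed.

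The main point that requires careful verification is the vertex-disjointness of the balls removed within a single recursion level, which is what enables the amortized charging. In Phase~(II) the start node $u$ is chosen outside $Y$ but the ball $B^{\pm}(u,r)$ may enter $Y$; this does not cause any double-charging, because each iteration operates on the graph remaining after all previously removed balls have been deleted, so the balls are disjoint by construction. Once this bookkeeping is in place, the remainder is a routine divide-and-conquer calculation.
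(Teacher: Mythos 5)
Your proposal is correct and follows essentially the same route as the paper's proof: charge the local work of each cut to the edges removed together with the ball, account an additional $\Order(m)$ per recursion level for the case-(iii) call and for computing $X$ and $Y$, and bound the recursion depth by $\Order(\log \vol(V))$ using the $0.95 \cdot \vol(V)$ guarantee from \cref{lem:sparse-cut-det} together with the disjointness of subproblems at each level.
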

\begin{proof}
Consider one execution of \cref{alg:det}. We spend time $\Order(m)$ once to compute the sets $X$ and $Y$ at the beginning of Phase (II). Other than that, all steps only run in local fragments of the graph. Specifically, whenever the algorithm cuts a ball $B^+(v, r)$ we spend time $\Order(|\delta^+(B^+(v, r))|)$ by \cref{lem:sparse-cut-det} (i.e., time proportional to the number of edges incident to $B^+(v, r)$), but then we delete all nodes in $B^+(v, r)$ (and thereby also all edges in $\delta^+(B^+(v, r))$). We can thus express the running time $T(m)$ by the recurrence
\begin{equation*}
	T(m, C) \leq \Order(m) + \sum_i T(m_i, C_i),
\end{equation*}
where $m_i$ is the number of edges and $C_i$ is the total capacity of the $i$-th recursive call. Clearly we have that $\sum_i m_i \leq m$. Moreover, we only cut balls with $c(B^+(v, r)) \leq 0.95 \cdot \vol(V)$ by \cref{lem:sparse-cut-det} (and similarly for the in-balls $B^-(v, r)$). This drop in capacity bounds the recursion depth by $\Order(\log \vol(V))$ and so the recursion solves to $\Order(m \log |V|)$.
\end{proof}

This completes the analysis of \cref{alg:det} and puts us in the position of completing the proof of \cref{thm:main-det}.

\begin{proof}[Proof of \cref{thm:main-det}]
As before, we first turn the given graph into a unit-length graph which only blows up the number of nodes and edges by a factor of $D$ (all edges with length $> D$ can anyways be removed for free). Then, by \cref{lem:mwu} it suffices to design a deterministic algorithm for the cost minimization problem (indeed, this algorithm is then turned into a deterministic LDD algorithm by \cref{lem:mwu}, at the cost of another factor-$\Order(D \log n)$ blow-up). Finally, we can assume that~\makebox{$D \geq \log \vol(V)$} in the remaining task, as otherwise it is within our budget to simply remove all edges.

To solve the cost minimization problem, we run \cref{alg:det}. Let $S \subseteq E$ denote the edges cut by \cref{alg:det}. Then, by \cref{lem:ldd-det-correctness} indeed the remaining graph $G \setminus S$ has (weak) diameter at most~$D$. By \cref{lem:ldd-det-cost} the total capacity of the cut edges is $c(S) \leq c(E) \cdot \Order(\frac{1}{D} \cdot \log\vol(V) \log\log\vol(V))$. And by \cref{lem:ldd-det-time} the running time is near-linear in the number of edges. All three properties are as required by \cref{lem:mwu}, ultimately leading to an LDD algorithm with loss $\Order(\log n \log\log n)$ (using that $\log \vol(V) = \Order(\log n)$) and in time $\widetilde\Order(m D^2)$.
\end{proof}
\section{Near-Optimal LDDs in Near-Linear Time} \label{sec:ldd-fast}
In this section we show that near-optimal LDDs can even be computed in near-linear (i.e., near-optimal) running time. Formally, the goal of this section is to prove the last of our main theorems:

\thmMainFast*

We structure this section as follows: In \cref{sec:ldd-fast:sec:cutting} we first prove an important cutting lemma (\cref{lem:cut-light}), which we then apply in \cref{sec:ldd-fast:sec:algo} to prove \cref{thm:main-fast}.

\subsection{Cutting Lemma} \label{sec:ldd-fast:sec:cutting}
We rely on the following result due to Cohen that we can approximate efficiently the sizes of out-balls $\Bout(v, r)$ simultaneously for all nodes $v$. Note that we can equally approximate the size of the in-balls $\Bin(v, r)$ by applying \cref{lem:cohen} on the reverse graph.

\begin{lemma}[Cohen's Algorithm~{{\cite[Theorem~5.1]{Cohen97}}}] \label{lem:cohen}
Let $G$ be a directed weighted graph, let $r \geq 0$ and~\makebox{$\epsilon > 0$}. There is an algorithm that runs in time $\Order(m \epsilon^{-2} \log^3 n)$ and computes approximations~$b(v)$ satisfying with high probability that
\begin{equation*}
    (1 - \epsilon) |\Bout(v, r)| \leq b(v) \leq (1 + \epsilon) |\Bout(v, r)|.
\end{equation*}
\end{lemma}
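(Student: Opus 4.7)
The plan is to follow the classical min-hash / bottom-$k$ sketching paradigm of Cohen. Assign each node $u \in V$ an independent rank $\pi(u)$ drawn from an exponential (or uniform $[0,1]$) distribution. For a set $S \subseteq V$, the $k$ smallest ranks in $S$ form its bottom-$k$ sketch, and standard order-statistics give an unbiased estimator of $|S|$ of the form $\widehat{|S|} = (k-1)/\sum_{i=1}^k \pi_{(i)}$ (for exponentials) whose relative error concentrates like $1/\sqrt{k}$. Setting $k = \Theta(\epsilon^{-2} \log n)$ and applying a Chernoff-type tail bound for sums of i.i.d.\ exponentials yields a $(1\pm\epsilon)$-approximation with probability $1 - n^{-\Omega(1)}$ for any fixed node $v$, and a union bound over the $n$ nodes gives the simultaneous high-probability guarantee. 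So the estimation problem reduces to computing, for every node $v$ simultaneously, the bottom-$k$ sketch $\mathrm{Btm}_k(v)$ of $\Bout(v,r)$.

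The algorithmic heart is to compute all $n$ sketches in $\widetilde{O}(mk)$ total time. The idea is to process the \emph{sources} $u \in V$ in increasing order of $\pi(u)$ and, for each such $u$, run a \emph{reverse} truncated Dijkstra from $u$ of radius $r$, which identifies exactly the set $\{v : u \in \Bout(v,r)\}$, and insert $\pi(u)$ into $\mathrm{Btm}_k(v)$ for every visited $v$. The crucial pruning rule is: while performing the reverse search, whenever we reach a node $v$ whose current sketch already contains $k$ ranks, we stop the exploration at $v$. This is safe because all $k$ ranks currently stored at $v$ are strictly smaller than $\pi(u)$ (we process in sorted order), so $\pi(u)$ cannot enter $\mathrm{Btm}_k(v)$; moreover, for any predecessor $v'$ of $v$ that routes through $v$, the same $k$ ranks are also available through paths of no larger length, so $\pi(u)$ is irrelevant there too. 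Each sketch is maintained as a size-$k$ max-heap, and the per-source reverse Dijkstra uses a standard priority queue.

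The main obstacle is bounding the total work via a charging argument. I would argue that each node $v$ is the target of at most $k$ productive visits overall (once its sketch is full with smaller ranks the pruning rule fires forever after), and therefore every edge $(u',v)$ is relaxed at most $O(k)$ times across all reverse searches. With $O(\log n)$ per priority-queue operation and $O(\log k)$ per heap update in the sketch, the total time is $O(mk \log n) = O(m\epsilon^{-2}\log^2 n)$; the extra $\log n$ factor in the stated $\log^3 n$ bound is absorbed by either standard median-of-means amplification for the success probability, or by a binary-search wrapper needed when $r$ is not an integer in a weighted graph. Correctness of the pruning (the invariant that we never discard a rank that could ever enter some sketch) and a careful implementation of the truncated reverse Dijkstra so that nodes are not repeatedly extracted are the two technical points where the argument must be made rigorous.
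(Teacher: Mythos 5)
First, note that the paper does not prove this lemma at all: it is imported verbatim from Cohen's size-estimation framework~\cite[Theorem~5.1]{Cohen97}, so the only fair comparison is against Cohen's own argument, which your plan essentially tries to reconstruct (random ranks, bottom-$k$/min-rank sketches of $\Bout(v,r)$, sources processed in increasing rank order via pruned reverse truncated Dijkstras). At that level your outline is the right one.

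However, there is a genuine gap in the step you yourself flag as delicate: the pruning rule, as you state and justify it, is wrong. You prune the reverse search from source $u$ at any node $v$ whose sketch already holds $k$ smaller ranks, arguing that a predecessor $v'$ routing through $v$ sees ``the same $k$ ranks \dots through paths of no larger length.'' That is false for a fixed radius $r$: the witnesses $w$ in $v$'s sketch satisfy $d(v,w)\leq r$, but $v'$ reaches them only at length $d(v',v)+d(v,w)$, which can exceed $r$ even though $d(v',u)=d(v',v)+d(v,u)\leq r$. Concretely, take $v' \to v \to u_1,\dots,u_t$ with $d(v',v)=d(v,u_i)=r/2$, and give $v$ an additional $k$ out-neighbors $w_1,\dots,w_k$ at distance exactly $r$ carrying the smallest ranks. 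The $w_i$ fill $v$'s sketch but lie outside $\Bout(v',r)$; every later search from a $u_i$ is pruned at $v$, so $v'$'s sketch ends up with $O(1)$ entries while $|\Bout(v',r)|=t+2$, and the estimate can be off by an arbitrarily large factor. The fix requires a real change, not just more care: either annotate sketch entries with their distances and prune the search from $u$ at a node $v$ popped at distance $d$ only if $v$ already stores $k$ smaller ranks \emph{at distances at most $d$} (the all-distances bottom-$k$ sketch; with this condition your ``routing through $v$'' argument does go through, since then $d(v',w_i)\leq d(v',v)+d(v,u)=d(v',u)\leq r$), or follow Cohen's original route of $\Order(\epsilon^{-2}\log n)$ independent rank assignments, each computing only the single least-rank element per node with the distance-conditioned pruning, and average the $\Order(\epsilon^{-2}\log n)$ minima; either variant yields the stated $\Order(m\epsilon^{-2}\poly\log n)$ bound. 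Two smaller inaccuracies: the exponential-rank estimator should be based on the $k$-th smallest rank (e.g.\ $(k-1)/\pi_{(k)}$-type estimators), not the sum of the $k$ smallest, and no binary-search wrapper over $r$ is needed in weighted graphs --- a truncated Dijkstra handles real radii directly.
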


The goal of this subsection is to establish the following lemma:

\begin{lemma}[Cutting Light Nodes] \label{lem:cut-light}
Let $G = (V, E)$ be a directed weighted graph, and consider the parameters $\delta > 0$, $0 \leq r_0 < r_1$, $1 \leq s_1 < s_0 \leq m$. There is an algorithm that computes a set of cut edges $S \subseteq E$ and sets of remaining vertices $R \subseteq V$ such that:
\begin{enumerate}[label=(\roman*)]
    \item Each strongly connected component in $G \setminus S$ either only contains nodes from $R$ or only nodes from $V \setminus R$. In the latter case it contains at most $\frac32 \cdot \frac{m}{s_1}$ edges.
    \item For every node $v \in R$ with $|\Bout_G(v, r_1)| \leq \frac{m}{s_1}$, it holds that $|\Bout_{G[R]}(v, r_0)| \leq \frac{m}{s_0}$.
    \item None of the edges in $S$ has its source in $R$ (i.e., $S \subseteq (V \setminus R) \times V$). Moreover, for each edge $e = (x, y) \in E$ it holds that:
    \begin{equation*}
        \Pr(e \in S \mid x \not\in R) \leq \frac{w_e \ln(2s_0 / \delta)}{r_1 - r_0}.
    \end{equation*}
\end{enumerate}
With probability at most $\delta$ the algorithm instead returns ``fail'', and with probability at most~$\frac{1}{\poly(n)}$ the algorithm returns an arbitrary answer. The running time is~\smash{$\Order(m \log^4 n)$}.
\end{lemma}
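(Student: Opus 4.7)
The plan is to port the classical Bartal-style random-radius LDD procedure to the directed, two-scale setting of the lemma, using Cohen's algorithm (\cref{lem:cohen}) as a cheap global oracle for out-ball sizes. Property~(iii) asks for a per-edge cutting probability of the form $\lambda w_e$ with $\lambda = \ln(2s_0/\delta)/(r_1-r_0)$, so the natural move is to sample cut radii from a truncated exponential of rate $\lambda$ on $[0, r_1-r_0]$, shifted by $r_0$: this gives the required per-edge bound (the density is $\leq \lambda$) and a per-iteration truncation-failure probability of $e^{-\lambda(r_1-r_0)} = \delta/(2s_0)$.

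\textbf{Algorithm.} First I would call \cref{lem:cohen} twice on $G$, with radii $r_1$ and $r_0$ at a small constant accuracy $\epsilon$, to obtain estimates $b_1(v)$ and $b_0(v)$. With an appropriately calibrated threshold, declare $v$ \emph{light} if $b_1(v)$ lies below a cut-off that marks every truly-light node ($|\Bout_G(v,r_1)| \leq m/s_1$) as light and keeps every marked-light node at $|\Bout_G(v,r_1)| \leq \frac{3}{2} m/s_1$. Form the \emph{candidate set} $C := \{v \text{ light}: b_0(v) > \text{threshold}\}$, calibrated so that every $v \notin C$ already satisfies $|\Bout_G(v,r_0)| \leq m/s_0$. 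Initialize $R := V$ and $S := \emptyset$; then process each $v \in C$ in arbitrary order: if $v \notin R$, skip; otherwise BFS in $G[R]$ from $v$, truncated at depth $r_0$ and at $m/s_0$ vertices, and skip if the vertex cap is not reached. Otherwise sample $\rho$ from the truncated exponential of rate $\lambda$ on $[0, r_1-r_0]$; if the raw exponential exceeds $r_1-r_0$, output \texttt{fail}; else extend the BFS to depth $r_0+\rho$, set $B := \Bout_{G[R]}(v, r_0+\rho)$, add $\delta^+(B)$ to $S$, and delete $B$ from $R$.

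\textbf{Correctness.} Property (iii) is immediate from the construction: cut edges leave a deleted ball, so their source is in $V \setminus R$, and any $e = (x,y)$ is cut in the iteration around $v$ exactly when $r_0+\rho \in [d(v,x), d(v,y))$, an interval of length $\leq w_e$, which has probability at most $\lambda w_e$. For (ii), if some light $v \in R$ at termination had $|\Bout_{G[R]}(v, r_0)| > m/s_0$, then also $|\Bout_G(v, r_0)| > m/s_0$ and hence $v \in C$ by the calibration, so $v$ was processed; each of the three outcomes (removed, skipped under the vertex cap, or cut) contradicts $v$'s final state, using that $G[R]$-balls only shrink over time. For (i), the standard laminar argument on ball-cuts shows that any SCC of $G \setminus S$ meeting a deleted ball $B_t$ must be contained in $B_t$, since escaping $B_t$ inside $G \setminus S$ would require crossing some cut edge in $\delta^+(B_t) \subseteq S$; lightness of the center together with the Cohen calibration then bounds its size by $\frac{3}{2} m/s_1$.

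\textbf{Running time and the main obstacle.} Each real cut removes more than $m/s_0$ vertices from $R$, bounding the number of cuts by $n s_0/m \leq s_0$, so the union bound controls the total truncation-failure probability by $\delta/2$, leaving room for Cohen's high-probability correctness. The two initial Cohen calls cost $\Order(m \log^4 n)$, and the BFS work across \emph{successful} cuts amortizes to $\Order(1)$ per deleted edge and hence $\Order(m)$ in total. The delicate part, which I expect to be the main obstacle, is the BFS work wasted on \emph{skipped} candidates, each of which may probe up to $\Order(m/s_0)$ edges; the naive estimate $|C| = \widetilde\Order(n^2 s_0/m)$ is much too weak. I would address this by organizing processing into $\Order(\log n)$ rounds, each preceded by a fresh Cohen call on $G[R]$ so that only truly currently-dangerous nodes enter the round's candidate list, and by triggering a new round only once $|R|$ has halved since the previous one; this introduces $\Order(\log n)$ extra Cohen calls for $\widetilde\Order(m \log^5 n)$ total in the rough analysis, which can plausibly be tightened to the stated $\Order(m \log^4 n)$ by reusing Cohen sketches across rounds or by enforcing a strict per-BFS edge budget of $m/s_0$.
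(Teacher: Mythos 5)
Your construction matches the paper's on the core mechanism: a Cohen-based classification of nodes, cutting balls with memorylessly-sampled radii at rate $\ln(2s_0/\delta)/(r_1-r_0)$, the laminar argument for property~(i), and the monotonicity-based arguments for (ii) and (iii). The per-edge bound in (iii), the $\delta/(2s_0)$ per-cut failure probability, and the bound of at most $\Order(s_0)$ successful cuts are all correct and match the paper.

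However, the running-time argument has a genuine gap, and you correctly flag it as the main obstacle, but your proposed fix does not close it. Triggering a new round (and a fresh Cohen call) only when $|R|$ halves does not bound the number of wasted BFS probes: within a single round the candidate set $C$ can have size $\Theta(n)$, and a single cut can simultaneously deflate the out-balls of almost all remaining candidates to just below $m/s_0$ (so they get probed at full $\Theta(m/s_0)$ cost, then skipped) without $|R|$ shrinking appreciably. In the worst case this gives $\Theta(nm/s_0)$ work in one round, and halving $|R|$ is the wrong progress measure to prevent it.

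The paper's fix rests on two ideas you are missing. First, \emph{budget the probes explicitly}: the inner loop performs only $\Order(s_0\log n)$ probes per round, so the wasted BFS work per round is $\Order(s_0\log n\cdot \frac{m}{s_0}\log n)=\Order(m\log^2 n)$ regardless of $|C|$. Second, \emph{sample the probe targets uniformly at random from the current candidate set}, and measure progress by the size of the \emph{candidate set} (the ``good'' nodes), not by $|R|$. The argument is then: if, after the round's $\Order(s_0\log n)$ random probes, more than half of the surviving candidates still have an out-ball of size $\geq\frac12\frac{m}{s_0}$, then each random probe succeeded with probability $\geq\frac14$, so w.h.p.\ there were $\gg s_0$ successful cuts --- contradicting the fact that at most $\Order(s_0)$ cuts are ever possible. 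Hence at least half of the surviving candidates have shrunk out-balls, and the end-of-round Cohen call declassifies them, halving $|C|$. This yields $\log n+1$ rounds and the stated $\widetilde\Order(m)$ bound. Without the random sampling, a deterministic scan (your ``arbitrary order'') can stall on stale candidates indefinitely; without the probe budget, the stale probes dominate the running time.
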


\begin{proof}
Consider the pseudocode in \cref{alg:cut-light}. Throughout we maintain a set of cut edges~\makebox{$S \subseteq E$} that is initially empty. We classify nodes as \emph{good} or \emph{bad} depending on whether their out-balls~$\Bout(v, r_0)$ and~$\Bout(v, r_1)$ satisfy certain size requirements. Then, in several iterations the algorithm samples a good node $v$ and attempts to \emph{cut} around $v$ in the sense that we sample a radius~\smash{$r_0 \leq r \leq r_1$}, include the boundary edges $\delta^+(\Bout(v, r))$ into $S$, and then remove $\Bout(v, r)$ from the graph.

\begin{algorithm}[t]
\caption{Implements the cutting procedure from \cref{lem:cut-light}.} \label{alg:cut-light}
\begin{enumerate}[label=\arabic*.]
    \item Run Cohen's algorithm (\cref{lem:cohen} with parameter $\epsilon = \frac18$) on $G$ to compute approximations~$b_1(v)$ satisfying that $\tfrac78 \cdot |\Bout(v, r_1)| \leq b_1(v) \leq \tfrac98 \cdot |\Bout(v, r_1)|$. Mark all nodes $v$ satisfying that $b_1(v) \leq \frac98 \cdot \frac{m}{s_1}$ as \emph{good} and all other nodes as \emph{bad.}
    \item Repeat the following steps $\log n + 1$ times:
    \begin{enumerate}[label=2.\arabic*.]
        \item Repeat the following steps $s_0 \cdot 100 \log n$ times:
        \begin{enumerate}[label=2.1.\arabic*.]
            \item Among the nodes that are marked good, pick a uniformly random node $v$.
            \item Test whether $|\Bout(v, r_0)| \geq \frac{1}{2} \cdot \frac{m}{s_0}$, and otherwise continue with the next iteration of step (b).
            \item Sample $r \sim r_0 + \geom(p)$ (where~\smash{$p = \frac{\ln(2s_0 / \delta)}{r_1 - r_0}$}) and compute $B^+(v, r)$.
            \item If $r > r_1$, then return ``fail''.
            \item Update $S \gets S \cup \delta^+(B^+(v, r))$, mark the nodes in $B^+(v, r)$ as bad, and remove these nodes from $G$.
        \end{enumerate}
        \item Run Cohen's algorithm (\cref{lem:cohen} with parameter $\epsilon = \frac18$) on $G$ to compute approximations $b_0(v)$ satisfying that $\tfrac78 \cdot |\Bout(v, r_0)| \leq b_0(v) \leq \tfrac98 \cdot |\Bout(v, r_0)|$. Mark all nodes $v$ satisfying that $b_0(v) < \frac78 \cdot \frac{m}{s_0}$ as bad.
    \end{enumerate}
    \item Let $R$ denote the set of remaining nodes in $G$, and return $(S, R)$.
\end{enumerate}
\end{algorithm}

In the following we formally analyze \cref{alg:cut-light}. Throughout we often condition on the event that Cohen's algorithm is correct as this happens with high probability. (Note however that we cannot necessarily detect that Cohen's algorithm has erred, and thus the algorithm may return an arbitrary answer in this rare case.)

\paragraph{Correctness of (i).}
Note that whenever the algorithm cuts a ball $\Bout(v, r)$ from the graph~$G$ and adds the boundary edges $\delta^+(\Bout(v, r))$ to $S$, it makes sure that the nodes inside and outside~$\Bout(v, r)$ lie in different strongly connected components in $G \setminus S$. As $R$ is defined to be the left-over nodes that are never cut, and as we only cut around \emph{good} nodes $v$, it suffices to prove the following claim: With high probability, at any time during the execution of the algorithm it holds that $|\Bout_G(v, r_1)| \leq \frac32 \cdot \frac{m}{s_1}$ for all good nodes $v$.

To this end, observe that we only mark nodes as good in the first step, and in this step we only mark nodes as good when $b_1(v) \leq \frac98 \cdot \frac{m}{s_1}$. By the guarantee of \cref{lem:cohen}, we therefore initially have that
\begin{equation*}
    |\Bout_G(v, r_1)| \leq \frac98 \cdot b_1(v) \leq \frac{9^2}{8^2} \cdot \frac{m}{s_1} \leq \frac{3}{2} \cdot \frac{m}{s_1}.
\end{equation*}
Finally, as we only remove nodes and edges from $G$, the size of $|\Bout_G(v, r_1)|$ can only decrease throughout the remaining execution.

\paragraph{Correctness of (ii).}
In order to prove the correctness of (ii), we crucially rely on the following claim:

\begin{claim}
When the algorithm terminates (without returning ``fail''), with high probability all nodes are marked bad.
\end{claim}
\begin{proof}
The idea is to prove that with each iteration of step 2 the number of good nodes halves. As the algorithm runs $\log n + 1$ iterations of step 2, the claim then follows.

Focus on any iteration of step 2. Let $M_0$ denote the initial set of good nodes, let $M_1, \dots, M_k$ denote the subsets of nodes that are still marked good after the respective $k = s_0 \cdot 100 \log n$ iterations of step~2.1, and let $M^*$ denote the good nodes after executing step~2.2. We clearly have that~\makebox{$M_0 \supseteq M_1 \supseteq \dots \supseteq M_k \supseteq M^*$}. Our goal is to show that $|M^*| \leq \frac12 \cdot |M_0|$. The claim is clear if~\makebox{$|M_k| \leq \frac12 \cdot |M_0|$}, so assume otherwise. We argue that after completing step~2.1, with high probability at least half of the nodes $v \in M_k$ satisfy that $|\Bout_G(v, r_0)| < \tfrac12 \cdot \tfrac{m}{s_0}$. This implies the claim as then we will mark at least half of the nodes in $M_k$ as bad while executing step 2.2. To prove this, suppose that instead more than half of the nodes $v \in M_k$ satisfy that~\smash{$|\Bout_G(v, r_0)| \geq \tfrac12 \cdot \tfrac{m}{s_0}$}. In particular, there are least $\frac{|M_k|}{2} \geq \frac{|M_0|}{4}$ such nodes. But this means that every iteration of step 2.1 \emph{succeeds} (in the sense that the algorithm samples a node $v$ passing the condition in 2.1.2) with probability at least $\frac14$. As there are $k = s_0 \cdot 100 \log n$ repetitions, with high probability at least, say,~$10 s_0$ repetitions succeed. However, every time this happens the algorithm either removes at least $\frac12 \cdot \frac{m}{s_0}$ edges from the graph or fails and stops. Thus, the total number of repetitions can be at most $2 s_0$, leading to a contradiction.
\end{proof}

With this claim in mind, we return to the correctness of Property (ii). Each node $v \in R$ has been classified as bad at some point during the execution. This could have happened in step 1 (if~\makebox{$b_1(v) > \frac98 \cdot \frac{m}{s_1}$}), or in step 2.2 (if $b_0 < \frac78 \cdot \frac{m}{s_0}$), but not in step 2.1.5 as otherwise $v$ would not have ended up in $R$. Conditioning on the correctness of Cohen's algorithm, in the former case we have that
\begin{equation*}
    |\Bout_G(v, r_1)| \geq \frac89 \cdot b_1(v) > \frac89 \cdot \frac98 \cdot \frac{m}{s_1} = \frac{m}{s_1} 
\end{equation*}
(where $G$ is the initial graph). In the latter case we have that
\begin{equation*}
    |\Bout_G(v, r_0)| \leq \frac87 \cdot b_0(v) < \frac87 \cdot \frac78 \cdot \frac{m}{s_0} = \frac{m}{s_0}
\end{equation*}
(where $G$ is the current graph), and thus also~\smash{$|\Bout_{G[R]}(v, r_0)| < \frac{m}{s_0}$} for the set $R$ returned by the algorithm. The statement~(ii) follows.

\paragraph{Correctness of (iii).}
It is clear that the source nodes of all cut edges in $S$ cannot be in~$R$. Let $e = (x, y) \in E$ denote an arbitrary edge; we show that the probability~\makebox{$\Pr(e \in S \mid x \not\in R)$} is as claimed. Note that we can only have~\makebox{$x \not\in R$} if there is some iteration of step 2 that selects a node~$v$ and samples radius $r \sim r_0 + \geom(p)$ such that~\makebox{$x \in \Bout(v, r)$}. In this iteration we might include~$e$ into $S$ if it happens that~\makebox{$e \in \delta^+(\Bout(v, r))$}. After this iteration, however, we remove the edge~$e$ from the graph and the algorithm will never attempt to include $e$ into $S$ later on. Therefore, we can upper bound the desired probability by
\begin{flalign*}
    &\Pr(e \in S \mid x \not\in R) \\
    &\qquad\qquad\leq \max_{G, v} \Pr_{r \sim r_0 + \geom(p)}(e \in \delta^+_G(\Bout_G(v, r)) \mid x \in \Bout_G(v, r)) \\
    &\qquad\qquad= \max_{G, v} \Pr_{r \sim \geom(p)}(d_G(v, y) > r_0 + r \mid d_G(v, x) \leq r_0 + r) \\
    &\qquad\qquad \leq \max_{G, v} \Pr_{r \sim \geom(p)}(d_G(v, x) + w_e > r_0 + r \mid d_G(v, x) \leq r_0 + r)
\intertext{We will now exploit the memorylessness of the geometric distribution: Conditioning on the event that $r \geq d_G(v, x) - r_0$, the random variable $r - (d_G(v, x) - r_0)$ behaves like a geometric random variable from the same original distribution (assuming that $d_G(v, x) - r_0 \geq 0$; otherwise we can simply drop the condition). Therefore:}
    &\qquad\qquad \leq \Pr_{r \sim \geom(p)}(r < w_e) \\
    &\qquad\qquad \leq p \cdot w_e.
\end{flalign*}
Recall that~\smash{$p = \frac{\ln(2s_0 / \delta)}{r_1 - r_0}$}, so the correctness follows.

\paragraph{Failure Probability.}
Next, we analyze that the algorithm returns ``fail''. Recall that this only happens if in steps 2.1.3 and 2.1.4 we sample a radius $r \sim r_0 + \geom(p)$ that satisfies~$r > r_1$. This event happens with probability
\begin{align*}
    \Pr_{r \sim r_0 + \geom(p)}(r > r_1) &= \Pr_{r \sim \geom(p)}(r > r_1 - r_0) \\
    &\leq (1 - p)^{r_1 - r_0} \\
    &\leq \exp(-p(r_1 - r_0)) \\
    &= \exp(-\ln(2s_0) / \delta) \\
    &= \frac{\delta}{2s_0}.
\end{align*}
Whenever the algorithm does not return ``fail'' in step 2.1.4, we are guaranteed to remove at least~$\frac{1}{2} \cdot \frac{m}{s_0}$ edges from the graph $G$ in step~2.1.5. In particular, there can be at most $2s_0$ repetitions of step~2.1.5 and thus at most trials of step 2.1.4. By a union bound it follows that the algorithm indeed returns ``fail'' with probability at most $\delta$.

\paragraph{Running Time.}
Finally, we analyze the running time. We run \cref{lem:cohen} once in step~1, and $\Order(\log n)$ times in step 2.2. Each call runs in time $\Order(m \log^3 n)$, so the total time spent for \cref{lem:cohen} is~$\Order(m \log^4 n)$. There are $\Order(s_0 \log^2 n)$ iterations of step 2.1. In each iteration, we spend time $\Order(1)$ for steps 2.1.1 and 2.1.4. For step 2.1.2 we spend time $\Order(\frac{m}{s_0} \log n)$---indeed, we can test whether~\smash{$|\Bout(v, r_0)| \geq \frac{1}{2} \cdot \frac{m}{s_0}$} by Dijkstra's algorithm, and if the time budget is exceeded it is clear that the bound holds. In steps 2.1.3 and 2.1.5 we spend time proportional to the size of $\Bout(v, r)$, but since we remove the nodes and edges in $\Bout(v, r)$ from $G$ afterwards in total we spend only time $\Order(m \log n)$ in these steps. Therefore, the total running time spend in step~2.1 is~\smash{$\Order(s_0 \log^2 n \cdot \frac{m}{s_0} \log n + m \log n) = \Order(m \log^3 n)$}.
\end{proof}

\subsection{Near-Linear-Time LDD} \label{sec:ldd-fast:sec:algo}
We are ready to state the LDD algorithm. Throughout this section, we fix the following parameters:
\begin{itemize}
    \item $L = \ceil{\log\log m} + 1$,
    \item \smash{$\delta = \frac{1}{\log^{10} m}$},
    \item $r_0 := 0$ and \smash{$r_\ell := r_{\ell-1} + \frac{D}{2^{L-\ell+3}} + \frac{D}{4 L}$} (for $1 \leq \ell \leq L$),
    \item \smash{$s_\ell := \min(2^{2^{L-\ell}}, m+1)$} (for $0 \leq \ell \leq L$).
\end{itemize}
With these parameters in mind, consider \cref{alg:ldd-fast}. The algorithm first deletes all edges with sufficiently large weight. Then it proceeds in $L$ iterations where in each iteration we apply \cref{lem:cut-light} to cut some edges in the graph (or reverse graph). If any of these calls returns ``fail'', then the entire algorithm restarts. In the following \cref{lem:ldd-fast-correctness,lem:ldd-fast-prob,lem:ldd-fast-time} we will show that this procedure correctly implements the algorithm claimed in \cref{thm:main-fast}.

\begin{algorithm}[t]
\caption{The near-linear-time near-optimal LDD, see \cref{thm:main-fast}.} \label{alg:ldd-fast}
\begin{enumerate}
    \item Initially let $S \subseteq E$ be the set of edges of weight at least $\frac{D}{4L}$, and remove these edges from $G$
    \item For $\ell \gets L, \dots, 1$:
    \begin{enumerate}
        \item[2.1.] Run \cref{lem:cut-light} on $G$ with parameters $\delta, r_{\ell-1}, r_\ell, s_{\ell-1}, s_{\ell}$, and let $S^+_\ell, R^+_\ell$ denote the resulting sets.
        \item[2.2.] Compute the strongly connected components in $(G \setminus S^+_\ell)[V \setminus R^+_\ell]$. Recur on each such component and add the recursively computed cut edges to $S$.
        \item[2.3.] Update $G \gets G[R^+_\ell]$ and $S \gets S \cup S^+_\ell$.
        \item[2.4.] Run \cref{lem:cut-light} on $\rev(G)$ with parameters $\delta, r_{\ell-1}, r_\ell, s_{\ell-1}, s_{\ell}$, and let $S^-_\ell, R^-_\ell$ denote the resulting sets.
        \item[2.5.] Compute the strongly connected components in $(G \setminus S^-_\ell)[V \setminus R^-_\ell]$. Recur on each such component and add the recursively computed cut edges to $S$.
        \item[2.6.] Update $G \gets G[R^-_\ell]$ and $S \gets S \cup S^-_\ell$.
    \end{enumerate}
    \item If any of the $2L$ previous calls to \cref{lem:cut-light} returns ``fail'', then we restart the entire algorithm (i.e., we reset $G$ to be the given graph, we reset $S \gets \emptyset$, and we start the execution from step~1).
\end{enumerate}
\end{algorithm}

\begin{lemma}[Correctness of \cref{alg:ldd-fast}] \label{lem:ldd-fast-correctness}
Let $u, v$ be two nodes lying in the same strongly connected component in $G \setminus S$. Then $d_G(u, v) \leq D$.
\end{lemma}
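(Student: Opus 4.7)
My plan is to proceed by strong induction on $m = |E(G)|$, with the base case for constant-size graphs trivial. For the inductive step, let $u, v$ be two nodes in the same SCC of $G \setminus S$, and split into two cases.

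\textbf{Case 1 (recursion).} If at some iteration $\ell$ either $u$ or $v$ ends up in $V \setminus R^+_\ell$ (the $V \setminus R^-_\ell$ case is symmetric), then Property~(iii) of \cref{lem:cut-light} (together with the observation from \cref{alg:cut-light} that the boundary edges of every cut ball are added to the cut set) ensures that no edge from $V \setminus R^+_\ell$ to $R^+_\ell$ survives in $G \setminus S$. Consequently no SCC of $G \setminus S$ crosses the $R^+_\ell / (V \setminus R^+_\ell)$ boundary, and both $u, v$ must lie in a single SCC $C$ of $(G \setminus S^+_\ell)[V \setminus R^+_\ell]$. Property~(i) gives $|E(C)| \leq \tfrac{3m}{2 s_\ell} < m$ (using $s_\ell \geq 2$), so the inductive hypothesis applied to the recursive call on $C$ yields $d_C(u, v) \leq D$, hence $d_G(u, v) \leq D$.

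\textbf{Case 2 (final graph).} Otherwise both $u, v$ survive every iteration and lie in the final graph $G_0$. The goal is to show $d_{G_L}(u,v) \leq 2 r_L$, where $G_L$ denotes the graph just after the heavy-edge pruning of step~1 (a subgraph of $G$). A direct telescoping computation
\begin{equation*}
    r_L = \sum_{\ell=1}^L \Bigl(\tfrac{D}{2^{L-\ell+3}} + \tfrac{D}{4L}\Bigr) < \tfrac{D}{4} + \tfrac{D}{4} = \tfrac{D}{2}
\end{equation*}
gives $2 r_L < D$, so this suffices. The key sub-claim is that for every surviving node $w$, both $|B^+_{G_L}(w, r_L)|$ and $|B^-_{G_L}(w, r_L)|$ exceed half of the relevant ambient quantity bounded by \cref{lem:cut-light}. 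Granting this, the balls $B^+_{G_L}(u, r_L)$ and $B^-_{G_L}(v, r_L)$ share a common node $x$, giving $d_{G_L}(u,v) \leq d_{G_L}(u,x) + d_{G_L}(x,v) \leq 2 r_L$, as needed.

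The sub-claim itself I would prove by induction on $\ell$ via the contrapositive of Property~(ii) of \cref{lem:cut-light}. The base case is automatic: $|B^+(v, 0)| = 1 > m/s_0$ since $s_0 = m + 1$. The inductive step bootstraps: given that $|B^+_{G^+_\ell}(v, r_{\ell-1})|$ is large, survival $v \in R^+_\ell$ and the contrapositive of Property~(ii) yield that $|B^+_{G_\ell}(v, r_\ell)|$ is large; the in-ball direction is handled symmetrically via step~2.4. \emph{The main obstacle} is that Property~(ii) is stated with respect to the edge count of its input graph, which shrinks from iteration to iteration. Making the cascade of lower bounds meaningful in $G_L$ (and not merely in the shrunken iterate $G_\ell$) requires careful bookkeeping of which $m$ parameterizes each threshold, and I expect this is where the proof gets technically delicate: it may require slightly strengthening the sub-claim so that the threshold at each level is expressed with respect to $|E(G_L)|$ rather than $|E(G_\ell)|$, exploiting the hierarchical structure of the $s_\ell$ to absorb the slack.
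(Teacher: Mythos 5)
Your Case 1 is essentially the paper's argument (one nit: the fact that no SCC of $G\setminus S$ crosses the $R^+_\ell$ boundary is exactly Property~(i) of \cref{lem:cut-light}; Property~(iii) alone does not rule out surviving edges from $V\setminus R^+_\ell$ into $R^+_\ell$). Your Case 2, however, diverges from the paper and has a genuine gap. The paper does not claim that every surviving node has large balls. It proves a dichotomy: every node $v$ remaining at the end is either \emph{heavy}, meaning both $|B^+_{G_0}(v,D/2)|$ and $|B^-_{G_0}(v,D/2)|$ are at least $m/2$ in the \emph{original} graph $G_0$, or it has no outgoing or no incoming edges at all and hence forms a singleton SCC for which there is nothing to prove. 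Heavy pairs are then handled by the ball-intersection argument at radius $D/2$ directly in $G_0$. The dichotomy is established top-down: non-heaviness gives $|B^+_{G}(v,r_L)|\le m/s_L$ or $|B^-_{G}(v,r_L)|\le m/s_L$ at the first iteration (using $r_L\le D/2$ and $s_L=2$), and Property~(ii) is applied in its stated direction to push the \emph{smallness} of one of the two balls down through the iterations until, at $\ell=1$, the threshold drops below $1$.

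Your sub-claim --- that \emph{every} surviving node has both $r_L$-balls exceeding half the relevant total --- is false, so no bookkeeping can rescue the bottom-up induction. Concretely, a node all of whose incident edges have weight at least $D/(4L)$ loses all its edges in step~1 of \cref{alg:ldd-fast}, is never cut around in any iteration, and survives into the final graph with singleton balls. The point where your induction necessarily breaks is precisely the threshold mismatch you flag: the contrapositive of Property~(ii) at iteration $\ell$ requires the hypothesis $|B^+(v,r_{\ell-1})| > m_\ell/s_{\ell-1}$ with $m_\ell$ the edge count of the graph passed to \emph{that} call, whereas the bound you carry up from iteration $\ell-1$ is relative to $m_{\ell-1}\le m_\ell$, and nothing in \cref{lem:cut-light} bounds the ratio $m_\ell/m_{\ell-1}$ (Property~(i) controls only the components recursed on, not how much of the graph survives in $R$). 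Re-expressing the thresholds with respect to $|E(G_L)|$ makes every hypothesis strictly harder while the lemma does not supply the correspondingly stronger conclusion. The repair is to run the induction in the paper's direction, propagating smallness for non-heavy nodes down to the conclusion that they are edgeless, and to add the ``isolated, hence singleton SCC'' branch to the case analysis.
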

\begin{proof}
For clarity, let us denote the original (given) graph $G$ by $G_0$, and let $G$ denote the graph that is being manipulated by the algorithm. We prove the claim by induction. In the base case (when the graph has constant size) the statement can easily be enforced, so focus on the inductive case. Let $u, v$ be two arbitrary nodes.

For any iteration $\ell$, by \cref{lem:cut-light}~(i) it is clear that if $u \in R^+_\ell$ and $v \not\in R^+_\ell$ (or similarly for~$R^-_\ell$), then $u$ and $v$ cannot end up in the same strongly connected component in $G \setminus S$. If both $u, v \not\in R^+_\ell$ (or similarly for $R^-_\ell$) then the claim follows by induction since we recur on $(G \setminus S^+_\ell)[V \setminus R^+_\ell]$. This leaves as the only remaining case that $u, v \in R^+_\ell$. In particular, after completing all $L$ iterations, the only nodes left to consider are the nodes that are still contained in $G$ when the algorithm terminates. Let us call a node $v$ \emph{heavy} if
\begin{equation*}
    |\Bout_{G_0}(v, \tfrac D2)| \geq \frac{m}{2} \qquad\text{and}\qquad |\Bin_{G_0}(v, \tfrac D2)| \geq \frac{m}{2}.
\end{equation*}
In the following paragraph we prove that the only nodes $v$ remaining in $G$ are either (i) heavy or (ii) have no out- or no in-neighbors. In case (ii) clearly $v$ forms a strongly connected component in~$G \setminus S$ on its own. For (i), we claim that for any pair $u, v$ of heavy nodes it holds that $d_{G_0}(u, v) \leq D$---indeed, the out-ball~\smash{$\Bout_{G_0}(u, \frac D2)$} and the in-ball~\smash{$\Bin_{G_0}(v, \frac D2)$} necessarily intersect. It similarly holds that $d_{G_0}(v, u) \leq D$.

To prove the missing claim, let $v$ be a non-heavy node; we show that if $v$ remains in the graph~$G$ then it has no out- or no in-neighbors. By induction we show that
\begin{equation*}
    |\Bout_G(v, r_\ell)| \leq \frac{m}{s_\ell} \qquad\text{or}\qquad |\Bin_G(v, r_\ell)| \leq \frac{m}{s_\ell}.
\end{equation*}
Initially, for $\ell = L$, this is clearly true using that $v$ is not heavy, that $s_L = 2$, and that
\begin{equation*}
    r_L = \sum_{\ell=1}^L \parens*{\frac{D}{2^{L-\ell+3}} + \frac{D}{4L}} \leq \frac{D}{8} \cdot \sum_{k=0}^\infty \frac{1}{2^k} + \frac{D}{4} = \frac{D}{4} + \frac{D}{4} = \frac{D}{2}.
\end{equation*}
So now assume inductively that the claim holds for some $\ell \leq L$. Then \cref{lem:cut-light}~(ii) guarantees that
\begin{equation*}
    |\Bout_{G[R^+_\ell]}(v, r_{\ell-1})| \leq \frac{m}{s_{\ell-1}} \qquad\text{or}\qquad |\Bin_{G[R^-_\ell]}(v, r_{\ell-1})| \leq \frac{m}{s_{\ell-1}}.
\end{equation*}
Since we update $G \gets G[R^+_\ell]$ and $G \gets G[R^-_\ell]$, this is exactly as desired. In iteration $\ell = 1$ we thus have that 
\begin{equation*}
    |\Bout_G(v, r_1)| \leq \frac{m}{s_1} < 1 \qquad\text{or}\qquad |\Bin_G(v, r_1)| \leq \frac{m}{s_1} < 1,
\end{equation*}
that is, $v$ has no outgoing or incoming edges of weight less than $r_1$. Recall that initially we remove all edges of weight larger than $\frac{D}{4L} \leq r_1$, and therefore $v$ truly has no outgoing or incoming edges. This completes the proof.
\end{proof}

\begin{lemma}[Edge Cutting Probability of \cref{alg:ldd-fast}] \label{lem:ldd-fast-prob}
For any edge $e \in E$, we have
\begin{equation*}
    \Pr(e \in S) \leq \Order\parens*{\frac{w_e}{D} \cdot \log n \log\log n + \frac{1}{\poly(n)}}.
\end{equation*}
\end{lemma}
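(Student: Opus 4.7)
The proof is by induction on $m$, establishing the bound $\Pr(e \in S) \leq C \frac{w_e}{D} \log m \log\log m + O(1/\poly(n))$ for a sufficiently large absolute constant $C$; the claim follows since $\log m = O(\log n)$. Edges cut in step~1 (those with $w_e \geq D/(4L)$) satisfy this trivially, since $\frac{w_e}{D} \log n \log\log n \geq \frac{\log n \log\log n}{4L} = \Omega(\log n) \gg 1$; hence the analysis focuses on edges with $w_e < D/(4L)$.

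For the main argument, fix an edge $e = (x,y)$. If it is cut at the top level, this happens in some iteration $\ell \in \{1,\ldots,L\}$ on either the $+$ side (through $S^+_\ell$ or through the recursive call spawned in step~2.2) or the $-$ side (through $S^-_\ell$ or step~2.5). The key structural observation is that the source $x$ can leave $R^\pm_\ell$ at most once across these $2L$ slots, since once $x \notin R^+_\ell$ the node $x$ is deleted from $G$ in step~2.3 (and symmetrically for the $-$ side). Thus, writing $q^\pm_\ell := \Pr(x \notin R^\pm_\ell)$, we have $\sum_{\ell,\pm} q^\pm_\ell \leq 1$. By \cref{lem:cut-light}(iii), $\Pr(e \in S^\pm_\ell \mid x \notin R^\pm_\ell) \leq w_e \ln(2 s_{\ell-1}/\delta)/(r_\ell - r_{\ell-1})$, and by the inductive hypothesis the conditional probability that $e$ is cut inside the recursive subproblem (which contains at most $m_\ell := 3m/(2s_\ell)$ edges by \cref{lem:cut-light}(i)) is at most $C \frac{w_e}{D} \log m_\ell \log\log m_\ell + O(1/\poly(n))$. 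Combining these bounds (using that the event of entering the recursive call is contained in $\{x \notin R^\pm_\ell\}$) yields
\begin{equation*}
\Pr(e \in S) \;\leq\; \max_{\ell} \left[\frac{w_e \ln(2 s_{\ell-1}/\delta)}{r_\ell - r_{\ell-1}} + C \frac{w_e}{D} \log m_\ell \log\log m_\ell\right] + O(1/\poly(n)),
\end{equation*}
where the additive $O(1/\poly(n))$ absorbs the failure probability of Cohen's algorithm summed via a union bound over the $\poly(n)$ invocations of \cref{lem:cut-light}.

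To close the induction it suffices to verify, for each $\ell$, that
\begin{equation*}
\frac{D \ln(2 s_{\ell-1}/\delta)}{r_\ell - r_{\ell-1}} \;\leq\; C\cdot\bigl(\log m \log\log m - \log m_\ell \log\log m_\ell\bigr).
\end{equation*}
Since $m_\ell \leq 3m/2^{2^{L-\ell}+1}$, the right-hand side is of order $C \cdot 2^{L-\ell} \log\log m$. Substituting $s_{\ell-1} = 2^{2^{L-\ell+1}}$, $\delta = 1/\log^{10} m$, and $r_\ell - r_{\ell-1} = D/2^{L-\ell+3} + D/(4L)$, the left-hand-side numerator becomes $O(2^{L-\ell+1} + \log\log m)$ while the denominator is controlled either by $D/2^{L-\ell+3}$ (when $2^{L-\ell+3} \leq 4L$) or by $D/(4L)$ (when $2^{L-\ell+3} > 4L$). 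A direct calculation in each of these two regimes confirms the inequality: the parameters are precisely balanced so that the dominating term in the numerator matches the growth $2^{L-\ell} \log\log m$ on the right.

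The main technical obstacle is this two-regime calculation. Both the per-iteration cut probability (which grows as $\ell \to 1$) and the recursion contribution (which grows as $\ell \to L$) must be controlled simultaneously, and the doubly exponential choice $s_\ell = 2^{2^{L-\ell}}$ is exactly what makes the two contributions trade off against each other so that their sum stays $O(\log m \log\log m)$ across the entire range of $\ell$. Everything else---the structural $\sum_{\ell,\pm} q^\pm_\ell \leq 1$ identity, the handling of step~1 edges, and the additive $1/\poly(n)$ failure term---is largely mechanical once this balance is verified.
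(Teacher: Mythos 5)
Your proposal follows the same overall strategy as the paper's proof (induction on $m$, a $\max_\ell$ combining the per-iteration cutting cost from \cref{lem:cut-light}(iii) with the inductive bound on the recursive subproblem of size~$\frac{3m}{2s_\ell}$), and your two-regime verification of the closing inequality is essentially the computation the paper carries out. However, there is one genuine gap: you account only for Cohen-algorithm failures when claiming the extra error is additive $O(1/\poly(n))$, but the algorithm has a second failure mode that you do not address. Step~3 of \cref{alg:ldd-fast} restarts the \emph{entire} execution whenever any of the $2L$ calls to \cref{lem:cut-light} returns ``fail'', and this happens with probability up to $2L\delta$, where $\delta = 1/\log^{10} m$. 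That quantity is \emph{not} $O(1/\poly(n))$, so it cannot be swept into the additive error term. After a restart, the edge $e$ can be cut afresh, which makes your inequality $\Pr(e \in S) \leq \max_\ell[\cdots] + O(1/\poly(n))$ incorrect as stated: the true recurrence has an extra $p(m)\cdot 2L\delta$ term on the right-hand side. The paper handles this by rearranging to a multiplicative factor $(1 + 4L\delta)$ per recursion level, unrolling to $(1 + 4L\delta)^{\log_{4/3}(m)}$, and observing that this is $\exp(O(\log m \cdot \log\log m / \log^{10} m)) = 1 + o(1)$. Your proof needs this (or an equivalent) argument to close; without it the claimed bound does not follow from the steps you wrote. (A second, more cosmetic, issue: you define $q^\pm_\ell = \Pr(x \notin R^\pm_\ell)$ and assert $\sum_{\ell,\pm} q^\pm_\ell \leq 1$, but those events are nested rather than disjoint---once $x$ is removed from $G$ it is vacuously outside all later $R^\pm_{\ell'}$. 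What you want are the probabilities that $x$ \emph{first} leaves $R$ at slot $(\ell,\pm)$; with that reading the decomposition is fine, and it matches the paper's observation that $e$ ``becomes relevant only in the first iteration'' where $x$ is removed.)
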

\begin{proof}
Let $e = (x, y) \in E$ be an arbitrary edge. The algorithm certainly cuts $e$ if it has weight at least $\frac{D}{4L}$, but in this case the right-hand side exceeds $1$. Otherwise, it only cuts edges via \cref{lem:cut-light} or via recursive calls. Since \cref{lem:cut-light} never cuts edges whose source lies in the returned set~$R$, each edge becomes relevant only in the first iteration $\ell$ when $x \in R_\ell^+$, or~$x \not\in R_\ell^+$ and~\makebox{$x \in R_\ell^-$}. Let us focus on the former case; the latter is analogous. Conditioning on the event that $x \in R_\ell^+$, the edge~$e$ can be cut in three ways:
\begin{enumerate}
    \item $e$ is cut by the call to \cref{lem:cut-light} (i.e., $e \in S^+_\ell$),
    \item $e$ is cut in the recursive call, or
    \item the algorithm fails and cuts $e$ during the restart.
\end{enumerate}
In particular, note that the edge cannot be cut in later iterations $\ell' < \ell$, as then we have already removed $x$ from the current graph $G$.

The probability of the first event is at most as follows, using \cref{lem:cut-light}~(iii):
\begin{align*}
    \Pr(e \in S^+_\ell \mid x \not\in R^+_\ell)
    &\leq \frac{w_e \ln(2s_{\ell-1} / \delta)}{r_\ell - r_{\ell-1}} \\
    &\leq \frac{w_e \ln(2^{2^{L - \ell + 1}} \cdot (\log m)^{10})}{\frac{D}{2^{L-\ell+3}} + \frac{D}{4L}} \\
    &\leq \frac{w_e}{D} \cdot \frac{2^{L - \ell + 1} + 10L}{\max\set*{\frac{1}{2^{L-\ell+3}}, \frac{1}{4L}}} \\
    &\leq \frac{w_e}{D} \cdot (2^{L-\ell+3} + 10L) \cdot \min\set*{2^{L-\ell+3}, 10L} \\
    &\leq \frac{w_e}{D} \cdot 2^{L-\ell} \cdot 160L.
\end{align*}
In the last step we used that $(a + b) \min\set{a, b} \leq 2 \max\set{a, b} \min\set{a, b} = 2ab$.

To analyze the events 2 and 3, let us write $p(m) = \Pr(e \in S)$ (where $m$ is the number of edges in the input). Note that each recursive call in the $\ell$-th iteration is on a graph with at most~\smash{$\frac{3}{2} \cdot \frac{m}{s_\ell} \leq \frac32 \cdot \frac{m}{2^{2^{L-\ell}}}$} edges by \cref{lem:cut-light}~(i). Therefore, the probability of event 2 is at most~\smash{$p(\frac32 \cdot \frac{m}{2^{2^{L-\ell}}})$}.

Next, focus on event 3. Each call to \cref{lem:cut-light} returns ``fail'' with probability at most~$\delta$, and by a union bound over the at most $2L$ calls the algorithm restarts with probability at most $2L\delta$. If the algorithm restarts, then clearly its randomness is independent of the previous iterations and thus it cuts $e$ with probability $p(m)$. Hence, the probability of the event 3 is at most $p(m) \cdot 2L\delta$.

By a union bound over these three probabilities we obtain that
\begin{flalign*}
    p(m) &\leq \max_\ell \parens*{\frac{w_e}{D} \cdot 2^{L-\ell} \cdot 160L + p\parens*{\frac32 \cdot \frac{m}{2^{2^{L-\ell}}}} + p(m) \cdot 2L \delta},
\end{flalign*}
or equivalently,
\begin{flalign*}
    p(m) &\leq \frac{1}{1 - 2L\delta} \cdot \max_\ell \parens*{\frac{w_e}{D} \cdot 2^{L-\ell} \cdot 160L + p\parens*{\frac32 \cdot \frac{m}{2^{2^{L-\ell}}}}} \\
    &\leq \parens*{1 + 4L\delta} \cdot \max_\ell \parens*{\frac{w_e}{D} \cdot 2^{L-\ell} \cdot 160L + p\parens*{\frac32 \cdot \frac{m}{2^{2^{L-\ell}}}}}.
\end{flalign*}
This recurrence solves to
\begin{flalign*}
    p(m) &\leq (1 + 4L\delta)^{\log_{4/3}(m)} \cdot \frac{w_e}{D} \cdot 320L \cdot \log m,
\end{flalign*}
as can easily be verified as follows; here we will bound~\smash{$\frac32 \cdot \frac{m}{2^{2^{L-\ell}}}$} on the one hand by $\leq \frac34 m$ and on the other hand by~\smash{$\leq \frac{m}{2^{2^{L-\ell-1}}}$}:
\begin{flalign*}
    p(m) &\leq \parens*{1 + 4L\delta} \cdot \max_\ell \parens*{\frac{w_e}{D} \cdot 2^{L-\ell} \cdot 160L + p\parens*{\frac32 \cdot \frac{m}{2^{2^{L-\ell}}}}} \\
    &\leq \parens*{1 + 4L\delta} \cdot \max_\ell \parens*{\frac{w_e}{D} \cdot 2^{L-\ell} \cdot 160L + (1 + 4L\delta)^{\log_{4/3}(m)-1} \cdot \frac{w_e}{D} \cdot 320L \cdot \log \frac{m}{2^{2^{L-\ell-1}}}} \\
    &\leq \parens*{1 + 4L\delta}^{\log_{4/3}(m)} \cdot \max_\ell \parens*{\frac{w_e}{D} \cdot 2^{L-\ell} \cdot 160L + \frac{w_e}{D} \cdot 320L \cdot \parens*{\log m - 2^{L-\ell-1}}} \vphantom{\parens*{\frac{m}{2^{2^{L-\ell-1}}}}} \\
    &= \parens*{1 + 4L\delta}^{\log_{4/3}(m)} \cdot \frac{w_e}{D} \cdot 320L \cdot \log m. \vphantom{\parens*{\frac{m}{2^{2^{L-\ell-1}}}}}
\end{flalign*}
Finally, we plug in the chosen values for $L \leq \Order(\log\log m)$ and $\delta = (\log m)^{-10}$ to obtain that
\begin{align*}
    p(m) &\leq \parens*{1 + \frac{\Order(\log\log m)}{(\log m)^{10}}}^{\log_{4/3}(m)} \cdot \Order\parens*{\frac{w_e}{D} \cdot \log m \log\log m} \\
    &\leq \exp\parens*{\frac{\Order(\log\log m)}{(\log m)^{10}} \cdot \log_{4/3}(m)} \cdot \Order\parens*{\frac{w_e}{D} \cdot \log m \log\log m} \\
    &\leq \Order\parens*{\frac{w_e}{D} \cdot \log m \log\log m},
\end{align*}
as claimed. A final detail is that all previous bounds condition on the high-probability event that none of the calls to \cref{lem:cut-light} returns an invalid answer (but not ``fail''). To take this into account, the edge cutting probability increases additively by $\frac{1}{\poly(n)}$ (for an arbitrarily large polynomial).
\end{proof}

\begin{lemma}[Running Time of \cref{alg:ldd-fast}] \label{lem:ldd-fast-time}
The expected running time is $\Order(m \log^5 n \log\log n)$.
\end{lemma}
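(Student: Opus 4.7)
The plan is to bound the expected running time by a recurrence of the form $T(m) \leq W(m) + \sum_i T(m_i)$, where $W(m)$ is the expected per-invocation work of \cref{alg:ldd-fast} (counting restarts but excluding recursive calls) and the $m_i$ are the sizes of the recursive subproblems produced in steps~2.2 and~2.5.

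First I would bound $W(m)$. A single pass through \cref{alg:ldd-fast} spends $O(m)$ time in step~1, makes $2L = O(\log\log n)$ calls to \cref{lem:cut-light} each taking $O(m \log^4 n)$ time by that lemma, and runs an SCC computation per iteration in $O(m)$ time. Summed, one pass takes $O(m \log^4 n \log\log n)$ time. By a union bound over the $2L$ calls, a pass returns ``fail'' with probability at most $2L\delta = O(\log\log n / \log^{10} n) = o(1)$, so the expected number of passes per invocation is $1/(1 - 2L\delta) = 1 + o(1)$. Hence $W(m) = O(m \log^4 n \log\log n)$.

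Next I would set up and solve the recurrence. By \cref{lem:cut-light}~(i), each SCC we recurse on in iteration $\ell$ has at most $\tfrac{3m}{2 s_\ell}$ edges; the loosest bound, attained at $\ell = L$ with $s_L = 2$, is $\tfrac{3m}{4}$. The subproblems are disjoint across iterations (once recursed on, an SCC's edges are removed from $G$ in step~2.3 or~2.6) and within an iteration (distinct SCCs are edge-disjoint), so $\sum_i m_i \leq m$. With maximum subproblem size at most $\tfrac{3m}{4}$ and $W$ linear in $m$, the recurrence solves inductively to $T(m) \leq d \cdot W(m)$, where $d = O(\log_{4/3} m) = O(\log n)$ is the recursion depth. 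This yields $T(m) = O(m \log^5 n \log\log n)$, matching the claim.

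The main delicate point I expect is ensuring that restart overhead does not compound across the recursion tree. Because the restart in step~3 reruns only the current invocation (not its ancestors) and the restart probability is a uniform $o(1)$ regardless of subproblem size, the $(1+o(1))$ factor is absorbed level-by-level into the constant in $W(m)$ and never accumulates with depth. A secondary check is that the edges consumed across recursive calls are disjoint, which follows directly from the algorithm removing nodes via $G \gets G[R^\pm_\ell]$ after each iteration.
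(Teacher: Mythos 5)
Your proof is correct and follows essentially the same route as the paper: per-pass work $O(m\log^4 n\log\log n)$ from the $2L$ calls to \cref{lem:cut-light}, a recurrence with $\sum_i m_i\le m$ and $m_i\le\frac34 m$ giving depth $O(\log m)$, and a restart probability of $2L\delta$ that contributes only a $1+o(1)$ factor. One small imprecision: a restart reruns the current invocation \emph{including its recursive calls} (the relevant concern is descendants, not ancestors), so the $(1+2L\delta)$ factor does compound to $(1+2L\delta)^{O(\log m)}$ across the recursion --- but since $2L\delta\cdot\log m=o(1)$ this is still $1+o(1)$, exactly as the paper's recurrence $T(m)=\sum_i T(m_i)+2L\delta\,T(m)+O(m\log^4 n\log\log n)$ handles it.
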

\begin{proof}
Ignoring the cost of recursive calls, one execution of \cref{alg:ldd-fast} involves $\Order(L)$ calls to \cref{lem:cut-light} each of which runs in time $\Order(m \log^4 n)$. (The additional computations such as computing strongly connected components runs in linear time.) To take the recursive calls and restarts into account, let~$T(m)$ denote the running time given $m$ edges. We recurse on subgraphs with, say,~$m_1, \dots, m_r$ edges where $m_i \leq \frac34 m$ and $\sum_{i=1}^r m_i \leq m$. Moreover, the probability that the algorithm restarts is at most $2L \delta \leq (\log m)^{-9}$. Thus, we can bound
\begin{equation*}
    T(m) = \sum_{i=1}^r T(m_i) + 2L\delta T(m) + \Order(m \log^4 n \log\log n),
\end{equation*}
which, similarly to the analysis in \cref{lem:ldd-fast-prob}, can be solved to $T(m) \leq \Order(m \log^5 n \log\log n)$.
\end{proof}

We remark that we have not attempted to optimize the log-factors in the running time here and it is likely that the overhead can be reduced. An obvious improvement to shave one log-factor from the running time is to employ a faster priority queue in Dijkstra's algorithm, e.g.\ as developed by Thorup~\cite{Thorup03}.
\bibliographystyle{plainurl}
\bibliography{ref}

\end{document}